\documentclass[11pt]{article}
\usepackage{amsmath}
\usepackage{amsfonts}
\usepackage{amssymb}
\usepackage{amsthm}
\usepackage{hyperref}
\usepackage{enumitem}
\usepackage{xspace}

\setlength{\unitlength}{1mm}

\newlength{\dinwidth}
\newlength{\dinmargin}
\setlength{\dinwidth}{21.0cm}
\setlength{\textwidth}{15.3cm}  
\setlength{\textheight}{23.4cm} 
\setlength{\dinmargin}{\dinwidth}
\addtolength{\dinmargin}{-\textwidth}
\setlength{\dinmargin}{0.5\dinmargin}
\setlength{\oddsidemargin}{-1.0in}
\addtolength{\oddsidemargin}{\dinmargin}
\setlength{\evensidemargin}{\oddsidemargin}
\setlength{\marginparwidth}{0.9\dinmargin}
\setlength{\marginparsep}{8pt}
\setlength{\marginparpush}{5pt}
\setlength{\columnseprule}{0mm}
\setlength{\columnsep}{7mm}
\setlength{\topmargin}{-0.5in}
\setlength{\headheight}{30pt}
\setlength{\headsep}{10pt} 
\setlength{\footskip}{20pt}

\newcommand{\bmco}{B}
\newcommand{\ha}{\hat a}
\newcommand{\tOme}{\widehat{\Ome}}
\newcommand{\tvac}{\langle 0 \vert}
\newcommand{\Span}{\mathrm{Span}}
\newcommand{\tOmeR}{\tOme_{\mcR}}
\newcommand{\OmeR}{\Ome_{\mcR}}

\renewcommand{\Im}{\mathrm{Im\,}}
\newcommand{\Rs}{R}
\newcommand{\haDe}{\hat\De}
\newcommand{\haS}{\hat S}
\newcommand{\hah}{\hat h}
\newcommand{\haPsi}{\hat\Psi}
\newcommand{\cGaz}{\cGa^{(0)}}
\newcommand{\cGao}{\cGa^{(1)} }
\newcommand{\mcD}{\mathcal D}
\newcommand{\mcG}{\mathcal G}
\newcommand{\tP}{\tilde P}
\newcommand{\epsz}{\eps_0}
\newcommand{\mcoz}{\mco_0}
\newcommand{\ta}{\ti a}
\newcommand{\Nz}{N_0}
\newcommand{\Int}{\mathcal I}

\newcommand{\cdGa}{\mathrm{d}\check{\Ga}}
\newcommand{\Imm}{\mathrm{Im}}
\newcommand{\diag}{\mathrm{diag}}
\newcommand{\Omeg}{\Omega}
\newcommand{\bD}{\mathbf{D}}

\newcommand{\Ome}{\Omega}

\newcommand{\mcR}{\mathcal R}
\newcommand{\mcJ}{\mathcal J}
\newcommand{\mcJz}{\mcJ_{0}}
\newcommand{\cm}{c_{\mathrm{m}} }
\newcommand{\fin}{\mathrm{fin}}
\newcommand{\mcC}{\mathcal C}
\newcommand{\sic}{\mathrm{sc}}
\newcommand{\ac}{\mathrm{ac}}
\newcommand{\pp}{\mathrm{pp}}

\newcommand{\mcE}{\mathcal E}
\newcommand{\mcT}{\mathcal T}

\newcommand{\mcXo}{\mcX_1}
\newcommand{\mcS}{\mathcal S}
\newcommand{\iso}{\mathrm{iso} }
\newcommand{\mrO}{\mathrm O}
\newcommand{\LLP}{\mathrm{LLP}}
\newcommand{\mcA}{\mathcal A}
\newcommand{\mcX}{\mathcal X}
\renewcommand{\i}{\mathrm i}

\newcommand{\R}{t}
\newcommand{\Siz}{\Si}
\newcommand{\K}{\mathcal K}

\newcommand{\qt}{q^t}
\newcommand{\jt}{j^t}
\newcommand{\bnd}{\mathrm{bnd}}
\newcommand{\Hext}{H^{\ex}}

\newcommand{\qi}{j_{\infty}} 
\newcommand{\qz}{j_{0}}  
\newcommand{\qd}{q}

\newcommand{\slim}{\mathrm{s-}\lim}
\newcommand{\jj}{j}

\newcommand{\Exc}{\mathrm{Exc}}
\newcommand{\dcGa}{\mathrm{d}\cGa }
\newcommand{\ex}{\mathrm{ex}}
\newcommand{\dGa}{\mathrm{d}\Ga }
\newcommand{\g}{}

\newcommand{\wt}{\widetilde}

\newcommand{\su}{\substack}

\newcommand{\ti}{\tilde}

\newcommand{\un}{\underline}
\newcommand{\cGa}{\check{\Ga}}

\newcommand{\vac}{\vert 0\ran}
\newcommand{\Om}{\vac}

\newcommand{\Si}{\Sigma_0}

\newcommand{\ad}{\mathrm{ad}}

\newcommand{\be}{\beta}

\newcommand{\pa}{\partial}

\newcommand{\Ran}{\mathrm{Ran}}
\newcommand{\ov}{\overline}

\newcommand{\mfh}{\mathfrak{h}}

\newcommand{\uk}{\underline{k}}

\newcommand{\eps}{\varepsilon}
\newcommand{\de}{\delta}

\newcommand{\De}{\Delta}
\newcommand{\e}{\mathrm{e}}


\newcommand{\pho}{\mathrm{ph}}
\newcommand{\Fock}{\mathcal{F}}

\newcommand{\si}{\sigma}

\newcommand{\h}{\fr{1}{2}}
\newcommand{\nat}{\mathbb{N}}
\newcommand{\hil}{\mathcal{H}}
\newcommand{\om}{\omega}

\newcommand{\mco}{\mathcal{O}}

\newcommand{\supp}{\mathrm{supp}}
\newcommand{\fr}[2]{\frac{#1}{#2}}
\newcommand{\al}{\alpha}
\newcommand{\real}{\mathbb{R}}
\newcommand{\complex}{\mathbb{C}}
\newcommand{\la}{\lambda}
\newcommand{\non}{\nonumber}
\newcommand{\Ga}{\Gamma}

\newcommand{\lan}{\langle}
\newcommand{\ran}{\rangle}


\theoremstyle{plain}

\newtheorem{theoreme}{Theorem } [section]
\newtheorem{proposition}[theoreme]{Proposition}
\newtheorem{lemma}[theoreme]{Lemma}
\newtheorem{corollary}[theoreme]{Corollary}

\theoremstyle{definition}
\newtheorem{definition}[theoreme]{Definition}
\newtheorem{criterion}[theoreme]{Criterion}
\newtheorem{condition}{Condition}

\newtheorem{remark}[theoreme]{Remark}
\newtheorem{example}[theoreme]{Example}

\newcommand{\beco}{\begin{condition}}
\newcommand{\eeco}{\end{condition}}
\newcommand{\beq}{\begin{equation}}
\newcommand{\eeq}{\end{equation}}
\newcommand{\beqa}{\begin{eqnarray}}
\newcommand{\eeqa}{\end{eqnarray}}
\newcommand{\ben}{\begin{arabicenumerate}}
\newcommand{\een}{\end{arabicenumerate}}
\newcommand{\bex}{\begin{example}}
\newcommand{\eex}{\end{example}}
\newcommand{\ber}{\begin{remark}}
\newcommand{\eer}{\end{remark}}
\newcommand{\bec}{\begin{corollary}}
\newcommand{\eec}{\end{corollary}}
\newcommand{\bep}{\begin{proposition}}
\newcommand{\eep}{\end{proposition}}
\newcommand{\becr}{\begin{criterion}}
\newcommand{\eecr}{\end{criterion}}

\newcommand{\localized}{localized\xspace}
\newcommand{\Localized}{Localized\xspace}

\newcommand{\set}[2]{\{#1\,|\, #2\}}
\newcommand{\bigset}[2]{\bigl\{#1\,\big|\, #2\bigr\}}
\newcommand{\Bigset}[2]{\Bigl\{#1\,\Big|\, #2\Bigr\}}
\newcommand{\sico}{\mathrm{sc}}

\def\bel{\begin{lemma}}
\def\eel{\end{lemma}}
\def\bet{\begin{theoreme}}
\def\eet{\end{theoreme}}
\def\bed{\begin{definition}}
\def\eed{\end{definition}}

\begin{document}

\bibliographystyle{amsplain}

\title{The translation invariant massive Nelson model: III. Asymptotic completeness below the two-boson threshold}

\author{
Wojciech Dybalski \\
Zentrum Mathematik, TU M\"unchen \\
       and \\ 
Department of Mathematics, University of Paris-Sud\\
email: dybalski@ma.tum.de \\
\\ and \\ \\
Jacob Schach M{\o}ller \\ Department of Mathematics,  Aarhus University\\
email: jacob@imf.au.dk}

\maketitle

\begin{abstract}
We show asymptotic completeness of two-body scattering 
for a class of translation invariant models describing a single quantum
particle (the electron) linearly coupled to a massive
scalar field (bosons).  Our proof is based on a recently established
Mourre estimate for these models. In contrast to previous approaches, it
requires no number cutoff, no restriction on the particle-field coupling strength, 
and no restriction on the magnitude of total momentum.  Energy, however, is restricted 
by the two-boson threshold,  admitting only scattering of a dressed electron and a single
asymptotic boson.  The class of models we consider include the UV-cutoff Nelson and polaron models. 
\end{abstract}

\setcounter{page}{0}
\thispagestyle{empty}
\newpage

\tableofcontents
\section{Introduction}

The last two decades witnessed substantial progress in our understanding of asymptotic completeness (AC) in
 Quantum Field Theory (QFT). 
On the relativistic side first examples of massive  and massless 
theories with complete particle interpretation have been constructed in \cite{Le08, DT10}. On the side of
non-relativistic QFT,  far-reaching insights 
have been obtained by application of methods from many-body quantum mechanics \cite{En78,D,Gr,SiSo, DGBook,GeLaBook}.
AC of systems describing a \emph{confined} quantum-mechanical particle (the electron) interacting with second-quantized Bose fields
is well under control in the case of massive field quanta (bosons) \cite{HuSp1,HuSp2,DGe1,DGe2,FGSch2, Am} and there is 
rapid progress on the massless side \cite{Sp97,Ge1, DK11, FS12, FS12b}.
However,  the case of \emph{translation invariant} quantum-mechanical systems coupled to quantum fields
is far from being fully understood, even if the bosons are massive. The main difficulty here is
the phenomenon of the electron mass renormalization, familiar from relativistic QFT. In the existing works 
this difficulty is overcome only  at a cost of  technical assumptions on the coupling strength,
total momentum of the system and dispersion relations of the  electron and bosons \cite{FGSch3,FGSch4} 
or by means of a  number cutoff \cite{GMR}. In the present paper we 
show that all these restrictions can be eliminated, at least at the level of two-body scattering:
We show AC below the two-boson threshold in a class of translation invariant massive QFT under very general assumptions, 
including the massive Nelson model \cite{Ne} and the Fr\"ohlich polaron model \cite{FrH} with 
physical (infrared-singular) coupling function. We stress that in the case of the polaron model, with constant
dispersion relation of bosons, the physical picture of propagating particles is not self-evident, not to speak of AC.
 It comes to light only after taking the electron mass renormalization properly into account and extracting the effective dynamics of
the electron-boson system. This is achieved for the first time in  the present work.

We consider a class of models describing a free quantum-mechanical particle, e.g. a non-relativistic  electron, linearly coupled
to a UV-cutoff massive scalar field, e.g. longitudinal optical  phonons or massive relativistic bosons. 
The isolated energy-momentum spectrum, i.e. the region below the one-boson threshold, 
is under our assumptions an analytic variety. It consists  of the ground state mass shell, which is non-degenerate for all total momenta,
and possibly excited isolated mass shells that may cross each other. 
To each mass shell one can associate a distinct dressed electron species. They have different dispersion relations, hence different masses, and some may even have group velocity in a direction opposite to momentum (non-increasing dispersion).
 Incoming and outgoing states are of the  form  $\Psi\otimes \eta$, where $\Psi$ is a dressed electron state
 (or superpositions thereof) and  $\eta$ is a vector in Fock space describing a collection of free asymptotic bosons.
We note in passing that  during a scattering process the outgoing dressed electron may differ from the incoming  dressed electron
i.e. the dressed electron species may not be conserved by collisions with  bosons.
The central objects of our investigation are the (conventional) wave operators, defined in (\ref{canonical-wave-operator}) below, which map incoming/outgoing states to states in the physical Hilbert space. In particular $\Psi\otimes \vac$, where $\vac$ is the vacuum vector, is mapped into
the dressed electron state $\Psi$. For general $\eta$, vectors from the ranges of the wave operators describe scattering states of dressed electrons and bosons. As usual, AC is defined as unitarity of the wave operators, which means
that all states of physical interest belong to their ranges.

Existence of the wave operators is known for the Nelson model \cite{DGe1}, but not for the polaron model. 
In the present paper we construct the wave operators and prove AC under rather natural assumptions which cover both the Nelson and polaron case: We employ no number cutoff, hence a dressed electron consists of a bare electron accompanied by an infinite virtual boson cloud.  There are no restrictions on the electron-field coupling strength and no limitations  on the magnitude of total momentum.  The energy is only restricted by the  (total momentum dependent) two-boson threshold which defines the largest spectral subspace on which only single-boson scattering processes take place.  Above this threshold, we are not -- yet -- able to handle the plethora of scattering channels available.

To explain the novel strategy of our proof  of AC, we recall several standard concepts, which will be defined precisely in Section~\ref{Preliminaries} and Appendix~\ref{Fock-combinatorics}.
We use the $\Gamma$-functor notation of Segal for constructions of spaces and operators
in the context of second quantization.
The   Hilbert spaces of incoming and outgoing  configurations are given by $\hil_{\pm}:=\hil_{\bnd}\otimes \Fock$, where $\hil_{\bnd}$ contains
the dressed electron   states and
$\Fock$ is the bosonic Fock space over the single-boson space $\mfh$. The extended Hamiltonian and momentum operators are defined as
\begin{equation}
 H^{\ex}:=H\otimes 1+1\otimes \dGa(\om)
\qquad \textup{and} \qquad 
 P^{\ex}:=P\otimes 1+1\otimes \dGa(k),
\end{equation}
where $\om$ is the dispersion relation of the bosons and $(P,H)$ 
denote the total energy-momentum operators of our system. We recall that $H$ acts on $\hil_{\bnd}$ as a direct sum of  multiplication 
operators, one for each dressed electron species.
For any pair of bounded operators $q_0,q_{\infty}$ on $\mfh$ we define the map  $\cGa(q_0,q_{\infty})^*$, 
from a domain in $\hil_{\pm}$ to $\hil$, by the relation:
\beqa
\cGa(q_0,q_{\infty})^*(\Psi\otimes a^*(h_1)\ldots a^*(h_n)\vac)=a^*(q_{\infty}h_1)\ldots a^*(q_{\infty}h_n)\Ga(q_0)\Psi. \label{dGa-check-intro}
\eeqa
The goal of our investigation is to establish the existence and unitarity of the wave operators
\beqa
\Ome^{\pm}=\slim_{t\to\pm\infty} \e^{\i tH}\cGa(1,1)^*\e^{-\i tH^{\ex}}, \label{canonical-wave-operator}
\eeqa
below the two-boson threshold (in the joint spectrum of $(H^{\ex},P^{\ex})$). 
For reasons which will become clear below, we
divide this region of the spectrum into small subsets $\mco\subset \real^\nu\times\real$.  For each $\mco$ we construct a \localized right inverse
of $\Ome^\pm$ on the corresponding spectral subspace of $(P,H)$.  As noted in \cite{DGe1}, a natural candidate has the form
\beqa
W^{\pm*}_{\mco}=\slim_{t\to\pm\infty} \e^{\i tH^{\ex}}\cGa(q_0^t,q_{\infty}^t)\e^{-\i tH} \label{inverse-wave-operator}
\eeqa
where $q_0^t$, $q_{\infty}^t$ are some time-dependent families of operators s.t.  $q_0^t+q_{\infty}^t=1$ so that  one can exploit the relation $\cGa(1,1)^*\cGa(q_0^t,q_{\infty}^t)=1$. One important difference
between our approach and previous work on asymptotic completeness in QFT consists in the construction of the operators $q_0^t,q_{\infty}^t$.

Before we explain this construction, we recall that the Hamiltonian $H$ has a direct integral decomposition into fiber Hamiltonians
$H(\xi)$ at fixed momentum $\xi$. 
As shown in \cite{MR12}, and stated precisely in  Theorem~\ref{Mourre-estimate-theorem} below, 
if $\mco$ is sufficiently small (and localized outside of some sets of measure zero) we can choose
$(\xi_0,\la_0)\in \mco$, a neighbourhood $\mcJz$ of $\lambda_0$, and $\cm>0$ s.t.
\begin{align}
&\mathbf{1}_{\mcJz} (H(\xi)) \i [H(\xi),  \dGa(a_{\xi_0})] \mathbf{1}_{\mcJz}(H(\xi))
\geq \cm \mathbf{1}_{\mcJz}(H(\xi)), \label{intro-positive-commutator}\\
& \mathbf{1}_{\mcJz}(H^{(1)}(\xi))\i [H^{(1)}(\xi),  1\otimes a_{\xi_0}]\mathbf{1}_{\mcJz}(H^{(1)}(\xi))
\geq \cm \mathbf{1}_{\mcJz}(H^{(1)}(\xi)), \label{intro-Mourre-est-one}
\end{align}
where $H^{(1)}:=H\otimes 1+1\otimes\om$ acts on $\hil^{(1)}=\hil\otimes\mfh$.
The estimates hold true for $\xi$ belonging to a small neighbourhood of $\xi_0$,
such that the Cartesian product of this neighbourhood with $\mcJz$
 contains  $\mco$. The operator $a_{\xi_0}$ has the form
\beq\label{Intro-a}
a_{\xi_0} = \h\bigl\{ v_{\xi_0}\cdot \i\nabla_k + \i\nabla_k\cdot v_{\xi_0}\bigr\},
\eeq
where  $\i\nabla_k$ is  the boson position operator and $v_{\xi_0}$ is a vector field in momentum space, which carries 
information about the dispersion relations of incoming/outgoing dressed electrons present in the
energy-momentum region $\mco$.  Now we define  $\ti a_{\xi_0}:=\h\bigl\{ v_{\xi_0}\cdot z + z\cdot v_{\xi_0}\bigr\}$,
where $z=\i \nabla_k-y$ is the relative distance between the electron and the boson, and set 
\begin{align}
q_0^t:=q_0(\ti a_{\xi_0}/t) \qquad \textup{and}\qquad  q_\infty^t:=q_\infty(\ti a_{\xi_0}/t), \label{q-definitions}
\end{align}
where $q_0$, $q_{\infty}$ are smooth approximate characteristic functions  of $(-\infty, c_0]$, $[c_{0},\infty)$, 
$c_0>0$ is smaller than $\cm$, and $q_0+q_{\infty}=1$. With such a choice of   $q_0^t$, $q_{\infty}^t$,
closely tied to  Mourre theory, strong convergence in (\ref{inverse-wave-operator}) can be established using  
the positive commutator  estimates~(\ref{intro-positive-commutator}),(\ref{intro-Mourre-est-one}).
We note that this convergence result holds only in  the spectral subspace of $\mco$. Indeed, only in this subspace
estimate (\ref{intro-positive-commutator}) holds  with the operator $v_{\xi_0}$, which entered into 
the definitions~(\ref{q-definitions}).  The fact that    $W^{\pm*}_{\mco}$ has to be defined for each region $\mco$ separately 
is, however, not an obstacle, since we use  this operator only as a tool to show the existence and unitarity of the
wave operators   $\Ome^\pm$, which do not contain any information about the (non-canonical) operators $v_{\xi_0}$. 

A large part of our paper is devoted to the proof of strong convergence of the localized inverse of the wave
operator  in (\ref{inverse-wave-operator}) with the help of the Mourre estimates. An important intermediate step here is a
novel minimal-velocity propagation estimate (See Proposition~\ref{main-propagation-estimate} below). As our proof
of this propagation estimate differs significantly from the arguments available in the literature, let us state here its special
case and outline the proof:
 Let $j_0$, $j_{\infty}$ be smooth approximate characteristic functions  of $(-\eps,\eps)$,  $\real\backslash (-\eps,\eps)$ s.t. 
$j_0^2+j_{\infty}^2=1$ and let  $j^t:=(j_0(a_{\xi_0}/t), j_{\infty}(a_{\xi_0}/t))$. 
Then there exists $c>0$ such that for  all $\Psi\in\Fock$:
\beqa
\int_1^{\infty}dt\, \fr{1}{t}\lan\Psi_t,\cGa(j^t)^*\chi^{(1)}(1\otimes q'(a_{\xi_0}/t) )\chi^{(1)}\cGa(j^t)\Psi_t\ran\leq c\|\Psi\|^2,
\label{multi-particle-prop-est-intro}
\eeqa
where  $q'$ is a smooth approximate characteristic function of $\Int:=[-R,-\eps]\cup [\eps,c_0]$, $\Psi_t:=\e^{-\i t H(\xi)  }\Psi$, $\chi^{(1)}:=\chi(H^{(1)}(\xi))$ and $\chi\in C_0^{\infty}(\real)$ is supported below the two-boson threshold.  
Proceeding to  the proof of  (\ref{multi-particle-prop-est-intro}),  let us consider a propagation observable $\Phi(t):=\chi\dGa(q^t)\chi$ where
 $\chi:=\chi(H(\xi))$, $q(\la):=\int_0^{\la} q'(s) ds$  and $q^t:=q( a_{\xi_0}/t)$. In the standard
proofs of propagation estimates in non-relativistic QFT \cite{DGe1,FGSch3} one computes to the leading order in $t$ the Heisenberg derivative
\beqa
\bD\Phi(t)=\pa_t \Phi(t)+\i[H(\xi),\Phi(t)]
\eeqa
making use of the concrete expression (\ref{fiber-Hamiltonian}) for the  Hamiltonian $H(\xi)$. In the presence of the electron mass renormalization
this strategy breaks down for large coupling strength, because it introduces into the analysis the bare dispersion relation $\Omeg$ of the electron,
appearing in~(\ref{fiber-Hamiltonian}). To extract the correct physical dynamics of the electron-boson system we proceed differently:
Making use of the fact that $\cGa(j^t)^*\cGa(j^t)=1$, we write
\beqa
\bD\Phi(t)=\cGa(j^t)^*\cGa(j^t) \chi\bD \dGa(q^t)\chi=\cGa(j^t)^*\chi^{(1)} \bD^{(1)} (1\otimes q^t)\chi^{(1)}\cGa(j^t)+O(t^{-2}),\label{intro-proof}
\eeqa
where $\bD^{(1)}$ is the Heisenberg derivative w.r.t. the Hamiltonian $H^{(1)}(\xi)$ and  $O(t^{-2})$ denotes a term bounded in norm by $ct^{-2}$.  The last step in (\ref{intro-proof}), justified  in Proposition~\ref{a-dGa-Heisenberg}, consists in commuting $\cGa(j^t)$ to the right and showing that the
resulting rest-terms are of order $O(t^{-2})$. Here we only indicate how to exchange $\cGa(j^t)$ with $\chi$, since it contains the essence of the 
argument: First, we make use of the fact that $\cGa(j^t) \chi(H(\xi))=\chi(H^{\ex}(\xi))\cGa(j^t)+O(t^{-1})$ (Lemma~\ref{Helffer-dGa}). Next, we  
exploit that $\chi$ is localized below the two-boson threshold  to write $\chi(H^{\ex}(\xi))=\chi(H(\xi))\oplus \chi(H^{(1)}(\xi))$ 
(Lemma~\ref{expansion-truncation}).
Finally, we show that the first term in this direct sum gives rise to  expressions of order $O(t^{-2})$ if $j_0$ is supported outside of $\Int$.
Given expression (\ref{intro-proof}), we estimate the commutator $\i[H^{(1)}(\xi), (1\otimes q^t)]$ from below, using the 
Mourre estimate~(\ref{intro-Mourre-est-one}) and, by integrating both sides of the resulting expression along the time evolution, we obtain 
the propagation estimate (\ref{multi-particle-prop-est-intro}).

It is clear from the above discussion that our proof of AC is very different from the standard arguments used in the absence
of the electron mass renormalization \cite{Gr,DGe1} or in the weak coupling regime  \cite{FGSch3}. 
In particular, our argument does not rely on the phase-space propagation estimate, which is problematic in the presence of level
crossings in the isolated spectrum.  By our methods  we can handle a large class of electron and boson dispersion relations and, due to the fact that  $v_{\xi_0}$ can be chosen to vanish for small momenta, we can  cover the infrared-singular physical coupling of the polaron model.
In addition, no smallness conditions on the coupling strength are involved.  Thus, similarly to the classical results
on asymptotic completeness in quantum mechanics \cite{ D, Gr, SiSo},  our result applies to a very large  class of models  
which contains  experimentally realizable physical systems (e.g. the polaron). We are convinced that our analysis provides a
solid fundation for future developments of scattering theory in QFT.

Going beyond the two-boson threshold for the models studied here will be a challenging task requiring more involved constructions of propagation observables, due to the more complicated channel structure. While we do have some ideas as to how to proceed, there are technical obstructions requiring new insights to overcome. Another promising direction of future research  
concerns the spectral and scattering theory of many-body dispersive systems.
The methods developed in this paper, combined  with those of \cite{MR12}, can be viewed from a broader perspective as a general strategy to deal with 
such systems. We hope -- in fact expect -- that one can study many body Schr{\"o}dinger operators, with relativistic kinetic energy, as well as spin-wave scattering, i.e. the magnon model, with the aid of the techniques developed here. See \cite{Ge98, GS97, Ya92}, where both of these long-standing open problems are discussed. Finally, we would like to point out that  collision theory of dispersive systems is an important intermediate step towards
the problem of asymptotic completeness in local relativistic QFT, as for example the $P(\phi)_2$ models. This observation has recently been exploited in \cite{DG12} to show the existence of certain asymptotic observables in these theories. Thus an  application of the methods of the present paper in the local relativistic setting is another promising -- and tractable --  research direction. We recall that partial results on asymptotic completeness in $P(\phi)_2$ models can be found in \cite{CD82,SZ76}.  For recent progress on relativistic scattering theory we refer  to \cite{Dy05,DT10,Le08}.

This paper is organized as follows: In Section~\ref{Preliminaries} we define the class of models under study,  summarize the
known facts concerning their spectrum, including Mourre theory,  and state the main results of this paper. In Section~\ref{Heisenberg} we
 derive convenient representations for the Heisenberg derivatives of certain propagation observables which are then combined with Mourre estimates 
in Section~\ref{propagation-estimates} to derive minimal velocity propagation estimates. These propagation estimates are the key input
to the proofs of existence of the relevant asymptotic observables in Section~\ref{time-convergence-proofs}, including the 
\localized inverses of the wave operators of the form~(\ref{inverse-wave-operator}). In Section~\ref{Geometric-WO} we establish
properties of these operators which are then used in Section~\ref{AC-section} to prove the existence and unitarity of the (conventional) 
wave operators~(\ref{canonical-wave-operator}). More technical steps of our investigation are postponed to appendices.

\vspace{0.5cm}

\noindent{\bf Acknowledgment:} This project started in collaboration with Morten Grud Rasmussen, 
who contributed to a proof of AC for the  polaron model with a short-range condition. 
This different proof,  which preceded the present argument, will be published in a separate paper by the present authors and Morten Grud Rasmussen.

The authors thank Jan Derezi\'nski, Christian G\'erard, and Herbert Spohn for useful discussions we had during the course of this work. We acknowledge financial support of the Danish Council for Independent Research grant no. 09-065927 "Mathematical Physics", and hospitality of the Hausdorff Research Institute for   Mathematics, Bonn. Moreover, W.D. is grateful for the support of the Lundbeck Foundation and the  German Research Foundation (DFG), the latter within the grant SP181/25--2 and stipend DY107/1--1.

\section{Preliminaries and Results} \label{Preliminaries}
\setcounter{equation}{0}
\subsection{Hamiltonian}
Let $\K=L^2(\real^{\nu}_y)$ be the Hilbert space of a quantum mechanical particle moving in $\real^{\nu}$,
whose position is denoted by $y$ and  momentum by $D_y:=-\i\nabla_y$. Let $\mfh=L^2(\real^{\nu}_k)$ be the Hilbert space of a 
single boson, whose dispersion relation will be denoted $\om(k)$. The Hilbert space for the Bose field is the Fock space
\beqa
\Fock=\Ga(\mathfrak h)=\bigoplus_{n=0}^{\infty}\Fock^{(n)},
\eeqa
 where $\Fock^{(n)}=\Ga^{(n)}(\mfh)=\mfh^{\otimes_s n}$ is the symmetric tensor product of the single-boson spaces and the vacuum vector will be denoted by $\vac$. The boson creation and annihilation
operators are denoted by $a^*(k)$, $a(k)$ and satisfy the canonical commutation relations $[a(k),a^*(k')]=\de(k-k')$ and $[a(k),a(k')]=[a^*(k),a^*(k')]=0$. 
The total energy and momentum operators of the bosons are given by
\begin{align}
H_{\pho}&:= \dGa(\om)=\int_{\real^{\nu}} dk\, \om(k)a^*(k)a(k),\\
P_{\pho}&:= \dGa(k)=\int_{\real^{\nu}} dk\, k a^*(k)a(k).
\end{align}
The Hilbert space of the system consisting of the electron and the bosons is $\hil=\K\otimes \Fock$. The dynamics is governed by the Hamiltonian
\beq
H=\Omeg(D_y)\otimes 1+1\otimes H_{\pho}+\g\phi(G_y), \label{The-model-hamiltonian}
\eeq
where the interaction term is given by
\beqa
\phi(G_y):=\int_{\real^{\nu}}dk\, \big(\e^{-\i ky} G(k) 1\otimes a^*(k)+\e^{\i ky}\overline{G(k)} 1\otimes a(k) \big). \label{interaction-definition}
\eeqa
Under the minimal conditions on $\Omeg$, $\om$ and $G$, specified below following \cite{MR12}, this Hamiltonian is essentially self-adjoint on 
$C_0^{\infty}(\real^{\nu})\otimes \mcC$, where  $\mcC:=\Ga_{\fin}(C_0^{\infty}(\real^{\nu}))$ is defined in Appendix~\ref{Fock-combinatorics}. 
\beco \bf(Minimal Conditions)\rm. \label{Condition-one}  There exists $s_{\Omeg}\in [0,2]$ and  $C>0$ s.t. the dispersion relation $\om$ and the coupling function $G$ satisfy:
\begin{enumerate}[label=\textbf{\textup{(MC\arabic*)}},ref=\textup{(MC\arabic*)},leftmargin = *]
\item\label{MC-G-decay} $\om\in C(\real^{\nu})$, $\Omeg\in C^{2}(\real^{\nu})$,  $\lan k \ran^{6}G\in L^2(\real^{\nu})$,  
where $\lan k \ran=\sqrt{k^2+1}$.
\item\label{MC-Boson-Mass} $m:=\inf_{k\in\real^{\nu}}\om(k)>0$.
\item \label{MC-three}$\forall k\in\real^{\nu}$ we have $\om(k)\leq C\lan k\ran$, $\Omeg(k)\geq C^{-1}\lan k\ran^{s_{\Omeg}}-C$.
\item $|\pa_{k}^{\al}\Omeg(k)|\leq C\lan k\ran^{s_{\Omeg}-|\al|}$, for all multiindices $\al$ with $0\leq |\al|\leq 2$.
\item\label{MC-Subadd} $\forall k_1,k_2\in \real^{\nu}$ we have $\om(k_1+k_2)<\om(k_1)+\om(k_2)$.
\item Either $\lim_{|k|\to \infty}\om(k)=\infty$ or: $\sup_{k\in\real^{\nu}}\om(k)<\infty$ and $\lim_{|k|\to\infty}\Omeg(k)=\infty$.
\end{enumerate}
We note that \ref{MC-G-decay} is stronger than in \cite{MR12}. 
\eeco
We recall that the Hamiltonian (\ref{The-model-hamiltonian}) commutes with the total momentum  operators given by
\beqa
P=D_y\otimes 1+1\otimes P_{\pho},
\eeqa
thus it has a fiber decomposition. More precisely, using the unitary transform of Lee-Low-Pines \cite{LLP}
\beqa
I_{\LLP}:=(F\otimes 1)\circ \Ga(\e^{\i k\cdot y}),
\eeqa
where $F$ is the Fourier transform in the electron position variable and $\Ga$ the second quantization functor (cf. Appendix~\ref{Fock-combinatorics}), we obtain
\beqa
I_{\LLP}H I_{\LLP}^*=\int^{\oplus}_{\real^{\nu}} d\xi\, H(\xi).
\eeqa
The fiber Hamiltonians have the form
\beq
H(\xi)=\Omeg(\xi-P_{\pho})+H_{\pho}+\g\phi(G), \label{fiber-Hamiltonian}
\eeq
where $\phi(G):=\phi(G_y)|_{y=0}$,  and are essentially self-adjoint on $\mcC$. 
The joint spectrum of the family of commuting self-adjoint  operators $(P,H)$ is given by 
\beqa
\Sigma=\bigset{ (\xi,\la)\in\real^{\nu+1} }{ \la\in \si(H(\xi))}.
\eeqa
It can be decomposed into the pure-point,  absolutely continuous and singular continuous parts
\beqa
\Sigma=\Sigma_{\pp}\cup\Sigma_{\ac}\cup\Sigma_{\sic}
\eeqa 
defined as $\Sigma_i=\{\, (\xi,\la)\in\real^{\nu}\times \real\,|\, \la\in \si_i (H(\xi))\,\}$, where $i\in\{\pp,\ac,\sic\}$.
We denote the bottom of the spectrum of the fiber Hamiltonians by 
\beq
\Siz(\xi):=\inf\,\si(H(\xi))
\eeq
and the bottom of the spectrum of the full operator by
$\Si:=\inf_{\xi\in\real^{\nu}}\,\Si(\xi)$. 
Moreover, we introduce
\beq
\Siz^{(n)}(\xi,\un k):=\Siz(\xi-\sum_{j=1}^n k_j)+\sum_{j=1}^n\om(k_j)
\eeq
and define the $n$-boson thresholds
\beq
\Siz^{(n)}(\xi):=\inf_{\un k\in \real^{n\nu}}\Siz(\xi,\un k).
\eeq
By the HVZ Theorem \cite{FrJ2,MAHP,MRMP,Sp2}, 
\beq
\si_{\mathrm{ess}}(H(\xi))=\bigl[\Siz^{(1)}(\xi),\infty \bigr). \label{sigma-essential}
\eeq
and below $\Siz^{(1)}(\xi)$ the spectrum consists of locally finitely many eigenvalues of finite multiplicity,
which can only accumulate at $\Siz^{(1)}(\xi)$.  Due to the subadditivity assumption \ref{MC-Boson-Mass} on $\om$, we
have
\beqa
\Sigma_0^{(n)}(\xi)\geq \Sigma_0^{(m)}(\xi)
\eeqa
for any $n>m$. The inequality is strict if $\lim_{|k|\to\infty}\om(k)=\infty$. If $M=\sup_{k\in\real^{\nu}}\om(k)<\infty$,
then the inequality is also strict if $2\liminf_{|k|\to\infty}\om(k)>M$, which is satisfied by the constant polaron relation \cite{MRMP}. 
In these cases the region $\mcE^{(1)}$, where
\begin{equation}
\begin{aligned}
\mcE^{(1)}&=  \bigset{(\xi,\la)\in \real^{\nu+1}}{\la\in  \mcE^{(1)}(\xi)},\\
\mcE^{(1)}(\xi)&= \bigset{\la\in \real}{ \Sigma_0^{(1)}(\xi)\leq \la< \Sigma_0^{(2)}(\xi)}, \label{mcE-one}
\end{aligned}
\end{equation}
is non-empty.

\subsection{Extended Hamiltonian}\label{Extended-Hamiltonian}

The formalism of extended Hilbert space, which we  present in this section and in Appendix~\ref{Extended-appendix}, was introduced in \cite{DGe1} and used later on in \cite{Am,DGe2,FGSch2,MAHP,MRMP} in the context of spectral and scattering theory. 
Let us define the extended Fock space and the extended physical Hilbert space as follows
\begin{align}
\Fock^{\ex}&= \Fock\otimes\Fock=\Fock\oplus \Bigl(\bigoplus_{\ell=1}^{\infty}\Fock\otimes \Fock^{(\ell)}\Bigr)
\simeq \Fock\oplus  \Bigl(\bigoplus_{\ell=1}^{\infty}L^2_{\mathrm{sym}}(\real^{\ell\nu};\Fock)     \Bigr)      ,\\
\hil^{\ex}&= \hil\otimes \Fock=\hil\oplus \Bigl(\bigoplus_{\ell=1}^{\infty}\hil\otimes \Fock^{(\ell)}\Bigr),
\end{align}
where we made use of the  identification $ \Fock\otimes \Fock^{(\ell)} \simeq L^2_{\mathrm{sym}}(\real^{\ell\nu};\Fock)$.
The extended Hamiltonian and extended total momentum operators are given by  
\begin{align}\label{Hex-Def-And-Decomp}
H^{\ex}&= H\otimes 1+1\otimes \dGa(\om)=H\oplus \Bigl( \bigoplus_{\ell=1}^{\infty} H^{(\ell)}   \Bigr),\\
P^{\ex}&= P\otimes 1+1\otimes \dGa(k)=P\oplus \Bigl( \bigoplus_{\ell=1}^{\infty} P^{(\ell)}   \Bigr).
\end{align}
Here
\begin{equation}\label{Hl-And-Pl}
H^{(\ell)} = H\otimes 1 + 1\otimes\dGa^{(\ell)}(\omega), \qquad P^{(\ell)} = P\otimes 1 + 1\otimes \dGa^{(\ell)}(k).
\end{equation}
The operators
$(H^{\ex}, P^{\ex})$ are  essentially self-adjoint on $C_0^{\infty}(\real^{\nu})\otimes \mcC^{\ex}$, where $\mcC^{\ex}:=\mcC\otimes\mcC$. 
Similarly, $(H^{(\ell)}, P^{(\ell)})$ are essentially self-adjoint on $C_0^{\infty}(\real^{\nu})\otimes \mcC^{(\ell)}$, where 
$\mcC^{(\ell)}:=\mcC\otimes C_0^{\infty}(\real^{\nu})^{\otimes_s \ell}$.  
Since  $(H^{\ex}, P^{\ex})$ as well as $(P^{(\ell)},H^{(\ell)})$, for $\ell\in\nat$,  form commuting families of self-adjoint operators,  we can introduce their joint spectral resolutions $E^{\ex}(\,\cdot\,)$ and  $E^{(\ell)}(\,\cdot\,)$.
We use extended Lee-Low-Pines transformations to perform fiber decompositions
of $H^{\ex}$ and $H^{(\ell)}$ w.r.t. the total momentum.
They have the form  
\beqa
I_{\LLP}^{\ex}:=(F\otimes 1)\circ \Ga^{\ex}(\e^{\i k\cdot y})=I_{\LLP}\oplus\Bigl(\bigoplus_{\ell=1}^{\infty} I_{\LLP}^{(\ell)} \Bigr),
\eeqa
where  $F$ is the Fourier transform in the electron position variable, $\Ga^{\ex}( \e^{\i k\cdot y})$ is defined as explained in Section~1.2 of \cite{MR12}
and  $I_{\LLP}^{(\ell)}:=(I_{\LLP}^{\ex})_{|\hil\otimes \Fock^{\ell}}$.
There holds
\begin{align}\label{FirstExFib}
H^{\ex} =I_{\LLP}^{\ex *}\Bigl( \int_{\real^{\nu}}^{\oplus} d\xi\,  H^{\ex}(\xi) \Bigr)I_{\LLP}^{\ex},\qquad 
H^{(\ell)}  = I_{\LLP}^{(\ell) *}\Bigl( \int_{\real^{\nu}}^{\oplus}  d\xi\, H^{(\ell)}(\xi) \Bigr)I_{\LLP}^{(\ell)}.
\end{align}
The fiber Hamiltonians $H^\ex(\xi)$ are essentially self-adjoint on $ \mcC^{\ex}$  and  have the  form
\beqa
H^{\ex}(\xi)=\Omeg(\xi-\dGa^{\ex}(k))+\dGa^{\ex}(\om)+\phi(G)\otimes 1, \label{extended-fiber-hamiltonian}
\eeqa
where $\dGa^{\ex}(\,\cdot\,)$ is defined in  Appendix~\ref{Extended-appendix}. The extended fiber Hamiltonians $H^\ex(\xi)$ 
can be decomposed just as for $H^\ex$, cf.~\eqref{Hex-Def-And-Decomp}, and we get as expected
\begin{equation}
H^{\ex}(\xi)= H(\xi)\oplus\Bigl(\bigoplus_{\ell=1}^{\infty}H^{(\ell)}(\xi)\Bigr).\label{Hamiltonian-decomp}
\end{equation}
Since there is no interaction in the second tensor component of $H^{(\ell)}(\xi)$,
which is simply a multiplication operator, we can decompose further into a direct integral over momenta from $\real^{\ell\nu}$:
\begin{align}\label{MomentumFib}
H^{(\ell)}(\xi)&= \int_{ \real^{\ell\nu} }^\oplus dk\, H^{(\ell)}(\xi,\un k),\\
\label{DoubleFib}
H^{(\ell)}(\xi;\un k)&= H(\xi-\sum_{j=1}^\ell k_j)+(\sum_{j=1}^\ell\om(k_j))1.
\end{align}
In our investigation we will often make use of the following simple fact:
\bel\label{expansion-truncation} Let $\chi\colon\real\to\real$ be a bounded Borel function, with essential support
in the set $(-\infty, \Sigma_0^{(n)}(\xi))$. Then
\beqa
\chi(H^{\ex}(\xi))=\chi(H(\xi))\oplus \Bigl( \bigoplus_{\ell=1}^{n-1}\chi(H^{(\ell)}(\xi))\Bigr).  
\eeqa 
\eel  
\begin{proof} Let $\ell\geq n$. We recall that
\beqa
\Siz^{(\ell)}(\xi)=\inf_{\un k\in\real^{\ell\nu}}\Bigl(    \Siz(\xi-\sum_{j=1}^\ell k_j)+\sum_{j=1}^\ell\om(k_j)\Bigr).
\eeqa
Consequently, 
\beqa
H^{(\ell)}(\xi)=\int^{\oplus}_{\real^{\ell\nu}}dk\,  \bigl(H(\xi-\sum_{j=1}^\ell k_j)+(\sum_{j=1}^\ell\om(k_j))1 \bigr) \geq \Sigma_0^{(\ell)}(\xi)1.
\eeqa
Since $\Sigma_0^{(\ell)}(\xi)\geq \Sigma_{0}^{(n)}(\xi)$, and $\chi$ is supported below $\Sigma_0^{(n)}(\xi)$, only the first $n-1$
terms of the expansion
\begin{equation}
\chi(H^{\ex}(\xi))=\chi(H(\xi))\oplus\Bigl(\bigoplus_{\ell=1}^{\infty}\chi(H^{(\ell)}(\xi))\Bigr)
\end{equation}
are non-zero. \end{proof}

\subsection{Structure of the spectrum}\label{isolated-spectrum}

To continue our discussion of the spectrum of $H$ we need more restrictive assumptions.
Following \cite{MR12}, we state:
\beco \bf (Spectral Theory). \rm \label{Condition-two} We impose:  
\begin{enumerate}[label = \textbf{\textup{(ST\arabic*)}}, ref =
  \textup{(ST\arabic*)},leftmargin = *]
\item $\Omeg$ and $\om$ are  real analytic functions. 
\item\label{ST-DistDer} $G$ admits $2$ distributional derivatives with $\pa_k^{\al}G\in L^2_{\mathrm{loc}}(\real^{\nu}\backslash \{0\})$, 
for all $1\leq |\al|\leq 2$. 
\item For all orthogonal matrices $O\in \mrO(\nu)$ and all $k\in\real^{\nu}$ we have $\om(Ok)=\om(k)$, $\Omeg(Ok)=\Omeg(k)$,
and $G(Ok)=G(k)$ almost everywhere. 
\item\label{ST-HigherOrder} $\sup_{k\in\real^{\nu}}|\pa^{\al}_k\om(k)|<\infty$ for all $|\al|\geq 1$ and  
$|\pa_{k}^{\be}\Omeg(k)|\leq C_{\be}\lan k \ran^{s_{\Omeg}-|\be|}$  for $|\be|\geq 2$. $s_{\Omeg}\in [0,2]$ appeared in Condition~\ref{Condition-one}. 
\end{enumerate}

 We note that \ref{ST-DistDer} coincides with the corresponding condition from  \cite{MR12} for $n_0=2$.  
 \ref{ST-HigherOrder} is stronger than in  \cite{MR12}.
\eeco

Making use of Kato's analytic perturbation theory \cite{Ka} we obtain a description of the
isolated part of the spectrum (cf. (\ref{sigma-essential}) above):
\beqa
\Sigma_{\iso}=\bigset{(\xi,E)\in \Sigma }{ E<\Sigma_{0}^{(1)}(\xi) }.
\eeqa
This spectrum consists of analytic mass shells and level crossings. The set of level crossings
is defined as
\beqa
\mcX:=\bigset{(\xi,E)\in \Sigma_{\iso} }{ \forall n\in\nat:\,
  \Sigma_{\iso}\cap B_{1/n}((\xi,E)) 
\textup{ is not a graph }}. 
\eeqa
The connected components of $\mcX$ are $S^{\nu-1}$-spheres. They have the
form $\pa B(0; R)\times \{ E \}$, or, in the degenerate case, $\{0\}\times \{E\}$.  They can accumulate either at infinity
or at the bottom of the essential spectrum. The level crossings  are connected in $\Sigma_{\iso}$ by shells which are real-analytic manifolds.
Each shell is a pair $(\mcA, S)$,
where $\mcA=\set{\xi\in\real^{\nu} }{ r<|\xi|<R }$, $0\leq r<R$,  is an open annulus or an open ball centred at zero. The
function $S\colon \mcA\to \real$ is real analytic and rotation invariant.

The structure of the continuous spectrum in $\mcE^{(1)}$, cf. (\ref{mcE-one}), was studied in \cite{MR12} with the help of
Mourre theory. As these results are very relevant for the present investigation  we summarize them 
here. For any $\xi\in\real^{\nu}$ the conjugate operator has the form $A_{\xi}=\dGa(a_{\xi})$, where
\beqa
a_{\xi}=\h\bigl\{v_{\xi}\cdot \i\nabla_k+\i\nabla_{k}\cdot v_{\xi} \bigr\},
\eeqa
and $v_{\xi}\in C_0^{\infty}(\real^{\nu}\backslash\{0\};\real^{\nu})$ is a suitable vector field constructed in \cite{MR12}.						
It is easily seen that  $a_{\xi}$  is essentially self-adjoint on $C_0^{\infty}(\real^{\nu})$ and $A_{\xi}$ is essentially self-adjoint
on $\mcC$. In \cite{MR12} one can find a construction  of the threshold sets $\mcT^{(1)}(\xi)\subset \real$, $\xi\in\real^\nu$,
which carry information about the structure of the isolated spectrum,  and exceptional sets 
\beqa
\Exc(\xi)=(0,\om(0))+\Sigma_{\iso}(\xi),\quad \xi\in\real^{\nu}, \label{exceptional-set}
\eeqa
 which account for a possible singularity of the coupling function $G$ at zero. (We recall that in \cite{MR12}, formula (1.35),
$\Exc(\xi)$ was defined to be empty for $G$ regular at zero. Here it is always given by (\ref{exceptional-set})).
The main result of \cite{MR12} can be summarized as follows:
\bet\cite{MR12}\label{Mourre-estimate-theorem} Assume Conditions~\ref{Condition-one} and~\ref{Condition-two}. Let $\xi\in\real^{\nu}$ unless stated otherwise. Then the following  properties hold true:
\begin{enumerate}[label = \textup{(\alph*)}, ref =\textup{(\alph*)},leftmargin=*]
\item  The sets $\mcE^{(1)}(\xi)\cap \mcT^{(1)}(\xi)$
and $\mcE^{(1)}(\xi)\cap \Exc(\xi)$ are locally finite with possible accumulation points only at $\Sigma_0^{(2)}(\xi)$.
\item All eigenvalues in $\si_{\pp}(H(\xi))\cap\mcE^{(1)}(\xi)\backslash(\mcT^{(1)}(\xi)\cup\Exc(\xi))$ have finite multiplicity.
\item The set $\si_{\pp}(H(\xi))\cap\mcE^{(1)}(\xi)$ is at most countable, with accumulation points at most
in $\mcT^{(1)}(\xi)\cup\Exc(\xi)\cup\{\Sigma_0^{(2)}(\xi)\}$.
\item Let $(\xi_0,\la_0)\in \mcE^{(1)}$ be s.t.  
$\la_0\in  \mcE^{(1)}(\xi_0) \backslash (\mcT^{(1)}(\xi_0)\cup\Exc(\xi_0)\cup \si_{\pp}(H(\xi_0))$.
Then there exist a neighbourhood $N_0$ of $\xi_0$, a neighbourhood $\mcJz$ of $\la_0$, and a constant $\cm>0$ 
s.t. for any $\xi\in N_0$:
\begin{align}
& \mathbf{1}_{\mcJz}(H(\xi))\i [H(\xi),  A_{\xi_0}]\mathbf{1}_{\mcJz}(H(\xi))\geq \cm \mathbf{1}_{\mcJz}(H(\xi)), \label{Mourre-est} \\
& \mathbf{1}_{\mcJz}(H^{(1)}(\xi))\i [H^{(1)}(\xi),  1\otimes a_{\xi_0}]\mathbf{1}_{\mcJz}(H^{(1)}(\xi))
\geq \cm \mathbf{1}_{\mcJz}(H^{(1)}(\xi)). \label{Mourre-est-one}
\end{align}
\item The fiber Hamiltonians have no singular continuous spectrum below the two-boson threshold:
\beqa
 \si_{\sico}(H(\xi))\cap (-\infty, \Sigma_0^{(2)}(\xi))=\emptyset.
\eeqa
\end{enumerate}
\eet

\subsection{Results}\label{results-subsection}

We begin by introducing some notation. First of all, the space of
bound states $\hil_\bnd$ of the system is the closure of the span of all states of
the form $I_{\LLP}^*\int^\oplus \Psi_\xi d\xi$, where $\real^\nu \ni \xi \to \Psi_\xi$ is
compactly supported, measurable and $\Psi_\xi$ is an eigenvector for $H(\xi)$ for a.e. $\xi$.
Expressed concisely in terms of the joint spectral resolution $E$ for the vector of commuting operators $(P,H)$ this amounts to
\beqa
\hil_\bnd  = E(\Sigma_\pp)\hil. 
\eeqa
Incoming scattering states prepared at $t= -\infty$, as well as
outgoing scattering states at $t \to +\infty$, consist of a
superposition of interacting dressed electrons and a collection of free bosons. 
That is, the incoming and outgoing spaces are
\beqa
\hil_{\pm} = \hil_\bnd\otimes \Fock. 
\eeqa
The asymptotic dynamics  on incoming and outgoing spaces are generated
by the restriction of $\Hext$ to $\hil_{\pm}$. In the light of our discussion in the preceding two subsections, $\Hext_{|_{\hil_{\pm}}}$ is a direct
sum of operators of the form
\beqa
I_{\LLP}^{(\ell) *}\Bigl( \int_{\real^{\nu}}^{\oplus}  d\xi\, \int_{ \real^{\ell\nu} }^\oplus dk\, S(\xi-k_1-\cdots -k_{\ell}) + 
\omega(k_1)+\cdots +\omega(k_{\ell})  \Bigr)I_{\LLP}^{(\ell)},
\eeqa
where $\ell\in\nat_0$ and $(\mcA, S)$ are  shells in $\Sigma_{\iso}$. 
Moreover, it is an easy consequence of the HVZ
theorem, cf.~\cite[Theorem~2.1]{MRMP}, that $(P,H)$ and $(P^\ex, \Hext)_{|\hil_{\pm} }$ have
identical energy-momentum spectra.

Let us recall that the asymptotic creation  operators of bosons  are usually defined as follows:
\beqa
a_\pm^*(h)\Psi:=\lim_{t\to\pm\infty}\e^{\i tH}a^*(\e^{-\i \om t}h)\e^{-\i tH}\Psi, \label{asymptotic-fields}
\eeqa 
where $h\in\mfh$ and $\Psi$  belongs to the dense domain $\mcD$  of vectors of bounded energy 
(i.e. $\mcD:=\bigcup_{K\subset \real^{\nu+1}}\Ran\, E(K)$, where the union extends over all compact sets).
It is well known \cite{HK68}  and easy to see that the limit
exists in the case of the massive Nelson model   (i.e. $G\in S(\real^{\nu})$ and $\om(k)=\sqrt{k^2+m^2}$, where $m>0$).
As a consequence, in this case there exist mappings $\widetilde{\Ome}^\pm$ 
defined on $\Psi'\in \mcD\otimes \Ga_{\fin}(\mfh)$ by
\beqa
\widetilde{\Ome}^\pm\Psi'=\lim_{t\to\pm\infty} \e^{\i tH}\cGa(1,1)^*\e^{-\i tH^{\ex}}\Psi', \label{wave-operator-intro-def}
\eeqa
where the scattering identification map $\cGa(1,1)^*$ is defined in (\ref{dGa-check-intro}) and in Appendix~\ref{Extended-appendix}.
The restrictions of $\widetilde{\Ome}^\pm$ to $\hil_{\pm}$, denoted by $\Ome^{\pm}$, are usually called the (conventional) wave operators.
They were introduced first in \cite{HuSp2}. The associated
(conventional) scattering operator $S\colon \hil_-\to \hil_+$ is
then given by $S = (\Ome^-)^*\Ome^+$.
Observe that restricted to the subspace $ \hil_\bnd\otimes \complex\subset\hil_\pm$,
the wave operators trivially exist and act as injections
\begin{equation}
\forall \Psi\in\hil_\bnd:\qquad \Omega^\pm(\Psi\otimes\vac) = \Psi.
\end{equation}

To serve as an acceptable wave operator, $\Ome^\pm$ should be isometric.  
At small coupling strength such a result  seems to  be within reach of  methods
present in the literature \cite{FGSch3,AMZ}. However, at arbitrary couplings, in the possible presence of eigenvalues
embedded in the continuous spectrum, nothing is known to date about this problem. Not to speak of the problem
of asymptotic completeness in the massive Nelson model, which is the
question of  isometry of the adjoints of the wave operators.

We note that in the case of the polaron model 
(i.e. $G=\ti G(k)/|k|$, $\ti G\in S(\real)$ and $\om(k)=m$, where $m>0$),
which is also covered by our assumptions, problems start already 
at the level of  existence of the wave operators. 
Since the boson dispersion relation gives only a phase factor, it might even
seem that the wave operators (\ref{wave-operator-intro-def}) do not exist!

It turns out that the situation is much better than outlined above, at
least in the energy-momentum regime below the two-boson threshold i.e.
in the region
$\mcR:=\set{(\xi,\la)\in\real^{\nu+1} }{\la<\Si^{(2)}(\xi)}$. Indeed, in
this region our main result  resolves all the problems
mentioned in the two paragraphs above:
\bet\label{main-theorem} Assume Conditions~\ref{Condition-one} and~\ref{Condition-two}. The wave operators $\OmeR^+\colon E^\ex(\mcR)
\hil_{\pm}\to \hil$ exists in the sense of the strong limits
\beqa
\OmeR^\pm:=\slim_{t\to \pm\infty} \e^{\i tH}\cGa(1,1)^*\e^{-\i tH^{\ex}}, \label{intro-wave-operator}
\eeqa
where $\cGa(1,1)$ is defined in Appendix~\ref{Extended-appendix}.
The operators $\OmeR^\pm$ are unitary as maps from  $E^{\ex}(\mcR)\hil_{\pm}$ to
 $E(\mcR)\hil$. More precisely: 
\begin{equation}
\OmeR^{\pm*}\OmeR^\pm = E^{\ex}(\mcR)_{|\hil_{\pm} }\qquad \textup{and} \qquad
\OmeR^\pm\OmeR^{\pm*} = E(\mcR).
\end{equation}
Finally, the scattering operator $S_{\mcR} = (\OmeR^{-})^*\OmeR^+\colon
E^\ex(\mcR)\hil_-\to E^\ex(\mcR)\hil_+$ is unitary. 
\eet

In the energy-momentum regime $\mcR$, scattering only happens between bound states associated to the isolated part $\Sigma_\iso$ of $\Sigma_\pp$. For this reason a special role is played by the bound states pertaining to isolated mass shells, for which we use the notation
\begin{equation}\label{Hiso}
\hil_\iso = E(\Sigma_\iso)\hil.
\end{equation}

Let us introduce the terminology that a state
$\Psi \in \hil_\bnd$ and a smearing function $h\in \mfh$
are $\mcR$-compatible if there  exists a Borel set $\mcS\subset \real^\nu\times\real$
such that $\Psi \in E(\mcS)\hil_\bnd$ and
$\set{(\xi+k,E+\omega(k))}{(\xi,E)\in \mcS, k\in \supp\, h}\subset \mcR$.
By $\supp\, h$, we understand $h$'s essential support.
Note that by energy-momentum considerations we can always choose $\mcS \subset \Sigma_\iso$, such that in fact $\Psi\in\hil_\iso$.

With the terminology just introduced,  $E^\ex(\mcR)\hil_{\pm}$ is the direct sum of states of the
form $\Psi\otimes\vac$, with $\Psi\in E(\mcR)\hil_\bnd$, 
and states from the closure of the span of
states of the form $\Psi\otimes a^*(h)\vac$, where $\Psi\in\hil_\iso$ and  
$h$ are $\mcR$-compatible. See Lemma~\ref{energetic-considerations}) for a proof.

For any $\Psi\in \hil_\iso$ and $h\in \mfh$ which are $\mcR$-compatible
we define the corresponding scattering state as follows 
\beq
a_+^*(h)\Psi:=\OmeR^+(\Psi\otimes a^*(h)\vac). 
\eeq
Theorem~\ref{main-theorem} has the following corollary:
\bec\label{HR-corollary}  Let $a_+^*(h)\Psi$, $a_+^*(h')\Psi'$ be scattering states and $\Psi''\in E(\mcR)\hil_\bnd$.
There hold the following properties:
\begin{enumerate}[label=\textup{(\alph*)},ref=\textup{(\alph*)},leftmargin = *]
\item\label{HR-Cor-a} Tensor product structure:
\beqa
\lan a_+^*(h)\Psi,a_+^*(h')\Psi'\ran=\lan h, h'\ran\lan \Psi,\Psi'\ran
\quad\textup{and}\quad
\lan a_+^*(h)\Psi,\Psi''\ran=0.
\eeqa
\item\label{HR-Cor-b} Asymptotic completeness: 
\beqa
E(\mcR)\hil=\ov{\Span\bigset{ a_+^*(h)\Psi, \Psi'' }{\Psi,  h \textup{ are } \mcR\textup{--compatible}, \Psi''\in E(\mcR)\hil_\bnd}}.
\eeqa
\end{enumerate}
\eec

Note that for the particular case of the polaron model, the notion of
$\Psi$ and $h$ being $\mcR$-compatible is completely trivial. Here
$\mcR =  \set{(\xi,E)\in \real^{\nu+1} }{E<\Si + 2m}$, where $m$ is the phonon
mass, cf.~\ref{MC-Boson-Mass}, and $\Si$ is the bottom of the spectrum
of $H$. That is, $\mcR$ is just  a half-space. Being $\mcR$-compatible
thus reduces to $\Psi\in E(\real^\nu\times(-\infty,\Sigma_0+m))\hil_\bnd
= \hil_\iso$, with no condition on $h$. Hence, in this the polaron case we have:
\beqa
E(\mcR)\hil=\ov{\Span\bigset{ a_+^*(h)\Psi, \Psi'' }{\Psi\in\hil_\iso,  h\in\mfh \textup{ and } \Psi''\in E(\mcR)\hil_\bnd}}.
\eeqa

\section{Heisenberg derivatives}\label{Heisenberg}
\setcounter{equation}{0}

As usual in investigations of the problem of asymptotic completeness, we are interested
in the  existence of  asymptotic observables, which are strong limits as $t\to\infty$  of time dependent families of observables 
 of the form
\beqa
\real\ni t\to \e^{\i tH(\xi)}\Phi(t)\e^{-\i tH(\xi)},
\eeqa
where the propagation observable $\real\ni t\to \Phi(t)\in B(\Fock)$ is uniformly bounded in time.
Since we are going to proceed via Cook's method, we are interested in the Heisenberg derivatives of propagation
observables, defined a priori in the sense of forms on $D(H(\xi))$ as
\beqa
\bD\Phi(t)=\pa_t\Phi(t) +\i[H(\xi),\Phi(t)].
\eeqa
In Propositions~\ref{a-dGa-Heisenberg} and \ref{second-technical} below we will express such derivatives by Heisenberg derivatives of some
propagation observables $\real\ni t\to \Phi^{(1)}(t)\in B(\Fock\otimes\Fock^{(1)})$, given by
\beqa
\bD^{(1)} \Phi^{(1)}(t)=\pa_t \Phi^{(1)}(t)+\i[H^{(1)}(\xi), \Phi^{(1)}(t)  ].
\eeqa
Before  we  state and prove these propositions, which  provide the  technical basis for  our investigation,  we need the following definition:
\bed\label{j-definition-a} Let $j_{0},j_{\infty}\in C^{\infty}(\real)$ be s.t. $j_0',j_{\infty}'\in C_0^{\infty}(\real)$, $0\leq j_{0},j_{\infty}\leq 1$,   $j_0=1$ in a neighbourhood of zero.   We set $\jt_{0}:=j_{0}(a/t)$, $\jt_{\infty}:=j_{\infty}(a/t)$, and $\jt:=(j_0^t,j_{\infty}^t)$ as a map $\mfh\to\mfh\oplus\mfh$
defined by $\jt h:=(j_0^th,j_{\infty}^th)$.
\eed
\begin{remark}
In Section~\ref{Heisenberg} and in Appendices~\ref{commutator-bounds-appendix}-\ref{appendix-second-technical} $a:=\h\bigl\{v\cdot \i\nabla_k+\i\nabla_{k}\cdot v\bigr\}$, where 
$v\in  C_0^{\infty}(\real^{\nu}\backslash\{0\};\real^{\nu})$ is an arbitrary vector field. Unless stated otherwise, in the remaining 
part of the paper $a:=a_{\xi_0}=\h\bigl\{v_{\xi_0}\cdot \i\nabla_k+\i\nabla_{k}\cdot v_{\xi_0}\bigr\}$ is the observable appearing
in Theorem~\ref{Mourre-estimate-theorem}, associated with some neighbourhoods $N_0$ and $\mcJz$.
\end{remark}
\bep\label{a-dGa-Heisenberg} Let $\xi\in\real^{\nu}$ and $\chi\in C_0^{\infty}(\real)_{\real}$ be supported in $(-\infty, \Sigma_0^{(2)}(\xi))$.
Let $q\in C^{\infty}(\real)_{\real}$ be  s.t.  $0\not\in\supp\, q$
 and  $q'\in C_0^{\infty}(\real)$ (in particular $q$ is bounded). 
Let $j_0,j_{\infty}$ be as specified in Definition~\ref{j-definition-a} and s.t. $j_0^2+j_{\infty}^2=1$.   
Then
\begin{align}
\bD(\chi\dGa(q^t)\chi) & = \Ga(j_0^t)\bD(\chi\dGa(q^t)\chi)\Ga(j_0^t)\non\\
& \quad + \cGao(j^t)^*\chi^{(1)}\bD^{(1)}(1\otimes q^t)\chi^{(1)}\cGao(j^t)+O(t^{-2}),
\label{first-Heisenberg-a}
\end{align} 
where we set $\chi:=\chi(H(\xi))$, $\chi^{(\ell)}:=\chi(H^{(\ell)}(\xi))$ and $q^t:=q(a/t)$.
Moreover, for $\supp\, j_0\cap \supp\, q=\emptyset$ we have\footnote{One can weaken this assumption to  $\supp\, j_0\cap \supp\, q'=\emptyset$ at a cost of additional complications in Appendix~\ref{First-proposition-appendix}. This is, however, not needed in the following. }
\beqa
\Ga(j_0^t)\chi\bD\dGa(q^t)\chi\Ga(j_0^t)=O(t^{-2}).
\eeqa
\eep
\begin{proof}  We write $j:=j^t$, $q:=q^t$. The Heisenberg derivative of the asymptotic observable $\Phi(t):=\chi\dGa(q)\chi$ is given by
\beq
\mathbf{D}\Phi(t)=\chi\dGa(\pa_t q)\chi+\i\chi[H(\xi), \dGa(q)]\chi. \label{Heisenberg-derivative}
\eeq
We consider the first term  on the r.h.s.  above:
\begin{equation}
\chi\dGa(\pa_t q)\chi=\cGa(j)^*\chi^{\ex} \dGa^{\ex}(\pa_t q)\chi^{\ex}\cGa(j)+O(t^{-2}), 
\end{equation}
where we applied Proposition~\ref{derivative-term} and Lemma~\ref{concrete-admissible-obs}, and set $\chi^{\ex}:=\chi(H^{\ex}(\xi))$.  We use the decomposition~(\ref{Hamiltonian-decomp}) 
of $H^{\ex}(\xi)$ 
and, by Lemma~\ref{expansion-truncation}, it suffices to consider the terms $\ell=0$ and $\ell=1$. The $\ell=0$ term has the following form
\beqa
\cGaz(j)^*\chi^{(0)}  \dGa^{(0)}(\pa_t q) \chi^{(0)}\cGaz(j)=\Ga(\jj_0)\chi\dGa(\pa_t q)\chi\Ga(\jj_0).
\eeqa
If $j_0$ is supported outside of the support of $q$, this contribution is of order $O(t^{-2})$ by Proposition~\ref{l=0}. 
Otherwise it contributes to the first term on the r.h.s. of (\ref{first-Heisenberg-a}).

The $\ell=1$ term has the form
\beqa
\cGao(j)^*\chi^{(1)}\bigl(\dGa(\pa_t q)\otimes 1+1\otimes (\pa_t q)  \bigr)\chi^{(1)}\cGao(j).
\eeqa
By Corollary~\ref{resolvents}, we obtain
\beqa
\cGao(j)^*\chi^{(1)}\big(\dGa(\pa_t q)\otimes 1 \big)\chi^{(1)}\cGao(j)=O(t^{-2}).
\eeqa
So we are left with
\beq
\cGao(j)^*\chi^{(1)}\big( 1\otimes \pa_t q  \big)\chi^{(1)} \cGao(j), \label{time-derivative-contribution}
\eeq
which contributes to the expression on the r.h.s. of (\ref{first-Heisenberg-a}).

Now we proceed to the second term on the r.h.s. of (\ref{Heisenberg-derivative}). From Proposition~\ref{gamma-check}, we obtain
\beqa
\chi[H(\xi),\dGa(q)]\chi 
=\cGa(j)^*\chi^{\ex}[H^{\ex}(\xi), \dGa^{\ex}(q)]\chi^{\ex}\cGa(j)+O(t^{-2}).
\eeqa
Making use of the decomposition~(\ref{Hamiltonian-decomp}), we get 
\beqa
\cGa(j)^*\chi^{\ex}[H^{\ex}(\xi), \dGa^{\ex}(q)]\chi^{\ex}\cGa(j)
=\cGa(j)^*\Bigl(\bigoplus_{\ell=0}^{\infty}\chi^{(\ell)}[H^{(\ell)}(\xi), \dGa^{(\ell)}(q)]\chi^{(\ell)}\Bigr) \cGa(j).  
\eeqa
By Lemma~\ref{expansion-truncation}, it suffices to consider $\ell=0$ and $\ell=1$ terms:
The $\ell=0$ contribution is the following:
\beqa
\cGaz(j)^*\chi^{(0)} [H^{(0)}(\xi),\dGa^{(0)}(q)]\chi^{(0)} \cGaz(j)
=\Ga(\jj_0)\chi [H(\xi),\dGa(q)]\chi \Ga(\jj_0).\label{zero-term}
\eeqa
If $j_0$ is supported outside of the support of $q$, 
this contribution is of order $O(t^{-2})$ by Proposition~\ref{l=0}. Otherwise it contributes to the first term on the r.h.s. of (\ref{first-Heisenberg-a}).

Let us now consider the contribution with $\ell=1$:
\beqa
\cGao(j)^*\chi^{(1)}[H^{(1)}(\xi), \dGa^{(1)}(q)]\chi^{(1)}\cGao(j).
\eeqa
We recall that $\dGa^{(1)}(q)=\dGa(q)\otimes 1+1\otimes q$
and obtain 
\begin{align}
& \cGao(j)^*\chi^{(1)}[H^{(1)}(\xi), \dGa^{(1)}(q)]\chi^{(1)}\cGao(j)\non\\
& \quad =\cGao(j)^*\chi^{(1)}[H^{(1)}(\xi), \dGa(q)\otimes 1]\chi^{(1)}\cGao(j)
 +\cGao(j)^*\chi^{(1)}[H^{(1)}(\xi), 1\otimes q]\chi^{(1)}\cGao(j).
\end{align}
The first term on the r.h.s. above is of order $O(t^{-2})$ by Lemma~\ref{decay-of-dGa-term}. The second term contributes to the expression from the 
statement of the proposition.

Thus, together with  (\ref{time-derivative-contribution}), we get
\begin{align}
\mathbf{D}\Phi(t)& = \Ga(j_0)\mathbf{D}\Phi(t)\Ga(j_0)\non\\
& \quad +\cGao(j)^*\chi^{(1)}\bigl( 1\otimes \pa_t q+\i[H^{(1)}(\xi), 1\otimes q]  \bigr)\chi^{(1)}\cGao(j)+O(t^{-2}),
\end{align}
and the first term on the r.h.s. contributes to $O(t^{-2})$ for $j_0$ supported outside of the support of $q$. This concludes the proof. 
\end{proof}

\bep\label{second-technical} Let $\xi\in\real^{\nu}$ and $\chi\in C_0^{\infty}(\real)_{\real}$ be supported in $(-\infty, \Sigma_0^{(2)}(\xi))$.  
Let $ q\in C^{\infty}(\real)$ be s.t. $q'\in C_0^{\infty}(\real)$, $0 \leq q\leq 1$, $q=1$ on a neighbourhood $\De $ of zero. 
Let $j_0, j_{\infty}$ be as specified in Definition~\ref{j-definition-a}, s.t. $j_0^2+j_{\infty}^2=1$ and $\jj_0$ 
is supported in $\De$.  Then
\beq
\chi\mathbf{D}\Ga(q^t)\chi=\cGao(j^t)^*\chi^{(1)}(\Ga(q^t)\otimes 1)\bD^{(1)}(1\otimes q^t)\chi^{(1)}\cGao(j^t)+O(t^{-2}),
\label{second-technical-formula}
\eeq
where we set $\chi:=\chi(H(\xi))$, $\chi^{(\ell)}:=\chi(H^{(\ell)}(\xi))$ and $q^t:=q(a/t)$. Consequently,
\beqa
\chi\mathbf{D}\Ga(q^t)\chi=\fr{1}{t}\cGao(j^t)^*\chi^{(1)}C_t(1\otimes (q')^t)\chi^{(1)}\cGao(j^t)+O(t^{-2}), \label{Heisenberg-derivative-expression}
\eeqa
where $\{C_t\}_{t\in\real}$ is a family of bounded operators on $\Fock\otimes\Fock^{(1)}$ which satisfies
\begin{equation}
  C_t(N+1)=O(1) 
  \qquad \textup{and} \qquad
  [C_t, 1\otimes p^t]=O(t^{-1}), \label{C-second-property}
\end{equation}
for any $p\in C^{\infty}(\real)_{\real}$ with $p'\in C_0^{\infty}(\real)$.
\eep
\begin{proof} We set $q:=q^t$, $j:=j^t$ and  compute the Heisenberg derivative:
\beqa
\chi\mathbf{D}\Ga(q)\chi&=&\chi(\dGa(\qd,\pa_t \qd)+\i[H(\xi), \Ga(\qd)])\chi.
\eeqa
Making use of Proposition~\ref{Ga-Heisenberg}, we obtain
\begin{equation}
\chi \dGa(\qd,\pa_t \qd) \chi = \cGa(j)^*\chi^{\ex}\dGa^{\ex}(\qd, \pa_t \qd) \chi^{\ex}\cGa(j)+O(t^{-2}),
\end{equation}
where we set $\chi^{\ex}:=\chi(H^{\ex}(\xi))$ and $\dGa^{\ex}(\,\cdot\,,\,\cdot\,)$ is defined by formula~(\ref{dGa-q-p-def}).
Inserting decomposition~(\ref{Hamiltonian-decomp}) of $H^{\ex}(\xi)$, we get
\beqa
\cGa(j)^*\chi^{\ex}\dGa^{\ex}(\qd, \pa_t \qd) \chi^{\ex}\cGa(j)
=\cGa(j)^*\Bigl(\bigoplus_{\ell=0}^{\infty}\chi^{(\ell)}  \dGa^{(\ell)}(\qd, \pa_t \qd)  \chi^{(\ell)}\Bigr)\cGa(j).  
\eeqa
By Lemma~\ref{expansion-truncation}, it suffices to consider $\ell=0$ and $\ell=1$ terms. For $\ell=0$, we get 
\beqa
\Ga(j_0)\chi \dGa(\qd, \pa_t \qd ) \chi\Ga(j_0)=O(t^{-2}) 
\eeqa
by Proposition~\ref{l=0-Ga-lemma}.  The  $\ell=1$ term is given by
\beqa
\cGao(j)^*\chi^{(1)} \bigl( \dGa(\qd, \pa_t \qd )\otimes \qd+\Ga(\qd) \otimes \pa_t \qd   \bigr) \chi^{(1)} \cGao(j).  
\eeqa
 We note that, by Corollary~\ref{resolvents}, 
\beqa
\cGao(j)^*\chi^{(1)}\bigl(\dGa(\qd,\pa_t \qd)\otimes \qd \bigr)\chi^{(1)}\cGao(j)=O(t^{-2}).
\eeqa
 So we are left with
\begin{equation}
\cGao(j)^*\chi^{(1)}\bigl(\Ga(\qd)   \otimes \pa_t \qd  \bigr)\chi^{(1)}\cGao(j),
\end{equation}
which contributes to the r.h.s. of (\ref{second-technical-formula}). Next, we choose $\ti\chi\in C_0^{\infty}(\real)_{\real}$ s.t. $\chi\ti\chi=\chi$ and make
 use of Lemma~\ref{Helffer-second-component-lemma} to write
\begin{align}
& \cGao(j)^*\chi^{(1)}\bigl(\Ga(\qd)   \otimes \pa_t \qd  \bigr)\chi^{(1)}\cGao(j)\non\\
& \qquad =\cGao(j)^*\chi^{(1)}(\Ga(\qd)   \otimes  \ti q )\ti \chi^{(1)}(1\otimes  \qd') \chi^{(1)}\cGao(j)+O(t^{-2}),
\end{align}
where $\ti q=-(a/t)f(a/t)$, $f\in C_0^{\infty}(\real)$ is equal to one on the support of $q'$ and vanishes outside of a slightly larger set. 
The operator $C_{1,t}:=(\Ga(\qd)   \otimes \ti q)\ti \chi^{(1)}$ is the first contribution to $C_t$ appearing in (\ref{Heisenberg-derivative-expression}).
It is obvious that $C_{1,t}$ satisfies the first property in~(\ref{C-second-property}), and the second property in~(\ref{C-second-property}) follows from Lemma~\ref{Helffer-second-component-lemma}.

Let us now consider the second contribution to the Heisenberg derivative. From Proposition~\ref{Ga-Heisenberg} we obtain
\begin{equation}
\chi [H(\xi), \Ga(q)]\chi  = \cGa(j)^*\chi^{\ex}[H^{\ex}(\xi), \Ga^{\ex}(\qd)]\chi^{\ex}\cGa(j)+O(t^{-2}).
\end{equation}
By inserting the decomposition~(\ref{Hamiltonian-decomp}), we get:
\beqa
\cGa(j)^*\chi^{\ex}[H^{\ex}(\xi), \Ga^{\ex}(\qd)]\chi^{\ex}\cGa(j)
=\cGa(j)^*\Bigl(\bigoplus_{\ell=0}^{\infty}\chi^{(\ell)}
[H^{(\ell)}(\xi), \Ga^{(\ell)}(\qd)]\chi^{(\ell)} \Bigr)\cGa(j).  
\eeqa
As before, it is enough to consider $\ell=0$ and $\ell=1$ terms. As for the $\ell=0$ term, 
\beqa
\Ga(j_0)\chi [H(\xi), \Ga(\qd)]\chi\Ga(j_0)=O(t^{-2}),
\eeqa
by Proposition~\ref{l=0-Ga-lemma}. The $\ell=1$ term is given by
\begin{align}
& \cGao(j)^*\chi^{(1)} [H^{(1)}(\xi), \Ga(\qd)\otimes \qd]\chi^{(1)} \cGao(j)\non\\
& \qquad =\cGao(j)^*\chi^{(1)} [H^{(1)}(\xi),\Ga(\qd)\otimes 1](1\otimes \qd)\chi^{(1)} \cGao(j)\non\\
& \qquad \quad +\cGao(j)^*\chi^{(1)}(\Ga(\qd)\otimes  1) [ H^{(1)}(\xi), 1\otimes \qd ] \chi^{(1)}\cGao(j).
\end{align}
The first term on the r.h.s. above is $O(t^{-2})$ by Proposition~\ref{new-comm-prop} and the second term contributes to (\ref{second-technical-formula}).
This concludes the proof of (\ref{second-technical-formula}). 

Let us now  complete the proof of (\ref{Heisenberg-derivative-expression}): First,  we note that by Proposition~\ref{comm-H-dGa}
\beqa
 [ H^{(1)}(\xi), 1\otimes \qd ] \chi^{(1)}=O(t^{-1}).
\eeqa
Making use of this fact and of   Lemma~\ref{j_0Lemma}, we can write
\begin{align}
& \cGao(j)^*\chi^{(1)}\ti\chi^{(1)}(\Ga(\qd)\otimes  1) [ H^{(1)}(\xi), 1\otimes \qd ] \chi^{(1)} \cGao(j)\non\\
&\qquad =\cGao(j)^*\chi^{(1)}(\Ga(\qd)\otimes  1)\ti \chi^{(1)} [ H^{(1)}(\xi), 1\otimes \qd ] \chi^{(1)} \cGao(j)+O(t^{-2}).
\end{align}
Next, we note that by Proposition~\ref{q-a-commutator}
\begin{align}
\ti \chi^{(1)} \i[ H^{(1)}(\xi), 1\otimes \qd ] \chi^{(1)}&= \fr{1}{t}\ti\chi^{(1)}C (1\otimes (q')^t)\chi^{(1)}+O(t^{-2})\non\\
&= \fr{1}{t}\ti\chi^{(1)}C \ti\chi^{(1)}(1\otimes (q')^t)\chi^{(1)}+O(t^{-2}), \label{I-proposition-formula}
\end{align}
where $C$ is a bounded operator on $\Fock\otimes\Fock^{(1)}$, which satisfies $[C,1\otimes p^t]=O(t^{-1})$ 
for any $p\in C^{\infty}(\real)_{\real}$ s.t. $p'\in C_0^{\infty}(\real)$ and in the second step in (\ref{I-proposition-formula})  we made use of Lemma~\ref{Helffer-second-component-lemma}.
The second contribution to $C_t$ is thus given by
\beqa
C_{2,t}:= (\Ga(\qd)\otimes  1)\ti\chi^{(1)}C\ti\chi^{(1)}. 
\eeqa
 Again, it is obvious that $C_{2,t}$ satisfies the first property in~(\ref{C-second-property}), and the second property in~(\ref{C-second-property}) follows from $[C,1\otimes p^t]=O(t^{-1})$  and  Lemma~\ref{Helffer-second-component-lemma}.  
 \end{proof}

\section{Propagation estimates}\label{propagation-estimates}
\setcounter{equation}{0}

In this section we use the expressions for Heisenberg derivatives of  propagation observables,
established in Section~\ref{Heisenberg}, to prove suitable minimal velocity propagation estimates.  
We will use these estimates in Section~\ref{time-convergence-proofs} to verify the existence of the relevant asymptotic
observables. 
\bep\label{main-propagation-estimate} Let $\chi\in C_0^{\infty}(\real)_{\real}$ be supported in $\mcJz$ and $\xi\in N_0$.  
 Fix $0<\eps<c_0<\cm$, where $\cm$ appeared in the Mourre estimate (\ref{Mourre-est-one}), and $R>\eps$.  
\begin{enumerate}
[label = \textup{(\alph*)}, ref =\textup{(\alph*)},leftmargin=*]
\item\label{MainPropEstFullA} Let $\Int_0=[-R,c_0]$. Then there exists $c>0$ such that for all $\Psi^{(1)}\in\Fock\otimes\Fock^{(1)}$:
\beqa
 \int_1^{\infty}dt\, \fr{1}{t}\lan\Psi^{(1)}_t,\chi^{(1)}(1\otimes \mathbf{1}_{ \Int_0 }(a_{\xi_0}/t) )\chi^{(1)}\Psi^{(1)}_t\ran\leq c\|\Psi^{(1)}\|^2,
\label{single-particle-prop-est}
\eeqa
where $\Psi_t^{(1)}:=\e^{-\i t H^{(1)}(\xi)  }\Psi^{(1)}$ and $\chi^{(1)}:=\chi(H^{(1)}(\xi))$. 
\item\label{MainPropEstRightA} Let $j_0,j_{\infty}$ be as specified in Definition~\ref{j-definition-a} and s.t. $j_0^2+j_{\infty}^2=1$. Let $\Int=[-R,-\eps]\cup [\eps,c_0]$.
Then there exists $c>0$ such that for  all $\Psi\in\Fock$:
\beqa
\int_1^{\infty}dt\, \fr{1}{t}\lan\Psi_t,\cGao(j^t)^*\chi^{(1)}(1\otimes \mathbf{1}_{\Int}(a_{\xi_0}/t) )\chi^{(1)}\cGao(j^t)\Psi_t\ran\leq c\|\Psi\|^2,
\label{multi-particle-prop-est}
\eeqa
where $\Psi_t:=\e^{-\i t H(\xi)  }\Psi$. 
\end{enumerate}
\eep
\begin{proof} We set $a:=a_{\xi_0}$ and  start with a brief consideration which is relevant for both parts of the proposition. 
Let $q\in C^{\infty}(\real)$ be s.t. $q'\in C_0^{\infty}(\real)$, $q'\geq 0$, $\sqrt{q'}\in C_0^{\infty}(\real)$ and $\supp\, q'\subset [-R-1, c_0']$
for some $c_0<c_0'<\cm$.
Let us consider the propagation observable
\beqa
\Phi^{(1)}(t)=\chi^{(1)}(1\otimes q )\chi^{(1)},
\eeqa
where we set $q:=q(a/t)$. Its Heisenberg derivative gives
\beqa
\bD^{(1)}\Phi^{(1)}(t)=\chi^{(1)}\bigl(-\fr{1}{t} 1\otimes (a/t)q'+\i\ti\chi^{(1)}[H^{(1)}(\xi), 1\otimes q]\ti\chi^{(1)}\bigr)\chi^{(1)},
\eeqa 
where we chose some function $\ti\chi\in C_0^{\infty}(\real)_{\real}$, supported in $\mcJz$, s.t. $\chi\ti \chi=\chi$. Next, making
use of Proposition~\ref{q-a-commutator}, we can write
\begin{align}
\ti\chi^{(1)}[H^{(1)}(\xi), 1\otimes q]\ti\chi^{(1)} 
& = \fr{1}{t}(1\otimes\sqrt{q'})\ti\chi^{(1)}
[H^{(1)}(\xi), 1\otimes a ]^\circ\ti\chi^{(1)}(1\otimes\sqrt{q'}) +O(t^{-2})\non\\
&  \geq\fr{\cm}{t} (1\otimes\sqrt{q'})(\ti\chi^{(1)})^2 (1\otimes\sqrt{q'}) +O(t^{-2})\non\\
&  =\fr{\cm}{t} \ti\chi^{(1)} (1\otimes q') \ti\chi^{(1)}+O(t^{-2}), \label{commutator-lower-bound}
\end{align}
where in the second step we made use of the Mourre estimate~(\ref{Mourre-est-one}) and in the last step of  
Lemma~\ref{Helffer-second-component-lemma}. (The notation $[\,\,\, , \,\,\, ]^\circ$ is explained in Appendix~\ref{commutator-bounds-appendix}). On the other hand
\beqa
-\fr{1}{t} 1\otimes (a/t)q'\geq -\fr{c_0'}{t} 1\otimes  q'. \label{time-derivative-lower-bound}
\eeqa
Thus we obtain from (\ref{commutator-lower-bound}) and (\ref{time-derivative-lower-bound}) that
\beqa
\bD^{(1)}\Phi^{(1)}(t)&\geq &\fr{c}{t} \chi^{(1)} (1\otimes q') \chi^{(1)}+O(t^{-2}), \label{general-bound-prop-est}
\eeqa
where $c:=\cm-c_0'>0$. 

Now we are ready to prove part \ref{MainPropEstFullA} of the proposition. By choosing   $q$ s.t. $q'=1$ on 
$\Int_0=[-R,c_0]$, we obtain from~(\ref{general-bound-prop-est}) that
\beqa
\bD^{(1)}\Phi^{(1)}(t) \geq \fr{c}{t} \chi^{(1)} \bigl(1\otimes \mathbf{1}_{ \Int_0 }(a/t) \bigr) \chi^{(1)}+O(t^{-2}), \label{single-particle-lower-bound}
\eeqa
 By integrating this expression along the time evolution we obtain  (\ref{single-particle-prop-est}).

Proceeding to part \ref{MainPropEstRightA} of the proposition we choose $q'$  s.t.  $\supp\, q'\subset [-R-1,-\eps/2  ]\cup [\eps/2, c_0']$ for $c_0<c_0'<\cm$ 
and $q'=1$ on $[-R,-\eps] \cup [\eps,c_0]$. We also require that
\beqa
q(\la)= \int_{0}^{\la}q'(s)ds
\eeqa
to ensure that $q$ vanishes in a neighbourhood of zero. We consider the propagation observable
\beqa
\Phi(t)=\chi\dGa(q)\chi.
\eeqa
Proposition~\ref{a-dGa-Heisenberg} gives that
\begin{equation}
\bD\Phi(t) = \Ga(j_0)\mathbf{D}\Phi(t)\Ga(j_0)+\cGao(j)^* \bD^{(1)} \Phi^{(1)}(t)  \cGao(j)+O(t^{-2}),
\label{Heisenberg-derivative-statement-one}
\end{equation}
where we set $j:=j^t$. As for the second  term on the r.h.s. above, we obtain
\begin{align}
& \cGao(j)^*\chi^{(1)} \bD^{(1)} \Phi^{(1)}(t)  \chi^{(1)}\cGao(j)\non\\
& \qquad \geq\fr{c}{t}\cGao(j)^*\chi^{(1)} \bigl(1\otimes \mathbf{1}_{ \Int }(a/t) \bigr) \chi^{(1)}\cGao(j)+O(t^{-2}),
\label{lower-bound-on-DPhi}
\end{align}
where we made use of (\ref{general-bound-prop-est}). 
Let us now estimate the   first term on the r.h.s. of (\ref{Heisenberg-derivative-statement-one}). We choose  $\ti j$ as specified in Definition~\ref{j-definition-a}, s.t. $\ti j_0^2+\ti j_{\infty}^2=1$ and  $\supp\,\ti j_0$ does not intersect with the support of $q$. Then, making use again of Proposition~\ref{a-dGa-Heisenberg} and of formula~(\ref{lower-bound-on-DPhi}), we obtain
\beqa
\Ga(j_0)\mathbf{D}\Phi(t)\Ga(j_0)=\Ga(j_0)\cGao(\ti j)^*\bD^{(1)}\Phi^{(1)}(t) \cGao(\ti j) \Ga(j_0)+O(t^{-2})\geq O(t^{-2}),  \label{a-rest-term}
\eeqa
i.e. this term is bounded from below by an integrable contribution.

Making use of  (\ref{lower-bound-on-DPhi}) and   (\ref{a-rest-term}),  we obtain
\beqa
\bD\Phi(t)\geq \fr{c}{t}\cGao(j)^*\chi^{(1)} \bigl(1\otimes \mathbf{1}_{\Int}(a/t) \bigr)\chi^{(1)}\cGao(j)+O(t^{-2}), 
\eeqa
where $c>0$.
By integrating both sides of this inequality along the time evolution and making use of the fact that $\Phi(t)$ is bounded, uniformly in time, we conclude the proof. \end{proof}


\bep\label{minimal-velocity-estimate}  Let $\chi\in C_0^{\infty}(\real)_{\real}$ be supported in $\mcJz$   and $\xi\in N_0$. 
Then there exist $c>0$ and $0<\epsz<\cm/2$, where $\cm$ appeared in  the Mourre estimate (\ref{Mourre-est-one}), s.t. for any $R>0$ and $\Psi\in\Fock$:
\begin{equation}
\int_1^{\infty}\bigl\|\Ga\bigl(\mathbf{1}_{[-R,\epsz]}(a_{\xi_0}/t)\bigr)\chi(H(\xi))\Psi_t\bigr\|^2\fr{dt}{t}\leq c\|\Psi\|^2,
\end{equation}
where $\Psi_t=\e^{-\i tH(\xi)}\Psi$. 
\eep
\begin{proof} We set $a:=a_{\xi_0}$ and $A:=\dGa(a_{\xi_0})$.
Let $q\in C_0^{\infty}(\real)$ be  s.t. $0\leq q\leq 1$. Suppose that $q$ is  supported in  $[-R-1,2\epsz]$ and  $q=1$ on  $[-R, \epsz]$
for some $\epsz>0$ to be specified later. Moreover, suppose that $q'=q_+-q_-$, where $q_{\pm}\geq 0$, $\sqrt{q_{\pm}}\in C_0^{\infty}(\real)$,
$\supp\,q_{+}\subset [-R-1,-R]$, $\supp\, q_-\subset [\epsz,2\epsz]$.

We set $q^t:=q(a/t)$ and introduce the propagation observable
\begin{equation}
\Phi_{\xi}(t) = \chi(H(\xi))\Ga(\qt) \fr{A}{t}\Ga(\qt)\chi(H(\xi)). \label{direct-int-observable}
\end{equation}
Note that by Corollary~\ref{Cor-DomInv-Gamma} we have $\Gamma(q^t) \chi(H(\xi)) \Fock\subset D(H(\xi))\cap D(A)$, such that the computation above -- as well as the one to follow -- is meaningful.
It can easily be shown that $\Phi_{\xi}$ is bounded uniformly in time. Let us now study the Heisenberg derivative of $\Phi_{\xi}$: 
We set $q:=q^t$, $\chi:=\chi(H(\xi))$ and write:
\beqa
\mathbf{D}\Phi_{\xi}(t)=
\chi  \Ga(q) \mathbf{D}\bigl(\fr{A}{t}\bigr) \Ga(q)\chi
+\chi \mathbf{D}(\Ga(q))\fr{A}{t} \Ga(q)\chi
+\chi \Ga(q)\fr{A}{t} \mathbf{D}(\Ga(q))\chi. \label{minimal-heisenberg-derivative}
\eeqa
As for the first term on the r.h.s. above, we obtain:
\beqa
\chi \Ga(q) \mathbf{D}\bigl(\fr{A}{t}\bigr) \Ga(q)\chi=-\fr{1}{t}\chi \Ga(q) \fr{A}{t}   \Ga(q) \chi+\fr{1}{t}\chi \Ga(q)\i[H(\xi),A]^\circ\Ga(q)\chi.
\label{minimal-leading-term}
\eeqa
Concerning the first term on the r.h.s. of (\ref{minimal-leading-term}), we note the bound
\begin{align}
\fr{1}{t}\chi \Ga(q) \fr{A}{t}   \Ga(q) \chi & =
\fr{1}{t}\chi \Ga(q)\ti\chi \dGa( q_1,  (a/t) q_1)   \ti\chi   \Ga(q) \chi+O(t^{-2})\non\\
& \leq \fr{1}{t}\chi \Ga(q)\ti\chi \dGa( q_1,(a/t) q_2  )  \ti\chi   \Ga(q) \chi+O(t^{-2})\non\\  
& \leq  c\epsz\fr{1}{t} \chi \Ga(q)^2\chi+O(t^{-2}), \label{minimal-add-leading}
\end{align}
where $c$ is independent of $\epsz$ and $R$. Here $\ti\chi\in C_0^{\infty}(\real)_{\real}$ is s.t. $\ti\chi\chi=\chi$ and $\ti\chi$ is supported 
in $\mcJz$. We also chose functions $ q_1,q_2\in C_0^{\infty}(\real)$, $0\leq  q_1,q_2\leq 1$ s.t. $q q_1=q$, $\supp\, q_1\subset [-R-2,3\epsz ]$, 
$\supp\, q_{2}\subset [-3\epsz,3\epsz]$ and $q_1(s)s\leq q_2(s)s$ for all $s\in \real$.
In (\ref{minimal-add-leading}) we made use of Lemma~\ref{j_0Lemma}, the fact that $\Ga(q_1)\dGa(a)=\dGa( q_1, q_1 a)$, and
\beq
\|\dGa( q_1,  (a/t) q_2 )(1+N)^{-1}\|\leq \|(a/t)q_{2} \|\leq 3\epsz.
\eeq
In view of this bound it is clear that the constant  $c=3\|(1+N)\ti\chi\|\|\ti\chi\|$, appearing in (\ref{minimal-add-leading}), is independent of $R$,
$\epsz$.  As for the second term on the r.h.s. of (\ref{minimal-leading-term}), we write
\begin{align}
\fr{1}{t}\chi \Ga(q)\i[H(\xi),A]^\circ\Ga(q)\chi & = \fr{1}{t}\ti \chi\Ga(q) \chi   \i[H(\xi),A]^\circ\chi\Ga(q)\ti\chi+O(t^{-2})\non\\
& \geq \cm\fr{1}{t}\ti \chi\Ga(q) \chi^2\Ga(q)\ti\chi+O(t^{-2})\non\\
& = \cm\fr{1}{t}\chi\Ga(q)^2 \chi+O(t^{-2}). \label{application-of-Mourre}
\end{align}
The first step above follows from Lemma~\ref{j_0Lemma} and from the fact that
\beqa
\|[H(\xi),A]^\circ\Ga(q)\chi\|<\infty\quad \textup{and}\quad   
\|[H(\xi),A]^\circ\chi\|<\infty.
\eeqa
These bounds are  simple consequences of Corollary~\ref{Cor-DomInv-Gamma} and Lemma~\ref{minimal-lemma}, after rewriting 
\begin{align}
\non [H(\xi),A]^\circ\Ga(q)\chi & = \bigl\{[H(\xi),A]^\circ (N+1)^{-1}
\Ga(q)(H_0(\xi)+1)^{-3}\bigr\}\\
& \quad \times  \bigl\{ (N+1)(H_0(\xi) +1)^{-1}\bigr\}\bigl\{ (H_0(\xi)+1)^4\chi\bigr\} 
\end{align}
and recalling Corollary~\ref{Cor-technical2}.
In the second step of (\ref{application-of-Mourre}) we used (\ref{Mourre-est}) and in the last step once more Lemma~\ref{j_0Lemma}. 
Summing up, we got
\beqa
\chi  \Ga(q) \mathbf{D}\bigl(\fr{A}{t}\bigr) \Ga(q)\chi\geq (\cm-c\epsz)\fr{1}{t}\chi\Ga(q)^2 \chi+O(t^{-2}).
\eeqa

Let us now consider the remaining two terms on the r.h.s. of (\ref{minimal-heisenberg-derivative}). First, we note that  
\begin{align}
\chi \mathbf{D}(\Ga(q))\fr{A}{t} \Ga(q)\chi &= \chi \mathbf{D}(\Ga(q))\ti\chi \fr{A}{t} \Ga(q)\chi
+\chi\mathbf{D}(\Ga(q)) [\dGa(q,(a/t)q ),\ti\chi]\chi\non\\
&= \chi\ti\chi\mathbf{D}(\Ga(q))\ti\chi \fr{A}{t} \Ga(q)\chi+O(t^{-2}).
\end{align}
Here in the second step we applied Lemma~\ref{minimal-auxiliary} and  Proposition~\ref{comm-H-Ga}, which ensures that
$\chi\mathbf{D}(\Ga(q))=O(t^{-1})$.
As for the first term on the r.h.s. above, we obtain from Proposition~\ref{second-technical}:
\beqa
\ti\chi\mathbf{D}(\Ga(q))\ti\chi=\mathbf{D}(\ti\chi\Ga(q)\ti\chi)=\fr{1}{t}\cGao(j)^*\ti\chi^{(1)}C_t(1\otimes q')\ti\chi^{(1)} \cGao(j)+O(t^{-2}),
\eeqa
where we set $j:=j^t$ and $q':=(q')^t$. Thus, recalling that $q'=q_+-q_-$ and $\sqrt{q_{\pm}}\in C_0^{\infty}(\real)$, we can write
\beqa
\mathbf{D}(\ti\chi\Ga(q)\ti\chi)=\sum_{\si\in \{\pm\}}\si\fr{1}{t}\cGao(j)^*\ti\chi^{(1)} (1\otimes \sqrt{q_{\si} } )   C_t(1\otimes \sqrt{q_{\si}} )\ti\chi^{(1)} \cGao(j)+O(t^{-2}),
\eeqa
where we exploited the second property in~(\ref{C-second-property}). Thus we get
\begin{align}
& \chi\ti\chi \mathbf{D}(\Ga(q))\ti\chi \fr{A}{t} \Ga(q)\chi\non\\
& \quad =\sum_{\si\in \{\pm\}}\si\fr{1}{t}\chi\cGao(j)^*\ti\chi^{(1)} (1\otimes \sqrt{q_{\si} })C_t\dGa^{(1)}(q,(a/t)q) (1\otimes \sqrt{q_{\si}})\ti\chi^{(1)} \cGao(j)\chi\non\\
& \quad \qquad +O(t^{-2}),
\end{align}
where we made use of the fact that $\cGao(j)A\Ga(q)=\dGa^{(1)}(q,(a/t)q)\cGao(j)$ and then of Lemma~\ref{minimal-auxiliary} to exchange $\dGa^{(1)}(q,(a/t)q)$ with $\ti\chi^{(1)}$. 
Since $C_t\dGa^{(1)}(q,(a/t)q)=O(1)$, by the first part of property~(\ref{C-second-property}), we obtain for any $\Psi\in \Fock$:
\begin{align}
& \bigl|\bigl\lan \Psi_{t},\chi\ti\chi \mathbf{D}(\Ga(q))\ti\chi \fr{A }{t} \Ga(q)\chi\Psi_{t}\bigr\ran\bigr|\non\\
& \qquad \leq\sum_{\si\in \{\pm \}} \fr{c}{t} \|(1\otimes \sqrt{q_{\si}})\ti\chi^{(1)} \cGao(j)\chi\Psi_{t}\|^2
+O(t^{-2})\|\Psi\|^2.
\end{align}
This expression is integrable, uniformly in $\Psi$ from the unit ball in $\Fock$, by the Cauchy-Schwarz inequality
and Proposition~\ref{main-propagation-estimate}. (To apply this latter proposition we assume that $2\epsz<\cm$). 
The last term on the r.h.s of (\ref{minimal-heisenberg-derivative}) is treated
analogously.

Altogether, we have obtained that
\beq
\mathbf{D}\Phi_{\xi}(t)\geq  (\cm-c\epsz)\fr{1}{t}\chi\Ga(q)^2 \chi+B(t)+O(t^{-2}),
\eeq
where $c$ is independent of $\epsz$ and $R$, and $B(t)$ is integrable along the time-evolution provided that $2\epsz<\cm$.
By choosing $\epsz$ sufficiently small, we conclude the proof. 
\end{proof}

\section{Existence of some asymptotic observables}\label{time-convergence-proofs}
\setcounter{equation}{0}

As usually in the time-dependent approach to the problem of asymptotic completeness, the
central question is the existence of suitable asymptotic observables as strong limits, as time goes to infinity, 
of their approximating sequences.  In this section we
answer this question with the help of the propagation estimates established in Section~\ref{propagation-estimates}.
With this information at hand, the proof of asymptotic completeness, completed in Sections~\ref{Geometric-WO} and \ref{AC-section},
is relatively straightforward.
\bet\label{asymptotic-observable} Let $\chi\in C_0^{\infty}(\real)_{\real}$ be supported in $\mcJz$   and $\xi\in N_0$.  
Let $q\in C^{\infty}(\real)$ be s.t. $0\leq q\leq 1$, $q'\in C_0^{\infty}(\real)$ and $\supp\, q'\subset (-\infty,\cm)\backslash [-\eps,\eps]$, 
for some $0<\eps<\cm$, where   $\cm$ appeared in (\ref{Mourre-est}).  
Then the following strong limit exists
\begin{equation}
Q^+(H(\xi))\chi := \slim_{t\to \infty} \e^{\i tH(\xi)} \Ga(q^t)\e^{-\i tH(\xi)}\chi ,\label{standard-asymp-const}
\end{equation}
and commutes with bounded Borel functions of $H(\xi)$. (Here we set $\chi:=\chi(H(\xi))$). Moreover, if $\supp\, q \subset (-\infty,\epsz)$, where
$\epsz$ appeared in Proposition~\ref{minimal-velocity-estimate}, then $Q^+(H(\xi))\chi=0$.
\eet
\begin{proof}  We set $q:=q^t$  and
define $\Phi(t)=\chi\Ga(q)\chi$. Making use of Proposition~\ref{second-technical}, we obtain 
\begin{equation}
\mathbf{D}\Phi(t) = \fr{1}{t}\cGao(j)^*\ti\chi^{(1)}C_t(1\otimes q')\ti\chi^{(1)} \cGao(j)+O(t^{-2}),
\end{equation}
where we set $j:=j^t$. Let $\ti q \in C_0^{\infty}(\real)$ be supported in  $(-\infty,\cm)\backslash [-\eps,\eps]$
and equal to one on the support of $q'$. Then, making use of the second property 
in~(\ref{C-second-property}), we can write
\beqa
\mathbf{D}\Phi(t)=\fr{1}{t}\cGao(j)^*\ti\chi^{(1)}(1\otimes \ti q)C_t(1\otimes q')\ti\chi^{(1)} \cGao(j)+O(t^{-2}).
\eeqa
Since $C_t=O(1)$, we obtain 
\begin{align}
|\lan \Psi_{1,t},\mathbf{D}\Phi(t) \Psi_{2,t}\ran| & \leq  \fr{c}{t} \|(1\otimes \ti q)\ti\chi^{(1)}\cGao(j)\Psi_{1,t}\|\,\|(1\otimes q')\ti\chi^{(1)} \cGao(j)\Psi_{2,t}\|\non\\
&\quad  +O(t^{-2})\|\Psi_1\|\|\Psi_2\|, \label{A-const-H-deriv}
\end{align}
where $\Psi_i\in \Fock$ and $\Psi_{i,t}=\e^{-\i t H(\xi)}\Psi_i$, $i\in \{1,2\}$. By integrating both sides of this inequality
over some time interval, applying the Cauchy-Schwarz inequality to the integral of the first term on the r.h.s. 
of (\ref{A-const-H-deriv}), taking supremum over $\Psi_1$ s.t. $\|\Psi_1\|\leq 1$ and exploiting
 Proposition~\ref{main-propagation-estimate}, we obtain  strong
convergence in (\ref{standard-asymp-const}) by the Cook method. 
Now we choose $\ti\chi\in C_0^{\infty}(\real)_{\real}$, supported in $\mcJz$ and s.t. $\ti\chi\chi=\chi$. Lemma~\ref{j_0Lemma} gives
\beqa
\e^{\i tH(\xi)}\Ga(q)\e^{-\i tH(\xi)}\chi=\e^{\i tH(\xi)}\ti \chi \Ga(q) \ti \chi\e^{-\i tH(\xi)}\chi+O(t^{-1}). \label{shifting-chi}
\eeqa
 The second term on the r.h.s. above  converges strongly by the above consideration. 
By a computation analogous to (\ref{shifting-chi}) one shows that $Q^+(H(\xi))\chi$  commutes with $H(\xi)$. This concludes the proof of 
(\ref{standard-asymp-const}). 

Let us now show the last statement of the theorem, i.e.  that for $q$ s.t.  $\supp\, q \subset (-\infty,\epsz)$ there holds
\beqa
Q^+(H(\xi))\chi=0. \label{vanishing-of-asymptotic-obs}
\eeqa
Let $q_{R}\in C_0^{\infty}(\real)$, $0\leq q_R\leq 1$, be s.t.  $q_{R}(s)=q(s)$ for $s\in (-R,\infty)$ and $q_{R}=0$ for $s<-R-1$, for some  $R>\cm$.
Then, coming back to the explicit notation $q^t=q(a/t)$ and $q_R^t=q_R(a/t)$,  we obtain from Proposition~\ref{minimal-velocity-estimate} and from (\ref{standard-asymp-const}) that
\beqa
\slim_{t\to \infty} \e^{\i tH(\xi)} \Ga(q^t_R)\e^{-\i tH(\xi)}\chi=0.
\eeqa
On the other hand, Lemma~\ref{negative-spectrum-of-a} gives that
\beqa
\| \big(\Ga(q^t_R)-\Ga(q^t)\big)\e^{-\i tH(\xi)}\chi \Psi\|=O(R^{-1})
\eeqa
uniformly in $t$ for $\Psi$ from some dense domain in $\Fock$. This concludes the proof of (\ref{vanishing-of-asymptotic-obs}). 
\end{proof}

\bet\label{asymptotic-observable-two} Let $\chi\in C_0^{\infty}(\real)_{\real}$ be supported in $\mcJz$   and $\xi\in N_0$.  
Let $p\in C^{\infty}(\real)$ be s.t. $0\leq p\leq 1$, $p'\in C_0^{\infty}(\real)$ and $\supp\, p'\subset (-\infty,\cm)$, 
where   $\cm$ appeared in (\ref{Mourre-est}).  
Then the following strong limit exists
\beqa
Q^+(H^{\ex}(\xi))_{\infty}\chi^{\ex}:=
\slim_{t\to \infty} \e^{\i tH^{\ex}(\xi)}\big( 1\otimes \Ga(p^t)\big)\e^{-\i tH^{\ex}(\xi)}\chi^{\ex},\label{infty-asympt-const}
\eeqa
and commutes with bounded Borel functions of  $H^{\ex}(\xi)$. (Here we set $\chi^{\ex}:=\chi(H^{\ex}(\xi))$).

If, in addition, $p=1$ on $[\cm,\infty)$, then 
\beq
Q^+(H^{\ex}(\xi))_{\infty}\chi^{\ex}= \chi^{\ex}. \label{Q-infty-projections}
\eeq
\eet
\begin{proof} Concerning the proof of (\ref{infty-asympt-const}), we set $p:=p^t$, choose $\ti\chi\in C_0^{\infty}(\real)_{\real}$, supported in $\mcJz$ and 
 s.t. $\chi\ti\chi=\chi$.  We  note the relation
\beq
[\ti\chi^{\ex}, 1\otimes \Ga(p)] \chi^{\ex}=[\ti\chi^{(1)}, 1\otimes p] \chi^{(1)}=O(t^{-1}), \label{Gap-comm}
\eeq
which is a consequence of the decomposition~(\ref{Hamiltonian-decomp}), Lemma~\ref{expansion-truncation}, (which ensures that only $\ell=0$ and $\ell=1$ terms survive in this expansion), and of Lemma~\ref{Helffer-second-component-lemma}. Thus it suffices to
prove strong convergence of $t\to \e^{\i tH^{\ex}(\xi)}\chi^{\ex}(1\otimes \Ga(p))\chi^{\ex}\e^{-\i t H^{\ex}(\xi)}$ for any 
$\chi\in C_0^{\infty}(\real)_{\real}$ supported in $\mcJz$.  We apply  decomposition~(\ref{Hamiltonian-decomp}) and Lemma~\ref{expansion-truncation} to this expression. The $\ell=0$ component gives $\chi(H(\xi))^2$ which is time-independent. The $\ell=1$ component has the form
\beq
t\to \chi^{(1)}\e^{\i tH^{(1)}(\xi)} (1\otimes  p  )\e^{-\i tH^{(1)}(\xi)}\chi^{(1)}. \label{l=1-infty-prop-obs}
\eeq
We consider the propagation observable $\Phi_{\infty}(t):=\chi^{(1)}(1\otimes  p  )\chi^{(1)}$. To prove the strong convergence
of (\ref{l=1-infty-prop-obs}) we will show  integrability of the Heisenberg derivative 
\beqa
\bD^{(1)}\Phi_{\infty}(t)=\chi^{(1)}\bigl(-\fr{1}{t} 1\otimes (a/t)p'+\i[H^{(1)}(\xi), 1\otimes p]\bigr)\chi^{(1)}. \label{structure-of-derivative-beginning}
\eeqa
By Proposition~\ref{q-a-commutator} 
\beqa
\chi^{(1)}[H^{(1)}(\xi), 1\otimes p]\chi^{(1)}=\fr{1}{t}\chi^{(1)}C (1\otimes p')\chi^{(1)}+O(t^{-2}),
\eeqa
where $C$ is a bounded operator on $\Fock\otimes\Fock^{(1)}$, which satisfies 
\beqa
[C,1\otimes p_1^t]=O(t^{-1}) \label{commutator-vanishing}
\eeqa
for any $p_1\in C^{\infty}(\real)_{\real}$ s.t. $p'_1\in C_0^{\infty}(\real)$. Let $\ti p\in C_0^{\infty}(\real)_{\real}$ be supported in  $(-\infty,\cm)$ 
and be equal to one on the support of $p'$. Then, due to  (\ref{commutator-vanishing}), we obtain
\beqa
\bD^{(1)}\Phi_{\infty}(t)=\fr{1}{t}\chi^{(1)}(1\otimes \ti p)\ti C_t (1\otimes p')\chi^{(1)}+O(t^{-2}), \label{structure-of-derivative-end}
\eeqa
where $\ti C_t=-1\otimes (a/t)\ti p+C$ is a family of operators which is uniformly bounded in $t$. Thus we can write
\begin{align}
 |\lan \Psi_{1,t},\bD^{(1)}\Phi_{\infty}(t) \Psi_{2,t}\ran| & \leq  \fr{c}{t} \|(1\otimes \ti p)\chi^{(1)}\Psi_{1,t}\|   \|(1\otimes p')\chi^{(1)}\Psi_{2,t}\| +O(t^{-2})\|\Psi_{1}\| \|\Psi_{2}\|,
\end{align}
where $\Psi_i\in \Fock\otimes \Fock^{(1)}$, $\Psi_{i,t}=\e^{-\i tH^{(1)}(\xi)}\Psi_i$, $i\in\{1,2\}$.
With the help of this bound, the Cauchy-Schwarz inequality and Proposition~\ref{main-propagation-estimate}, we obtain  strong
convergence of (\ref{l=1-infty-prop-obs}) by the Cook method. This completes the proof of~(\ref{infty-asympt-const}). To show that  
the limit commutes with bounded functions of the Hamiltonian, one proceeds analogously as in relation~(\ref{Gap-comm}) above.

Let us now proceed to the proof of (\ref{Q-infty-projections}). We come back to the explicit notation  $p^t=p(a/t)$. As we have shown above
(cf. formula~(\ref{l=1-infty-prop-obs}))
\begin{align}
&  \e^{\i tH^{\ex}(\xi)}\bigl( 1\otimes \Ga(p^t)\bigr)\e^{-\i t H^{\ex}(\xi)}\chi^{\ex}\non\\
& \qquad  = \chi(H(\xi)) \oplus \ti\chi^{(1)}\e^{\i t H^{(1)}(\xi)}( 1\otimes p^t)\e^{-\i t H^{(1)}(\xi)}\ti\chi^{(1)} \chi^{(1)}+O(t^{-1}).
\end{align}
Thus it suffices to show that for $\Psi\in\Fock\otimes\Fock^{(1)}$  there holds
\beqa
\lim_{t\to\infty}\lan\Psi_t,\chi^{(1)}(1\otimes p(a/t))\chi^{(1)}\Psi_t\ran=\lan \Psi, (\chi^{(1)})^2 \Psi\ran. \label{projection-property}
\eeqa
Setting $q:=1-p$, this is equivalent to
\beqa
\lim_{t\to\infty}\lan\Psi_t,\chi^{(1)}(1\otimes q(a/t))\chi^{(1)}\Psi_t\ran=0. \label{decay-of-q}
\eeqa 
We note that $\supp\,q\subset (-\infty,\cm)$.
Let us choose a function $q_R\in C_0^{\infty}(\real)$, $0\leq q_R\leq 1$, which coincides with 
$q$ on $(-R, \infty)$, but is equal to zero on $(-\infty, -R-1]$ for some  $R>1$. We obtain from (\ref{single-particle-prop-est})  that
\beqa
\lim_{t\to\infty} \lan\Psi_{t},\chi^{(1)}(1\otimes q_R(a/t))\chi^{(1)}\Psi_{t}\ran=0,
\eeqa 
where we exploited the first part of this proposition to obtain convergence.

 Now let $\Psi$ be an element of the domain of $1\otimes a$. Then
$\Psi$ belongs to the domain of $(1\otimes a)\chi^{(1)}$, since $H^{(1)}(\xi)$ is of class $C^{1}(1\otimes a)$. 
Cf. \cite[Lemma 2.2, Proposition~2.8]{MR12}.
Furthermore,  the operator representing the commutator form $\i [H^{(1)}(\xi),1\otimes a]$ is
given by $\i[H^{(1)}(\xi),1\otimes a]^\circ = -\nabla\Omega(\xi-\dGa^{(1)}(k))\cdot (1\otimes v) 
+ 1\otimes \nabla\om\cdot v$,
which is $H^{(1)}(\xi)$-bounded. Consequently,
the group  $\e^{-\i t H^{(1)}(\xi)}$ preserves $D(1\otimes a)\cap D(H^{(1)}(\xi))$. Now we set $\ti q_R:=q-q_R$ and compute
\begin{align}
&  \bigl\lan\Psi_{t},\chi^{(1)}(1\otimes \ti q_R(a/t))\chi^{(1)}\Psi_{t}\bigr\ran\non\\
&\qquad =\bigl\lan \Psi, \chi^{(1)}\e^{\i tH^{(1)}(\xi) }\Bigl(1\otimes \fr{\ti q_R(a/t)}{(a/t) }\Bigr) \e^{-\i tH^{(1)}(\xi)}  
\e^{\i tH^{(1)}(\xi)}\Bigl(1\otimes \fr{a}{t}\Bigr)\e^{-\i tH^{(1)}(\xi) }\chi^{(1)}\Psi \bigr\ran\non\\
&\qquad =  \bigl\lan \Psi,\chi^{(1)} \e^{\i tH^{(1)}(\xi) }\Bigl(1\otimes \fr{\ti q_R(a/t)}{(a/t) }\Bigr) \e^{-\i tH^{(1)}(\xi)}\non\\ 
& \qquad\quad \times \fr{1}{t}\int_0^tdt'  \e^{\i t'H^{(1)}(\xi)}\i[H^{(1)}(\xi), 1\otimes a]^\circ\e^{-\i t'H^{(1)}(\xi) }\chi^{(1)}\Psi \bigr\ran\non\\ 
& \qquad \quad + \bigl\lan \Psi, \chi^{(1)}\e^{\i tH^{(1)}(\xi) }\Bigl(1\otimes \fr{\ti q_R(a/t)}{(a/t) }\Bigr) \e^{-\i tH^{(1)}(\xi)} \fr{1}{t} (1\otimes a) \chi^{(1)}\Psi \bigr\ran. 
\end{align}
Hence, making use of the fact that $\|[H^{(1)}(\xi), 1\otimes a]^\circ\chi^{(1)}\|<\infty$, (cf. Lemma~\ref{minimal-lemma}), we obtain
\beqa
|\lan\Psi_{t},(1\otimes \ti q_R(a/t))\Psi_{t}\ran|\leq \fr{c}{R}\|\Psi\|^2+\fr{c}{Rt} \|\Psi\|\, \|(1\otimes a)\chi^{(1)}\Psi\|.
\eeqa
Since this expression can be made arbitrarily small, uniformly in $t$, by choosing $R$ sufficiently large, we have
proven (\ref{decay-of-q}) for $\Psi$ in the domain of $(1\otimes a)$, which is dense. This concludes the proof. \end{proof}

\bet\label{wave-operators-theorem}  Let $\chi\in C_0^{\infty}(\real)_{\real}$ be supported in $\mcJz$   and $\xi\in N_0$.
Let $j_0,j_{\infty}$ be as specified in Definition~\ref{j-definition-a}, s.t. $j_0^2+j_{\infty}^2=1$, $\supp\, j_0'\subset (-\infty,\cm)$ and hence
$\supp\ j_{\infty}' \subset (-\infty,\cm)$, where $\cm$ appeared in (\ref{Mourre-est}).  Let $q=(q_0,q_{\infty}):=(j_0^2,j_{\infty}^2)$ 
(in particular $q_0+q_{\infty}=1$). Then the following strong limits exist: 
\begin{align}
W^{+}(q^t)(\xi)\chi^{\ex} & := \slim_{t\to\infty}\e^{\i tH(\xi)}\cGa(q^t)^*\e^{-\i tH^{\ex}(\xi) }\chi^{\ex},\\
W^{+}(q^t)(\xi)^*\chi & := \slim_{t\to\infty}\e^{\i tH^{\ex}(\xi)}\cGa(q^t)\e^{-\i tH(\xi) }\chi,
\end{align}
where we set $\chi:=\chi(H(\xi))$ and $\chi^{\ex}:=\chi(H^{\ex}(\xi))$.
These operators intertwine (bounded Borel functions of) $H(\xi)$ and $H^{\ex}(\xi)$.
\eet
\begin{proof}  We set $q:=q^t$, $j:=j^t$  and  consider the asymptotic observable $\Phi(t)=\chi^{\ex}\cGa(q)\chi$. 
Its non-symmetric Heisenberg derivative is given by
\beqa
\wt{\mathbf{D}}\Phi(t)=\chi^{\ex}\big(\dcGa(q,\pa_t q)+\i H^{\ex}(\xi)\cGa(q)-\i\cGa(q)H(\xi)   \big)\chi. 
\label{non-symmetric-derivatives}
\eeqa
The first term on the r.h.s. above can be rearranged as follows
\begin{align}
\chi^{\ex}\dcGa(q,\pa_t q) & = 2\chi^{\ex}\dGa^{\ex}(\un j, \un {\pa_t j})\cGa(j)\non\\
& = 2\chi^{\ex}\big(\dGa(\qz,\pa_t \qz)\otimes \Ga(\qi)+\Ga(\qz)\otimes \dGa(\qi,\pa_t \qi)\big)\cGa(j),
\end{align}
 where $\un j=\diag(j_0,j_{\infty})$, $\un{ \pa_tj}:=\diag(\pa_t j_0,\pa_t j_{\infty})$ are propagation observables  on $\mfh\oplus\mfh$ and
in the last step  we made use of Lemma~{\ref{more-fock-combinatorics}}. As for the remaining terms on the r.h.s. of (\ref{non-symmetric-derivatives}), 
we obtain from Lemma~\ref{commutator-exchange}  that
\beqa
\chi^{\ex}\big(H^{\ex}(\xi)\cGa(q)-\cGa(q)H(\xi)\big)\chi
=2\chi^{\ex}[H^{\ex}(\xi),\Ga^{\ex}(\un j)]\cGa(j)\chi+O(t^{-2}).
\eeqa
Thus, altogether, we get
\begin{align}
\wt{\mathbf{D}}\Phi(t) & = 2\chi^{\ex}\bigl(\dGa^{\ex}(\un j, \un {\pa_t j})+\i[H^{\ex}(\xi),\Ga^{\ex}(\un j)] \bigr)  \cGa(j)\chi+O(t^{-2}) \label{non-symm-deriv-one} \\
& = 2\chi^{\ex}\bigl(\dGa^{\ex}(\un j, \un {\pa_t j})+\i[H^{\ex}(\xi),\Ga^{\ex}(\un j)] \bigr)\chi^{\ex}  \cGa(j)\chi+O(t^{-2}), \label{non-symm-deriv-two} 
\end{align}
where in the last step we chose $\ti\chi\in C_0^{\infty}(\real)_{\real}$, supported in $\mcJz$ and s.t.
 $\ti\chi\chi=\chi$. To exchange $\cGa(j)\ti\chi$ with  $\ti\chi^{\ex}  \cGa(j)$, we made use of Lemma~\ref{Helffer} and of
the fact that
\beq
\chi^{\ex}[H^{\ex}(\xi),\Ga^{\ex}(\un j)]=O(t^{-1}),
\eeq
which follows from  Proposition~\ref{comm-H-Ga}.

Now we apply decomposition~(\ref{Hamiltonian-decomp}) of  $H^{\ex}(\xi)$. As for the   $\ell=0$ component, 
we obtain from (\ref{non-symm-deriv-one})
\begin{align}
\wt{\mathbf{D}}\Phi^{(0)}(t) & = \chi\bigl(2\dGa(j_0,\pa_t j_0)+2\i[H(\xi),\Ga(j_0)]\bigr)\Ga(j_0)\chi+O(t^{-2})\non\\
& = \chi\bigl(\dGa(q_0,\pa_t q_0)+\i[H(\xi),\Ga(q_0)]\bigr)\chi+O(t^{-2}). \label{asymptotic-const-reappearance}
\end{align}
To justify the second step above we make use of  the relations
\begin{align}
& \dGa(q_0,\pa_t q_0)=2\dGa(j_0,\pa_tj_0)\Ga(j_0),\\
& \chi [H(\xi),\Ga(q_0)]\chi=2\chi [H(\xi),\Ga(j_0)]\Ga(j_0)\chi+\chi [\Ga(j_0),[H(\xi),\Ga(j_0)]]\chi, \label{HGachi}
\end{align}
and of the fact that  the last term on the r.h.s. of (\ref{HGachi}) is $O(t^{-2})$ by Lemma~\ref{Ga-H-Ga}. 
We note that the first term on the r.h.s. of (\ref{asymptotic-const-reappearance}) is the Heisenberg derivative of $\Phi_0(t):=\chi\Ga(q_0)\chi$. 
We recall that $q_0'=(j_0^2)'=2j_0j_0'$ and, by Definition~\ref{j-definition-a}, $j_0$ is equal to one in some interval $[-\eps,\eps]$, $0<\eps<\cm$.
Thus $\supp\,q_0'\subset (-\infty, \cm)\backslash [-\eps,\eps]$ and the Heisenberg derivative of $\Phi_0$ can be shown to be integrable along
the time evolution as in the proof of Proposition~\ref{asymptotic-observable}.

Let us proceed to the $\ell=1$ component: Let $\chi^{(1)}:=\chi(H^{(1)}(\xi))$. From~(\ref{non-symm-deriv-two}) we obtain 
\begin{align}
\wt{\mathbf{D}}\Phi^{(1)}(t) & = 2\chi^{(1)}\bigl(\dGa(\qz,\pa_t\qz)\otimes \qi+\Ga(\qz)\otimes \pa_t\qi\non\\
& \quad +\i[H^{(1)}(\xi),\Ga(\qz)\otimes \qi]\bigr)\ti\chi^{(1)}\cGao(j)\chi+O(t^{-2}). \label{non-symm-deriv}
\end{align}
We note  that
\begin{align}
& \chi^{(1)}(\dGa(\qz,\pa_t\qz)\otimes \qi)\ti\chi^{(1)}=O(t^{-2}), \label{wave-op-decay-one}\\
& \chi^{(1)}[H^{(1)}(\xi),\Ga(\qz)\otimes 1](1\otimes \qi)\ti\chi^{(1)}=O(t^{-2}), \label{wave-op-decay-two} 
\end{align}
where (\ref{wave-op-decay-one}) follows from Corollary~\ref{resolvents} and (\ref{wave-op-decay-two}) is a consequence of 
Proposition~\ref{new-comm-prop}. Thus we obtain from~(\ref{non-symm-deriv}) that
\begin{align}
\wt{\mathbf{D}}\Phi^{(1)}(t) & = 2\chi^{(1)}(\Ga(\qz)\otimes 1) \bD^{(1)}( 1\otimes\qi)\ti\chi^{(1)}\cGao(j)\chi+O(t^{-2})\non\\
& = 2\chi^{(1)}(\Ga(\qz)\otimes 1)\ti \chi^{(1)} \bD^{(1)}( 1\otimes\qi)\ti\chi^{(1)} \cGao(j)\chi+O(t^{-2}), \label{a-derivative}
\end{align}
where in the last step we made use of the fact that $[\ti\chi^{(1)}, \Ga(\qz)\otimes 1)]=O(t^{-1})$, which follows from Lemma~\ref{j_0Lemma}
and of the estimate $ \bD^{(1)}( 1\otimes\qi)\ti\chi^{(1)}=O(t^{-1})$, which
is a consequence of Proposition~\ref{comm-H-dGa}. Proceeding as in (\ref{structure-of-derivative-beginning})--(\ref{structure-of-derivative-end}) 
above, we obtain that
\beqa
\ti\chi^{(1)} \bD^{(1)}( 1\otimes\qi)\ti\chi^{(1)}=\fr{1}{t}\ti\chi^{(1)}(1\otimes \ti j_{\infty})\ti C_t (1\otimes j_{\infty}')\ti\chi^{(1)}+O(t^{-2}),
\eeqa
where  $\ti j_{\infty}\in C_0^{\infty}(\real)_{\real}$ is supported in  $(-\infty,\cm)$ and is equal to one on the support of $j_{\infty}'$, and
$t\to\ti C_t$ is a family of operators, which is uniformly bounded in $t$. Thus we get
\begin{align} \label{wave-operators-finite}
& |\lan \Psi_{1,t}, \wt{\mathbf{D}}\Phi^{(1)}(t) \Psi_{2,t}\ran| \\
\non &\qquad \leq\fr{c}{t}  \| (1\otimes  |\ti j_{\infty} |) \ti\chi^{(1)}  \Psi_{1,t}\|\,\|(1\otimes  |j_{\infty}'|)\ti\chi^{(1)} \cGao(j)\chi \Psi_{2,t}\|+O(t^{-2})\|\Psi_1\|\,\|\Psi_2\|,
\end{align}
where $\Psi_{1,t}=\e^{-\i tH^{(1)}(\xi)}\Psi_1$ and $\Psi_{2,t}=\e^{-\i t H(\xi)}\Psi_2$ for some arbitrary vectors $\Psi_1\in\Fock\otimes \Fock^{(1)}$,
$\Psi_2\in\Fock$.  Due to the
support properties of $\ti j_{\infty}$ and the fact that $\supp\, j_{\infty}'\subset (-\infty, \cm)\backslash [-\eps,\eps]$, (since $j_0=1$ on $[-\eps,\eps]$
and $j_0^2+j_{\infty}^2=1$), we can apply 
Proposition~\ref{main-propagation-estimate} to show integrability of (\ref{wave-operators-finite}).

Thus we obtained that both $t\to \Phi(t)$ and $t\to \Phi(t)^*$ converge strongly. Now the result follows
by an application of Lemma~\ref{Helffer}, which also gives the intertwining property.  \end{proof}

\section{\Localized wave operators}\label{Geometric-WO}

\setcounter{equation}{0}
In this section we construct \localized wave operators, defined on a small neighbourhood $\mco$ of any point 
$(\xi_0,\la_0)\in \mcE^{(1)} \backslash (\mcT^{(1)}\cup\Exc\cup \Sigma_{\pp} )$. 
The adjective `\localized',   used to  describe the wave operators constructed in this section, requires a brief
clarification: On the one hand  it alludes  to their
construction in an energy-momentum spectral subspace of the small set  $\mco$. On the other hand it refers  to the 
Sigal-Soffer type localization
onto  a spectral subspace, constructed using the one-body propagation observable $\ta_{\xi_0}$ (cf. expression~(\ref{tilde-a}) below) and describing classically permitted scattering configurations. The fact that these \localized wave operators turn out to coincide with the conventional wave operators is due to the Mourre estimate preventing scattering states from occupying classically forbidden configurations in the large time limit.
\bed\label{small-neighbourhood} We set $\mcoz=\Nz\times\mcJz$, where $\Nz$ and $\mcJz$ appeared in Theorem~\ref{Mourre-estimate-theorem}
and choose an open bounded neighbourhood $\mco$ of $(\xi_0,\la_0)$, whose closure is contained in the interior of $\mcoz$.
\eed
\noindent We recall from Theorem~\ref{Mourre-estimate-theorem} that with the set $\mcoz$ we can associate 
the observable $a_{\xi_0}=\h\{v_{\xi_0}\cdot \i\nabla_k+\i\nabla_{k}\cdot v_{\xi_0} \}$ which enters into the Mourre estimates. 
We define the following counterpart of this observable 
\beqa
\ta_{\xi_0}:=\h\{(1\otimes v_{\xi_0})\cdot z+z\cdot (1\otimes v_{\xi_0}) \}, \label{tilde-a}
\eeqa
where $z:=1\otimes x-y \otimes 1$ on $\K\otimes\mfh$ is the relative distance between the electron and the boson and we set $x:=\i\nabla_k$. 
In the remaining part of the section we will set $v:=v_{\xi_0}$, $a:=a_{\xi_0}$ and $\ta:=\ta_{\xi_0}$,
unless stated otherwise.

\begin{remark}\label{Rem-ExtSecQuant} We will make use of an 
extension of the expression $\Gamma(q)$ to contractions $q$ on 
$\K\otimes \mfh$, which was discussed in \cite[Remark~1.1]{MR12}.
 We leave it to the reader to check that this remark applies, whenever we meet second quantization in the extended sense discussed here.
 Furthermore we will also need to work with such operators $q$ viewed as acting in $\K\otimes\Fock\otimes\mfh$, but skipping over the middle $\Fock$-component. This is what is meant $q_\infty(\ta/t)$
in Theorem~\ref{final-wave-operators}.

Finally, we warn the reader that we will be abusing notation, in particular in  Proposition~\ref{more-asymptotic-obs}, by writing  $1 \otimes \Ga(p_{\de}(\ta/t))$,
for the operator $\mcE (\Gamma(p_\de(\ta/t)) \otimes 1) \mcE$, where $\mcE\colon \hil^\ex \to\hil^\ex$ is the exchange operator defined on simple tensors by $\mcE(f\otimes\eta\otimes \eta') = f \otimes\eta'\otimes\eta$. 
\end{remark}

Before we proceed to the construction of asymptotic objects in $\mco$,  we need one more definition:
\bed\label{delta-definition} Let $0<c_0<\epsz$, where $\epsz$ appeared in Proposition~\ref{minimal-velocity-estimate}. 
Let $q\in C^{\infty}(\real)$ be s.t.  $0\leq q\leq 1$, $q(s)=1$ for $s\leq c_0/2$ and $q(s)=0$ for 
$s>c_0$. Furthermore, $q$ is a non-increasing function.
We write $q_{\de}(s)=q(s/\de)$ and $q_{\de}^{t}(s)=q_{\de}(s/t)$ for $0<\de\leq 1$.
\eed
\bep\label{asymp-obs-del-lim} Let $q$ be  as specified in Definition~\ref{delta-definition}.
Then the following strong limit exists
\begin{align}
Q_{\de}^+(H)& := \slim_{t\to \infty} \e^{\i tH} \Ga(q_{\de}(\ta/t))\e^{-\i tH}E(\mco\cup \Sigma_{\iso}), \label{asymptotic-observable-approx}
\end{align}
and equals $E(\Sigma_{\iso})$. In particular, $Q^+(H):=Q_{\de}^+(H)$ is independent of $\de$. 
\eep

\begin{proof} Let $\chi\in C_0^{\infty}(\real^{\nu+1})_{\real}$ be equal to one on $\mco$ and be supported in $\mcoz$. 
Then $E(\mco)=\chi(P,H)E(\mco)$ and we can write
\begin{align}
Q_{\de}^t(H)E(\mco)& = \e^{\i tH} \Ga(q_{\de}(\ta/t))\e^{-\i tH}\chi(P,H)E(\mco)\non\\
& = I_{\mathrm{LLP}}^*\Bigl(\int^{\oplus}_{\real^{\nu}} d\xi\, \e^{\i tH(\xi)} \Ga(q_{\de}(a/t))\e^{-\i tH(\xi)}\chi(\xi,H(\xi))\Bigr) I_{\mathrm{LLP}}E(\mco),
\label{time-approx-asympt-obs}
\end{align}
where we denoted by $Q_{\de}^t(H)$ the approximants on the r.h.s. of (\ref{asymptotic-observable-approx}).
It follows from Theorem~\ref{asymptotic-observable}, the dominated convergence theorem and the properties 
of $q$ specified in Definition~\ref{delta-definition}, that this expression converges to zero as $t\to\infty$.

Let us consider now $Q_{\de}^t(H)E(\Sigma_{\iso})$. 
We recall from Section~\ref{isolated-spectrum} that $\Sigma_{\iso}$ is a union  
of graphs of at most countably many analytic functions $p\colon N\to\real$, where
$N\subset\real^\nu$ are open sets. Let $\mathcal G$ be a graph of
one of these functions. Then we obtain
\beqa
Q_{\de}^t(H)E(\mathcal G)\Psi=I_{LLP}^*\int_{N}^{\oplus} d\xi\, \e^{\i t(H(\xi)-p(\xi))}\Ga(q_{\de}(a/t))
 \Psi_{\xi},
\eeqa
where $\real^{\nu}\ni \xi\to \Psi_{\xi}\in\Fock$ is a square-integrable Borel function representing $\Psi$. Now by 
the dominated convergence theorem $\lim_{t\to\infty}Q_{\de}^t(H) E(\mathcal G)\Psi=E(\mathcal G)\Psi$. \end{proof}

\bep\label{more-asymptotic-obs} Let $q$ and $1-p$ be as specified in Definition~\ref{delta-definition}. 
Then the following strong limits exist 
\begin{align}
Q_{\de}^+(H^{\ex})_0 & := \slim_{t\to\infty}\e^{\i tH^{\ex}}\bigl(\Ga(q_{\de}(\ta/t))\otimes 1)\bigr)\e^{-\i tH^{\ex}}E^{\ex}(\mco),\label{left-obs} \\
Q_{\de}^+(H^{\ex})_{\infty} & := \slim_{t\to\infty}\e^{\i tH^{\ex}}\bigl( 1\otimes \Ga(p_{\de}(\ta/t))\bigr)\e^{-\i tH^{\ex}}E^{\ex}(\mco), \label{right-obs}\\
Q_{\de}^+(H^{\ex}) & := \slim_{t\to\infty}\e^{\i tH^{\ex}}\bigl(\Ga(q_{\de}(\ta/t))\otimes \Ga(p_{\de}(\ta/t))\bigr)\e^{-\i tH^{\ex}}E^{\ex}(\mco),
\label{extended-obs}
\end{align}
and are independent of $\de$ (thus we can omit the subscript $\de$). Moreover there holds
\begin{align}
Q^+(H^{\ex})_0& = (E(\Sigma_{\iso})\otimes 1)E^{\ex}(\mco),\label{left-relation}\\
Q^+(H^{\ex})_{\infty}& = E^{\ex}(\mco), \label{right-relation} \\
Q^+(H^{\ex})& = (E(\Sigma_{\iso})\otimes 1)E^{\ex}(\mco). \label{extended-relation}
\end{align}
\eep
\begin{proof}  Let $\chi\in C_0^{\infty}(\real^{\nu+1})_{\real}$ be equal to one on $\mco$ and be supported in $\mcoz$. 
Then $E^{\ex}(\mco)=\chi(P^{\ex},H^{\ex})E^{\ex}(\mco)$ and we can write
\begin{align}
Q_{\de}^+(H^{\ex})_0 & = \slim_{t\to\infty}\e^{\i tH^{\ex}}\bigl(\Ga(q_{\de}(\ta /t))\otimes 1 \bigr)\e^{-\i tH^{\ex}}E^{\ex}(\mco)\non\\
& = \slim_{t\to\infty} \bigl(\e^{\i tH }\Ga(q_{\de}(\ta /t))\e^{-\i tH }E(\Sigma_{\iso}) \otimes 1\bigr)E^{\ex}(\mco)\non\\
& \quad +\slim_{t\to\infty} \bigl(\e^{\i tH }\Ga(q_{\de}(\ta /t))\e^{-\i tH }E(\mco) \otimes \vac\tvac \bigr)E^{\ex}(\mco)\non\\
& = \bigl(E(\Sigma_{\iso}) \otimes 1\bigr)E^{\ex}(\mco),
\end{align}
where in the first step we made use of  Lemma~\ref{energetic-considerations} and in the second step of Proposition~\ref{asymp-obs-del-lim}
to obtain the existence of the limit. This proves (\ref{left-obs}) and (\ref{left-relation}).

Making use of  Theorem~\ref{asymptotic-observable-two}, we obtain that there exists the limit
\begin{align}\label{infinite-component}
& Q_{\de}^+(H^{\ex})_\infty=\slim_{t\to\infty}\e^{\i tH^{\ex}}\bigl(1 \otimes \Ga(p_{\de}(\ta/t)) \bigr)\e^{-\i tH^{\ex}}E^{\ex}(\mco)\\
& \quad  = \slim_{t\to\infty}I^{\ex *}_{\mathrm{LLP}}\int^{\oplus}_{\real^{\nu}} d\xi\, \e^{\i tH^{\ex}(\xi) }\bigl(1\otimes \Ga(p_{\de}(a/t))\bigr)
\e^{-\i tH^{\ex}(\xi) }\chi(\xi,H^{\ex}(\xi)) I^{\ex}_{\mathrm{LLP}}   E^{\ex}(\mco)\non
\end{align}
which equals $E^{\ex}(\mco)$.
This proves (\ref{right-obs}) and (\ref{right-relation}).

Existence of the limit (\ref{extended-obs}) and relation~(\ref{extended-relation}) are obvious consequences of the facts proven above. 
\end{proof}
In the following theorem we construct the \localized wave operator $W^{+}_{\mco}$, associated with the
region $\mco$ specified in Definition~\ref{small-neighbourhood}. We also show that its adjoint is a strong
limit of its approximating sequence.
\bet\label{final-wave-operators} Let $j_0,j_{\infty}$ be as specified in Definition~\ref{j-definition-a},  s.t. $j_0^2+j_{\infty}^2=1$ and, in addition, 
let  $j_0$ and $1-j_{\infty}$ satisfy the conditions from Definition~\ref{delta-definition}. Let $q=(q_0,q_{\infty}):=(j_0^2,j_{\infty}^2)$ 
(in particular $q_0+q_{\infty}=1$). Then the   following strong limits exist: 
\begin{align}
W^{+}_{\mco,\de}& = \slim_{t\to\infty}\e^{\i tH}\cGa(q_{\de}(\ta/t))^*\e^{-\i tH^{\ex} } E^{\ex}(\mco),\\
W^{+*}_{\mco,\de}& = \slim_{t\to\infty}\e^{\i tH^{\ex}}\cGa(q_{\de}(\ta/t))\e^{-\i tH} E(\mco),
\end{align}
and intertwine $\chi(P,H)$ with $\chi(P^{\ex}, H^{\ex})$ for any $\chi\in C_0^{\infty}(\real^{\nu+1})_{\real}$. (Consequently,
$W^{+*}_{\mco,\de}$ is the adjoint of $W^{+}_{\mco,\de}$).
Moreover, these limits are independent of $\de$, for sufficiently small $\de$, thus we can omit the
subscript $\de$.
\eet
\begin{proof}  Let $\chi\in C_0^{\infty}(\real^{\nu+1})_{\real}$ be equal to one on $\mco$ and be supported in $\mcoz$. We write
\begin{align}
W^{+}_{\mco,\de}& =\slim_{t\to\infty}I_{\mathrm{LLP}}^*\int^{\oplus}_{\real^{\nu}} d\xi\, \e^{\i tH(\xi)}\cGa(q_{\de}(a/t))^*\e^{-\i tH^{\ex}(\xi) }
\chi(\xi,H^{\ex}(\xi)) I^{\ex}_{\mathrm{LLP}} E^{\ex}(\mco),\\
W^{+*}_{\mco,\de}& =\slim_{t\to\infty}I^{\ex *}_{\mathrm{LLP}}\int^{\oplus}_{\real^{\nu}} d\xi\, \e^{\i tH(\xi)}\cGa(q_{\de}(a/t))\e^{-\i tH^{\ex}(\xi) }
\chi(\xi,H(\xi)) I_{\mathrm{LLP}} E(\mco).
\end{align}
The existence of  these limits and the intertwining property follows from Theorem~\ref{wave-operators-theorem} by the dominated convergence theorem.

Let us now show that $W^{+}_{\mco,\de}$ is independent of $\de$ for sufficiently small $\de$. First, we note that 
\beqa
W^+_{\mco,\de}=W^+_{\mco,\de}(Q_{\eps}^+(H)\otimes 1) \label{WQ-relation}
\eeqa
for $\eps/2> \de$. However, by Proposition~\ref{asymp-obs-del-lim}, $Q_{\eps}^+(H)=E(\Sigma_{\iso})$ is independent of
$\eps$, hence the relation holds also for $\eps/2<\de$. Let us make $\eps$ even smaller, so as to ensure that $\eps/2<\de/4$. Then we can write
\beqa
W^+_{\mco,\de}(Q_{\eps}^+(H)\otimes 1)=W^+_{\mco,\de}Q_{\eps}^+(H^{\ex})=W^+_{\mco,\eps}Q_{\de}^+(H^{\ex}). \label{WQQex-relation}
\eeqa
But the r.h.s. above is independent of $\de$ by Proposition~\ref{more-asymptotic-obs}. Thus both $W^+_{\mco,\de}$ and $W^{+*}_{\mco,\de}$
are independent of $\de$. 
\end{proof}
Now we proceed to  the proof of an isometry property of  $W^{+*}_{\mco}$. An important role in the
proof is played by the map $\cGa(1,1)$ whose adjoint is the  scattering identification operator  from \cite{DGe1,HuSp2}.  
\bet\label{isometric-property} The \localized wave operators, defined as in Theorem~\ref{final-wave-operators}, satisfy
\beqa
W^{+}_{\mco}W^{+*}_{\mco}=E(\mco).
\eeqa
\eet
\begin{proof} Let $q_{\de}:=(q_{0,\de}(\ta/t), q_{\infty,\de}(\ta/t))$ be as specified in Theorem~\ref{final-wave-operators} and abbreviate
$\un q_{\de}:=\diag(q_{0,\de}(\ta/t), q_{\infty,\de}(\ta/t))$, the corresponding family of observables on $\K\otimes(\mfh\oplus\mfh)$.  
We set $W^+:=W^+_{\mco}$ and write
\begin{align}
W^{+}W^{+*}=W^{+}_{\de}W^{+*}& = \slim_{t\to\infty}\e^{\i tH}\cGa(q_{\de})^*\e^{-\i tH^{\ex}}E^{\ex}(\mco)W^{+*}\non\\
&= \slim_{t\to\infty}\e^{\i tH}\cGa(q_{\de})^*\e^{-\i tH^{\ex}}W^{+*}\non\\
&= \slim_{t\to\infty}\e^{\i tH}\cGa(1,1)^*\Ga^{\ex}( \un q_{\de} )\e^{-\i tH^{\ex}}W^{+*}.
\end{align}
Making use of the fact that $\mco$ is localized below the two-boson threshold, we obtain that $W^{+*}=\tP W^{+*}$, 
where $\tP:=1\otimes (P_{0}+P_{1})$ acts on $\hil^{\ex}=\K\otimes \Fock^{\ex}$ and $P_{n}: \Fock^{\ex}\to \Fock\otimes \Fock^{(n)}$
are natural restriction maps.
Let  $\Psi$ be a vector in the range of
$E(\mco)$. Then, setting $\Rs:=(\i+H)^{-1}$ and $\Rs^{\ex}:=(\i+H^{\ex})^{-1}$ we write
\begin{align}
& \e^{\i tH}\cGa(1,1)^* \Ga^{\ex}( \un q_{\de} )\e^{-\i tH^{\ex}}W^{+*}\Psi\non\\
& \qquad=\e^{\i tH}\cGa(1,1)^*\tP \Ga^{\ex}( \un q_{\de} )   (\Rs^{\ex})^{2}\e^{-\i tH^{\ex}}   W^{+*}\Rs^{-2}\Psi\non\\
& \qquad=\e^{\i tH}\cGa(1,1)^*\tP\Rs^{\ex}[H^{\ex}, \Ga^{\ex}( \un q_{\de} )] (\Rs^{\ex})^{3} \e^{-\i tH^{\ex}}   W^{+*}\Rs^{-3}\Psi\non\\
& \qquad\quad+\e^{\i tH}\cGa(1,1)^*\tP\Rs^{\ex}\e^{-\i tH^{\ex}} \e^{\i tH^{\ex}}  \Ga^{\ex}( \un q_{\de} ) \e^{-\i tH^{\ex}}   W^{+*}\Rs^{-1}\Psi.
\label{local-asymp-comp-formula}
\end{align}
By Proposition~\ref{comm-H-Ga} and property~\ref{Gamma-one-one},  the  term involving the commutator above is of order $O(t^{-1})$.
As for the second term, we note that by Proposition~\ref{more-asymptotic-obs}, the fact that $Q^+(H)=E(\Sigma_{\iso})$ 
and property~(\ref{Gamma-one-one}), 
\begin{equation}
\slim_{t\to\infty}\e^{\i tH}\cGa(1,1)^*\tP\Rs^{\ex}\e^{-\i tH^{\ex}} 
\bigl(\e^{\i tH^{\ex}} \Ga^{\ex}( \un q_{\de} )\e^{-\i tH^{\ex}}-Q^+(H)\otimes 1\bigr)W^{+*}\Rs^{-1}\Psi=0.
\end{equation}
Since $(Q^+(H)\otimes 1\big)W^{+*}=W^{+*}$, (cf.~formula~(\ref{WQ-relation}) above), we  obtain that the r.h.s. of (\ref{local-asymp-comp-formula}) equals (up to terms that tend to zero in the limit $t\to\infty$)
\beqa
\e^{\i tH}\cGa(1,1)^*\tP\Rs^{\ex}\e^{-\i tH^{\ex}}W^{+*} \Rs^{-1}\Psi.
\eeqa
This is asymptotic in the limit $t\to\infty$  to the expression 
\begin{align}
& \e^{\i tH}\cGa(1,1)^*\tP\Rs^{\ex} \cGa(q_{\de})\e^{-\i tH} \Rs^{-1}\Psi\non\\
& \qquad=\e^{\i tH}\cGa(1,1)^*\tP \Rs^{\ex} \Ga^{\ex}(\un j_{\de})\chi^{\ex}   \cGa(j_{\de})\e^{-\i tH}  \Rs^{-1} \Psi+O(t^{-1})\non\\
& \qquad=\e^{\i tH}\cGa(1,1)^*\tP \Rs^{\ex} [\Ga^{\ex}(\un j_{\de}), H^{\ex}]\Rs^{\ex} \chi^{\ex} \cGa(j_{\de})
\e^{-\i tH} \Rs^{-1}  \Psi\non\\
& \qquad\quad+\e^{\i tH}\cGa(j_{\de})^* \Rs^{\ex} \chi^{\ex} \cGa(j_{\de})\e^{-\i tH}  \Rs^{-1} \Psi+O(t^{-1})=\Psi+O(t^{-1}). \label{isometry-proof}
\end{align}
Here in the first step we used the identity
$\cGa(q_\delta) = \Ga^\ex(\un j_\delta) \cGa(j_\delta)$
and introduced a function $\chi\in C_0^{\infty}(\real^{\nu+1})_{\real}$, supported in $\mcoz$ and equal to one on $\mco$
so that $\chi(P,H)\Psi=\Psi$.
Making use of Lemma~\ref{Helffer}, we got
\beqa
\cGa(j_{\de})\chi=\chi^{\ex}\cGa(j_{\de})+O(t^{-1}),
\eeqa
where we set $\chi:=\chi(P,H)$ and $\chi^{\ex}:=\chi(P^{\ex},H^{\ex})$.  In the second step we commuted $\Ga^{\ex}(\un j_{\de})$
to the left and used the fact that $\tP \chi(P^{\ex},H^{\ex})=\chi(P^{\ex},H^{\ex})$.
In the third step we exploited  Proposition~\ref{comm-H-Ga}  to show that the resulting commutator is $O(t^{-1})$. In the last step we made use  again of Lemma~\ref{Helffer}. 
\end{proof}

\bel\label{Mourre-lemma}  Let $(\mcA,S)$ be  an analytic shell in $\Sigma_{\iso}$ (cf. Section~\ref{isolated-spectrum}). Let $\De$ be 
a Borel subset of the graph $\mcG_S$ of this shell. 
Then the  operator 
\beqa
B:= (\mathbf{1}_{\De}(P,H)\otimes v_{\xi_0})\cdot (1\otimes\nabla\om-\nabla S(P)\otimes 1)
\eeqa
satisfies
\beqa
\hat P B\hat P\geq \cm \hat P, \label{Mourre-inequality}
\eeqa
where $\hat P:=\mathbf{1}_{\mcoz}(P^{(1)},H^{(1)})(\mathbf{1}_{\De}(P,H) \otimes 1)$. We recall that $\mcoz$ appeared
in Definition~\ref{small-neighbourhood}.  
\eel
\begin{proof} We set $v:=v_{\xi_0}$ and $a:=a_{\xi_0}$ as abbreviated already  below definition (\ref{tilde-a}) above.
Let $\chi\in C_0^{\infty}(\real)_{\real}$ be supported in $\mcJz$. 
We recall from Theorem~\ref{Mourre-estimate-theorem} that for $\xi\in\Nz$ (where $\Nz$ appeared in Theorem~\ref{Mourre-estimate-theorem})
\beqa
\chi(H^{(1)}(\xi))\i [H^{(1)}(\xi),  1\otimes a]\chi(H^{(1)}(\xi))\geq \cm \chi(H^{(1)}(\xi))^2.
\eeqa
Let us set $\ti{\mathbf{1}}_{\De}(\xi):=\int^{\oplus}dk\,\mathbf{1}_{\De}(\xi-k,H(\xi-k))$, note that
\begin{align}
I_{\LLP}^{(1)}\mathbf{1}_{N_0}(P^{(1)})(\mathbf{1}_{\De}(P,H)\otimes 1)   I_{\LLP}^{(1)*}=\int^{\oplus}_{N_0}d\xi\, \ti{\mathbf{1}}_{\De}(\xi) \label{I-llp-bf-one}
\end{align}
and set
$\De_{\xi}:=\{\,k\in\real^{\nu}\,|\, (\xi-k,\la)\in\De, \textrm{  for some } \la\in\real \,    \}$. Then we get
\begin{align}
& \ti{\mathbf{1}}_{\De}(\xi)\Bigl(\int^{\oplus}_{\De_{\xi}}dk\,  \chi(H^{(1)}(\xi;k))\i [S^{(1)}(\xi;k),  1\otimes a]\chi(H^{(1)}(\xi;k))\Bigr) \ti{\mathbf{1}}_{\De}(\xi) \non\\
& \qquad \geq \cm \chi(H^{(1)}(\xi))^2\ti{\mathbf{1}}_{\De}(\xi),
\end{align}
and consequently
\begin{align}
&\ti{\mathbf{1}}_{\De}(\xi)\Bigl(\int^{\oplus}_{\De_{\xi}}dk\,  \chi(H^{(1)}(\xi;k)) v(k)\cdot \nabla_k S^{(1)}(\xi;k) \chi(H^{(1)}(\xi;k))\Bigr) \ti{\mathbf{1}}_{\De}(\xi) \non\\
&\qquad =\chi(H^{(1)}(\xi))^2\ti{\mathbf{1}}_{\De}(\xi)  \int^{\oplus}_{\De_{\xi}}dk\, v(k)\cdot \nabla_k S^{(1)}(\xi;k) 
\geq \cm \chi^2(H^{(1)}(\xi))\ti{\mathbf{1}}_{\De}(\xi),
\end{align}
where we set $S^{(1)}(\xi;k)=S(\xi-k)+\om(k)$. By approximating with functions $\chi$ the characteristic function of $\mcJz$,
taking the direct integral of both sides over $\xi\in \Nz$, and conjugating with $I_{\LLP}^{(1)}$, we obtain
\begin{equation}
(\mathbf{1}_{\De}(P,H)  \otimes v)\cdot (1\otimes\nabla\om-\nabla S(P)\otimes 1) \hat P \geq \cm\hat P. 
\end{equation}
where we made use of (\ref{I-llp-bf-one}). This concludes the proof. \end{proof}
\bet\label{isometric-property-one} The \localized wave operators, defined as in Theorem~\ref{final-wave-operators}, satisfy
\beqa
W^{+*}_{\mco}W^{+}_{\mco}=E^{\ex}(\mco)(E(\Sigma_{\iso})\otimes 1). \label{W-isometry}
\eeqa
\eet
\begin{proof} Let us set $W^+:=W^+_{\mco}$ and let $\Psi\in \Ran\, E^{\ex}(\mco)$. 
By Lemma~\ref{energetic-considerations}, $\Psi$ belongs to the closed span of vectors of  two types. The first type are vectors of the form
\beqa
 \Psi_1\otimes \Om, \label{vectors-of-first-type}
\eeqa
where $\Psi_1\in E(\mco)\hil$. Such vectors are elements of the kernel of $W^+$ due to the fact that
\begin{align}
W^{+}(\Psi_1\otimes \Om)& = \lim_{t\to\infty}\e^{\i tH}\cGa(q_{\de}(\ta/t))^*(\e^{-\i tH}\Psi_1\otimes\Om)\non\\
& = \lim_{t\to\infty}\e^{\i tH}\Ga(q_{0,\de}(\ta/t))\e^{-\i tH}\Psi_1=Q^+(H)E(\mco)\Psi_1=0, 
\end{align}
where we made use of  Proposition~\ref{asymp-obs-del-lim}. This proves relation~(\ref{W-isometry}) on vectors of type~(\ref{vectors-of-first-type}).  Vectors of the second type that span $\Ran\, E^{\ex}(\mco)$, provided by Lemma~\ref{energetic-considerations}, have the form
\beqa
\Psi_2\otimes a^*(h)\Om, \label{vectors-of-second-type}
\eeqa
where $h\in C_0^{\infty}(\real^{\nu})$ and $\Psi_2\in E(\De)\hil$  are s.t. $\De\subset\Sigma_{\iso}$ is a bounded Borel set  and $\De+(k,\om(k))\subset \mco$ for all $k\in \supp\, h$. 
For such vectors  we obtain
\begin{align}
W^{+}(\Psi_2\otimes a^*(h)\Om)& = \lim_{t\to\infty}\e^{\i tH}\cGa(q_{\de}(\ta/t))^*(\e^{-\i tH}\Psi_2\otimes a^*(h_t)\Om)\non\\
& = \lim_{t\to\infty}\e^{\i tH} a^*(q_{\infty,\de}(\ta/t)h_t)\e^{-\i tH}\e^{\i tH}\Ga(q_{0,\de}(\ta/t))\e^{-\i tH}\Psi_2\non\\
& = \lim_{t\to\infty}\e^{\i tH} a^*(q_{\infty,\de}(\ta/t)h_t)\e^{-\i tH}Q^{+}(H)\Psi_2\non\\
& = \lim_{t\to\infty}\e^{\i tH} a^*(q_{\infty,\de}(\ta/t)h_t)\e^{-\i tH}\Psi_2, \label{vectors-of-second-kind-computation}
\end{align}
where $h_t=\e^{-\i\om t}h$ and in the last step we made use of  Proposition~\ref{asymp-obs-del-lim} and the fact that $\Psi_2$ belongs to the range of $E(\Sigma_{\iso})$.
In view of the discussion of the isolated spectrum in Section~\ref{isolated-spectrum}, we can assume that $\Psi_2$ belongs to the
range of $E(\De)$, where $\De$ is a subset of the graph $\mcG_S$  of an analytic shell $(\mcA,S)$. Here we used that level crossings sit above a set of momenta,  a union of spheres, with zero Lebesgue measure.
Let $\haPsi_2\otimes a^*(\hah)\Om$ be  another vector of the 
form (\ref{vectors-of-second-type}), s.t. $\haPsi_2$ belongs to the range of $E(\haDe)$, where $\haDe$ is a subset of the graph of some other shell  $\haS$ which may, but does not have to coincide with $S$.
Now we obtain from (\ref{vectors-of-second-kind-computation})
\begin{align}
& \lan W^{+}(\haPsi_2\otimes a^*(\hah)\Om), W^{+}(\Psi_2\otimes a^*(h)\Om)\ran \non\\
& \qquad =\lim_{t\to\infty}\lan \haPsi_2,  \e^{\i tH}a(q_{\infty,\de}(\ta/t)\hah_t) a^*(q_{\infty,\de}(\ta/t)h_t)\e^{-\i tH}\Psi_2\ran.
\end{align}
By commuting the annihilation operator to the right, we get 
\begin{align}
& \lan\haPsi_2,  \e^{\i tH}a(q_{\infty,\de}(\ta/t)\hah_t) a^*(q_{\infty,\de}(\ta/t)h_t)\e^{-\i tH}\Psi_2\ran \non\\
& \qquad =\lan\haPsi_{2,t}\otimes \hah_t,  q_{\infty,\de}^2\big((1\otimes (v/2))\cdot (1\otimes x-y\otimes 1)/t+\mathrm{h.c.}\big) (\Psi_{2,t}\otimes h_t) \ran\non\\
& \qquad\quad+\lan a(q_{\infty,\de}(\ta/t)h_t)\e^{-\i t \haS(P)}\haPsi_2, a(q_{\infty,\de}(\ta/t)\hah_t)\e^{-\i tS(P)}\Psi_2\ran,
\label{asymptotic-vacuum}
\end{align}
where $\Psi_{2,t}=\e^{-\i t S(P)}\Psi_{2}$, $\haPsi_{2,t}=\e^{-\i t \haS(P)}\haPsi_{2}$. We recall that $v:=v_{\xi_0}$ and $x=\i\nabla_k$ is the position operator of the boson.

Let us first show that  the last term on the r.h.s. of (\ref{asymptotic-vacuum}) tends to zero: Due to the fact that $\haPsi_2\in E(\De)\hil$,
where  $\De$ is bounded, it is easy to see that
\beqa
\|(1+H_{\pho})a(q_{\infty,\de}(\ta/t)\hah_t)\e^{-\i t\haS(P)}\haPsi_2\|\leq c
\eeqa
for some $c$ independent of $t$. (See Lemmas~\ref{pseudo-lemma} and \ref{technical}). On the other hand, as shown 
in Proposition~\ref{asymp-obs-del-lim}, for $\de'$ sufficiently small,
\beqa
\Psi_2=\lim_{t\to\infty}\e^{\i tH}\Ga(q_{0,\de'}(\ta/t))\e^{-\i tH}\Psi_2.
\eeqa
Proceeding similarly as in the proof of \cite[Lemma~14]{FGSch3}, we get
\begin{align}
& (1+H_{\pho})^{-1} a(q_{\infty,\de}(\ta/t)h_t)\e^{-\i tH}\Psi_2\non\\
& \qquad = (1+H_{\pho})^{-1}  a(q_{\infty,\de}(\ta/t)h_t)\Ga(q_{0,\de'}(\ta/t))\e^{-\i tH}\Psi_2+o(1)\non\\
& \qquad =(1+H_{\pho})^{-1}\Ga(q_{0,\de'}(\ta/t))a((q_{0,\de'}q_{\infty,\de})(\ta/t)h_t)\e^{-\i tH}\Psi_2+o(1),
\end{align}
where $o(1)$ denotes a term which tends in norm to zero as $t\to\infty$. Here we made use of the fact that $(1+H_{\pho})^{-1} a(q_{\infty,\de}(\ta/t)h_t)$ is bounded, uniformly in $t$.  Noting that  for $\de'$ sufficiently small $q_{0,\de'}q_{\infty,\de}=0$, we obtain that the r.h.s. above tends to zero
and therefore the last term on the r.h.s. of (\ref{asymptotic-vacuum}) tends to zero.

Let us now consider the limit of the first term on the r.h.s. of (\ref{asymptotic-vacuum}):
\begin{align}
&  \lim_{t\to\infty}\lan\haPsi_{2,t}\otimes \hah_t,  q_{\infty,\de}^2\bigl( (1\otimes (v/2))\cdot (1\otimes x-y\otimes 1)/t+\mathrm{h.c.}  \bigr) (\Psi_{2,t}\otimes h_t) \ran\non\\
& \quad =\lim_{t\to\infty}\lan\haPsi_{2}\otimes \hah, q_{\infty,\de}^2\bigl( (1\otimes (v/2))\cdot\non\\
& \quad \quad\times\{ (1\otimes x-y\otimes 1)/t+1\otimes\nabla\om-\nabla \haS (P)\otimes 1 \}+ \mathrm{h.c.} \bigr)\e^{\i t (\haS(P)-S(P))} (\Psi_{2}\otimes h) \ran\non\\
& \quad =\lim_{t\to\infty}\lan\haPsi_{2}\otimes \hah,  q_{\infty,\de}^2\bigl((1\otimes v)\cdot(1\otimes\nabla\om-\nabla \haS(P)\otimes 1)  \bigr)\e^{\i t (\haS(P)-S(P))} (\Psi_{2}\otimes h) \ran.\non\\ \label{isometry-long-computation}
\end{align}
Here in the second step we made use of the strong resolvent convergence of the  sequence of operators in the argument of $q_{\infty,\de}^2$. 
Since $\hat S$ is only defined on a subset of $\real^{\nu}$, the symbol $\nabla \haS (P)$ is to be understood as $\nabla \haS_{\mathrm{r}} (P) $,
where $\haS_{\mathrm{r}}$ is the restriction of $\haS$
to the spectral support of the vector $\Psi_2$, which is then extended by zero to $\real^{\nu}$. 
Clearly, the last expression on the r.h.s. of (\ref{isometry-long-computation}) is equal to zero if $S\neq \haS$, since the argument of  $q_{\infty,\de}^2$ commutes with the spectral projection
$E(\mcG_S)\otimes 1$ and $E(\mcG_S)E(\mcG_{\haS})=0$ in this case. Thus we have verified (\ref{W-isometry}) for $S\neq \haS$ and we can assume that $S=\haS$. We can also assume that
both $\Psi_2$ and $\haPsi_2$ belong  to the range of $E(\De)$ for some bounded  Borel subset $\De\subset S$. (Again, since level crossings live on
a subset of momentum space with Lebesgue measure zero, we can exclude them from this discussion).
Then the last term on the r.h.s. of (\ref{isometry-long-computation}) equals
\begin{align}
& \lan\haPsi_{2}\otimes \hah,  q_{\infty,\de}^2\bigl((1\otimes v)\cdot(1\otimes\nabla\om-\nabla S(P)\otimes 1)\bigr) (\Psi_{2}\otimes h) \ran\non\\
& \qquad =\lan\haPsi_{2}\otimes \hah,  q_{\infty,\de}^2\bigl((\mathbf{1}_{\De}(P,H)\otimes v)\cdot(1\otimes\nabla\om-\nabla S(P)\otimes 1) \bigr) (\Psi_{2}\otimes h) 
\ran\quad\quad\quad \non\\
& \qquad =\lan\haPsi_{2}\otimes \hah,   q_{\infty,\de}^2(\hat P B \hat P)  (\Psi_{2}\otimes h) \ran, \label{auxiliary-computation-isometry}
\end{align}
where $B$ and $\hat P$ were defined in Lemma~\ref{Mourre-lemma}. 
Next, we note that
\beqa
\slim_{\de\to 0}q_{\infty,\de}^2(\hat P B \hat P)=\mathbf{1}_{ (0,\infty) }(\hat P B \hat P)=1-\mathbf{1}_{\{0\}}(\hat P B \hat P),
\eeqa 
where we exploited the fact that $\hat P B \hat P\geq 0$.  Next, we observe that
\beqa
\lan\haPsi_{2}\otimes \hah, \mathbf{1}_{\{0\}}(\hat P B \hat P)   (\Psi_{2}\otimes h) \ran=0. \label{zero-projection-vanishing}
\eeqa
In fact, if the l.h.s. above was different from zero, then $\Psi_0:=\mathbf{1}_{\{0\}}(\hat P B \hat P)   (\Psi_{2}\otimes h)\neq 0$.
Now inequality~(\ref{Mourre-inequality}) gives that $\hat P \Psi_0=0$. Since $\hat P$ commutes with $B$, and  $\Psi_{2}\otimes h$
belongs to the range of $\hat P$, we obtain that  $\Psi_0=\hat P \Psi_0=0$. This   contradiction  justifies (\ref{zero-projection-vanishing}).
Thus we obtain  that the r.h.s. of (\ref{auxiliary-computation-isometry})  satisfies
\beqa
\lim_{\de\to 0}\lan\haPsi_{2}\otimes \hah,   q_{\infty,\de}^2(\hat P B \hat P)  (\Psi_{2}\otimes h) \ran=\lan\haPsi_{2},  \Psi_{2}\ran \lan \hah,h\ran.
\eeqa
Summing up,  we have shown that
\beqa
\lan W^{+}(\haPsi_2\otimes a^*(\hah)\Om),\  W^{+}(\Psi_2\otimes a^*(h)\Om)\ran=\lan\haPsi_{2},\Psi_2\ran \lan \hah,h\ran, \label{second-norm}
\eeqa
which concludes the proof.
\end{proof}

\section{Wave operators and asymptotic completeness}\label{AC-section}
\setcounter{equation}{0}

Let us now proceed to the construction of  the conventional wave operators and to the proof of their completeness below the two-boson 
threshold. It will be convenient to work with  
 wave operators $\tOmeR^+$, introduced in (\ref{definition-of-wave-op}) below,  which are defined on the entire Hilbert 
space $\hil^{\ex}$. As we show in the proof of Theorem~\ref{main-theorem} given below, their restrictions to $E^{\ex}(\mcR)\hil_+$ 
coincide with   the wave operators $\OmeR^+$ defined in (\ref{intro-wave-operator}).
We construct them first in the
small regions $\mco$ of the energy-momentum spectrum, in which we constructed the \localized wave operators.
\bep\label{local-existence-of-wave-operators} Let $\mco$ be as specified in Definition~\ref{small-neighbourhood}. Then there exists the limit
\beqa
\tOme^+_{\mco}:=\slim_{t\to \infty} \e^{\i tH}\cGa(1,1)^*\e^{-\i tH^{\ex}} (E(\Sigma_{\iso})\otimes 1)E^{\ex}(\mco).
\eeqa
Moreover, $\tOme^+_{\mco}=W^{+}_{\mco}$. 
\eep
\begin{proof}  We set $W^+:=W^+_{\mco}$ and $\Rs^{\ex}:=(\i+H^{\ex})^{-1}$. Theorem~\ref{isometric-property-one} gives us that
$(E(\Sigma_{\iso})\otimes 1)E^{\ex}(\mco)=W^{+*}W^{+}$.
Thus we can write
\begin{align}
& \e^{\i tH}\cGa(1,1)^*\e^{-\i tH^{\ex}} (E(\Sigma_{\iso})\otimes 1)E^{\ex}(\mco)\non\\
& \qquad =\e^{\i tH}\cGa(1,1)^*\tP \Rs^{\ex} \e^{-\i tH^{\ex}}   W^{+*}W^{+} (\Rs^{\ex})^{-1} E^{\ex}(\mco),
\end{align}
where we use the notation $\tP=1\otimes (P_{0}+P_{1})$ from the proof of Theorem~\ref{isometric-property}. By property~(\ref{Gamma-one-one}), $\cGa(1,1)^*\tP(\i+H^{\ex})^{-1}$ is a bounded operator. Thus,
up to an error term which tends strongly to zero, the last expression equals
\begin{align}
& \e^{\i tH}\cGa(1,1)^*\tP  \Rs^{\ex} \cGa(q_{\de})\e^{-\i tH}  W^{+}(\Rs^{\ex})^{-1}E^{\ex}(\mco) \non\\
& \qquad =\e^{\i tH}\cGa(1,1)^*\tP  \Rs^{\ex} \Ga^{\ex}(\un j_{\de})\chi^{\ex}   \cGa(j_{\de})\e^{-\i tH} W^{+} (\Rs^{\ex})^{-1}  E^{\ex}(\mco)   +O(t^{-1})\non\\
& \qquad=\e^{\i tH}\cGa(1,1)^*\tP \Rs^{\ex}  [\Ga^{\ex}(\un j_{\de}), H^{\ex}] \Rs^{\ex}  \chi^{\ex} \cGa(j_{\de})
\e^{-\i tH}  
W^{+} (\Rs^{\ex})^{-1}E^{\ex}(\mco)   \non\\
& \qquad\quad +\e^{\i tH}\cGa(j_{\de})^* \Rs^{\ex} \chi^{\ex} \cGa(j_{\de})\e^{-\i tH}  W^{+} (\Rs^{\ex})^{-1} E^{\ex}(\mco) +O(t^{-1})\non\\
& \qquad=W^{+}+O(t^{-1}). \,\,\, \label{isometry-proof-one}
\end{align}
These steps are justified exactly as in the discussion after formula~(\ref{isometry-proof}) above, in particular $\chi\in C_0^{\infty}(\real^{\nu+1})_{\real}$
is a function supported in $\mcoz$ and equal to one in $\mco$ and we set $\chi^{\ex}:=\chi(P^{\ex},H^{\ex})$.
Relation~(\ref{isometry-proof-one})  proves the existence of $\tOme^+_{\mco}$ and the fact that $\tOme^+_{\mco}=W^+$. 
\end{proof}
In the following two theorems we state and prove our main results. We recall that $\mcR=\{\,(\xi, E)\in \real^{\nu+1}\,|\, E<\Sigma^{(2)}(\xi)\,   \}$.
\bet \label{isometry} There exists the wave operator $\tOmeR^+: \hil^{\ex}\to\hil$ given by
\beqa
\tOmeR^+:=\slim_{t\to \infty} \e^{\i tH}\cGa(1,1)^*\e^{-\i tH^{\ex}} (E(\Sigma_{\pp})\otimes 1)E^{\ex}(\mcR). \label{definition-of-wave-op}
\eeqa
 The wave operator satisfies
\beqa
\tOmeR^{+*}\tOmeR^+= (E(\Sigma_{\pp})\otimes 1)E^{\ex}(\mcR) \label{conv-wave-isometry}.
\eeqa
Moreover, for any $\chi\in C_0^{\infty}(\real^{\nu+1})_{\real}$,
\beqa
\tOmeR^+\chi(P^{\ex}, H^{\ex})=\chi(P, H)\tOmeR^+. \label{finite-intertwining}
\eeqa
\eet
\begin{proof} Let us recall that $\mcR\cap\Sigma=\mcE^{(1)}\cup\Sigma_{\iso}$ and the union is disjoint. Since the lower boundary of the joint spectrum of 
$(P^{(1)},H^{(1)})$ is $\xi\to\Sigma_0^{(1)}(\xi)$, we note that 
\beqa
E^{\ex}(\mcR)=E(\mcR) \oplus  E^{(1)}(\mcE^{(1)}), \label{decomposition-of-E}
\eeqa
where $E^{(1)}(\,\cdot\,)$ is the joint spectral resolution of $(P^{(1)}, H^{(1)})$. As for the first component, we obtain that
for any $\Psi\in\hil$
\beqa
\e^{\i tH}\cGa(1,1)^*\e^{-\i tH^{\ex}} (E(\Sigma_{\pp}\cap\mcR)\Psi \otimes \vac)=E(\Sigma_{\pp}\cap \mcR )\Psi, \label{trivial-action-wave}
\eeqa
thus $\tOmeR^+$ trivially exists and is an isometry on  this subspace. 

Let us now consider the second component of the direct sum (\ref{decomposition-of-E}). Let $K\subset \mcE^{(1)}$ be a compact set.
Let us  show that $\tOmeR^+$ exists on the range of $E^{(1)}(K)$. 
We set  $\mcT:=(\mcT^{(1)}\cup\Exc\cup \Sigma_{\pp})$ and pick a vector $\Psi\in E^{(1)}(K)(\hil\otimes \Fock^{(1)})$. Now we choose open sets $G_n\supset \mcT$ s.t.
\beqa
\lan \Psi, E^{(1)}(G_n\backslash \mcT) \Psi \ran\leq \fr{1}{n}. \label{measure-regularity-bound}
\eeqa
Such sets exist by the regularity of the spectral measure. We note that $K_n:=K\backslash G_n$ are compact sets. 
 By Proposition~\ref{local-existence-of-wave-operators}, for any
$(\xi_0,\la_0)\in K_n$ there exists a neighbourhood $\mco$,  specified in Definition~\ref{small-neighbourhood},  s.t. 
\beqa
\tOme^+_{\mco}=\tOmeR^+E^{(1)}(\mco)
\eeqa
exists. Such sets $\mco$ form a covering   of $K_n$ from which we can choose a finite sub-covering  $\{\mco_j\}_{j=1}^{N_0}$.
By taking intersections of the sets in this sub-covering, we can find a family of disjoint Borel sets $\{\bmco_{i}\}_{i=1}^{N}$, whose
union coincides with $K_n$ and  s.t. each
$\bmco_{i}$ is contained in some set $\mco_{j_i}$, as specified above. 
Thus we can write
\begin{align}
&  \e^{\i tH}\cGa(1,1)^*\e^{-\i tH^{\ex}} (E(\Sigma_{\pp})\otimes 1)E^{(1)}(K)\Psi\non\\
& \qquad=\e^{\i tH}\cGa(1,1)^*\e^{-\i tH^{\ex}} (E(\Sigma_{\pp})\otimes 1)E^{(1)}(K_n)\Psi+O(1/n)\non\\
& \qquad =\sum_{i=1}^{N}\e^{\i tH}\cGa(1,1)^*\e^{-\i tH^{\ex}} (E(\Sigma_{\pp})\otimes 1)E^{(1)}(\bmco_{i})\Psi+O(1/n). 
\label{convergence-of-Omeg}
\end{align}
In the first step above we made use of the relation
 \begin{align}
E^{(1)}(K) &= E^{(1)}(K_n)+E^{(1)}(G_n\cap K)\non\\
&= E^{(1)}(K_n)+E^{(1)}((G_n\backslash \mcT)\cap K)+E^{(1)}( \mcT\cap K). \label{spectral-measure-decomp}
\end{align}
The second term on the r.h.s above, together with the bound  (\ref{measure-regularity-bound}) and property~(\ref{Gamma-one-one}),
gives rise to the term  $O(1/n)$ on the r.h.s. of (\ref{convergence-of-Omeg}), i.e. a term whose norm is  bounded by $c/n$ for some $c$ independent
of $t$. (Here we exploit compactness of $K$). The last term on the r.h.s. of (\ref{spectral-measure-decomp}) is zero due to the relation
\beqa
E^{(1)}(\mcT\cap\mcE^{(1)})=I^{(1)*}_{\LLP}\Bigl(\int^{\oplus}d\xi\, E^{(1)}_{\xi}( \mcT(\xi)\cap \mcE^{(1)}(\xi)  )\Bigr)I^{(1)}_{\LLP},
\eeqa
the fact that the set $\mcT(\xi)\cap \mcE^{(1)}(\xi)$ is countable for any $\xi$ (Theorem~\ref{Mourre-estimate-theorem}) and therefore
 $E^{(1)}_{\xi}( \mcT(\xi)\cap \mcE^{(1)}(\xi)  )=E^{(1)}_{\xi}( \Sigma^{(1)}_{\pp}(\xi)\cap \mcE^{(1)}(\xi)  )$ which is equal to zero
except for $\xi$ from some set of zero Lebesgue measure   (Lemma~\ref{Extended-spectrum-lemma}).  
Now relation~(\ref{convergence-of-Omeg}) and Proposition~\ref{local-existence-of-wave-operators}
give the existence of $\tOme^+_K:=\tOmeR^+E^{(1)}(K)$ by  the Cauchy criterion. 

Let us now show that $\tOme^+_K$ is isometric on the range of $E^{(1)}(K)$. We obtain
from~(\ref{convergence-of-Omeg}) that
\beqa
\tOme^+_K\Psi=\sum_{i=1}^{N} W^+_{\mco_{j_i}} E^{(1)}(\bmco_{i})\Psi+O(1/n), \label{limiting-relation-convergence}
\eeqa
where we made use of Proposition~\ref{local-existence-of-wave-operators} to replace the conventional wave operators with the
\localized wave operators $W^+_{\mco_{j_i}}$. Recalling that the sets $\bmco_{i}$ are disjoint and 
the \localized wave operators intertwine $(P,H)$ with $(P^{\ex}, H^{\ex})$, we can write
\begin{align}\label{checking-isometry}
\non \|\tOme^+_K\Psi\|^2 & =\sum_{i=1}^{N} \|W^+_{\mco_{j_i}} E^{(1)}(\bmco_{i})\Psi\|^2\\
& \quad +2\mathrm{Re}\lan(\tOme^+_K\Psi-O(1/n)), O(1/n)\ran
+\lan  O(1/n), O(1/n) \ran. 
\end{align}
The first term on the r.h.s. above satisfies
\beqa
\sum_{i=1}^{N} \|W^+_{\mco_{j_i}} E^{(1)}(\bmco_{i})\Psi\|^2=\sum_{i=1}^{N}\lan \Psi,   E^{(1)}(\bmco_{i}) \Psi \ran=
\lan\Psi,  E^{(1)}(K_n) \Psi\ran=\|\Psi\|^2+O(1/n),  \label{local-isometry-application}
\eeqa
where in the first step we made use of Theorem~\ref{isometric-property-one}, Lemma~\ref{energetic-considerations} and the fact that 
$\bmco_{i}\subset \mco_{j_i}$. In the second step we used that the union of $B_n$ coincides with $K_n$ and in the last step we exploited
formula~(\ref{spectral-measure-decomp}) and the subsequent discussion.
The last two terms on the r.h.s. of (\ref{checking-isometry}) and the last term on the r.h.s. of (\ref{local-isometry-application}) can be made arbitrary small by taking $n$ sufficiently large. Thus we have shown that
\beqa
\|\tOme^+_K\Psi\|=\|\Psi\|,
\eeqa
i.e. $\tOme^+_K$ is isometric on the range of $E^{(1)}(K)$.

Now let $K^n\subset \mcE^{(1)}$ be an increasing family of compact sets s.t. $\bigcup_{n\geq 0} K^n=\mcE^{(1)}$. Then
\beqa
\mcD:=\bigcup_{n\geq 0} E^{(1)}(K^n)(\hil\otimes\Fock^{(1)})   
\eeqa
is a dense domain in $E(\mcE^{(1)})(\hil\otimes\Fock^{(1)})$. (Here we exploit the inner regularity of the spectral measure). It follows from our above
considerations that 
$\tOmeR^+$ is well defined on $\mcD$ and is an isometry on this domain. Thus $\tOmeR^+$ extends to an isometry  on  
$E(\mcE^{(1)})(\hil\otimes\Fock^{(1)})$. 

We conclude that $\tOmeR^+$, as defined in  (\ref{definition-of-wave-op}), exists. In view of relation~(\ref{trivial-action-wave}),  
to complete the proof of (\ref{conv-wave-isometry}), it suffices to show that for any 
$\Psi_0\in E(\Sigma_{\pp}\cap\mcR)\hil$ and $\Psi\in E^{(1)}(K)(\hil\otimes\Fock^{(1)})$ as specified above,   there holds
\beqa
\lan \Psi_0, \tOmeR^+\Psi\ran=0.
\eeqa
To this end we make use again of relation~(\ref{limiting-relation-convergence}) and of the intertwining property of the \localized
wave operators to write
\beqa
\lan \Psi_0, \tOmeR^+\Psi\ran=\sum_{i=1}^{N} \lan E(\bmco_{i})\Psi_0,  W^+_{\mco_{j_i}} E^{(1)}(\bmco_{i})\Psi\ran+\lan \Psi_0, O(1/n)\ran.
\eeqa
Now we note that $ E(\bmco_{i})\Psi_0=0$, since the sets $\bmco_{i}$ do not intersect with the point spectrum of $(P,H)$. The last term
on the r.h.s. above can be made arbitrarily small by choosing large $n$. This concludes the proof of (\ref{conv-wave-isometry}). 

Finally, let us show the intertwining property~(\ref{finite-intertwining}). In view of the decomposition~(\ref{decomposition-of-E}),
it suffices to check (\ref{finite-intertwining}) first on vectors of the form $\Psi_0\otimes \vac$, $\Psi_0\in E(\Sigma_{\pp}\cap\mcR)\hil$
and then on vectors $\Psi\in E^{(1)}(K)(\hil\otimes\Fock^{(1)})$  as specified above. In the first case we get
\begin{align}
& \e^{\i tH}\cGa(1,1)^*\e^{-\i tH^{\ex}} \chi(P^{\ex},H^{\ex})(\Psi_0\otimes \vac)=\e^{\i tH}\cGa(1,1)^*\e^{-\i tH} (\chi(P,H)\Psi_0\otimes \vac)\non\\
& \qquad =\chi(P,H)\Psi_0
=\chi(P,H)\e^{\i tH}\cGa(1,1)^*\e^{-\i tH^{\ex}}(\Psi_0\otimes \vac).
\end{align}
As for the vectors of the second type, we make use of relation~\eqref{limiting-relation-convergence}:
\begin{align}
\tOmeR^+\chi(P^{\ex},H^{\ex})\Psi & = \sum_{i=1}^{N} W^+_{\mco_{j_i}}\chi(P^{\ex},H^{\ex}) E^{(1)}(\bmco_{i})\Psi+O(1/n)\non\\
&= \chi(P,H)\sum_{i=1}^{N} W^+_{\mco_{j_i}} E^{(1)}(\bmco_{i})\Psi+O(1/n)\non\\
&= \chi(P,H)\tOmeR^+\Psi+O(1/n),
\end{align}
where we made use of the intertwining relation for the \localized wave operators shown in Theorem~\ref{final-wave-operators}.
Since $O(1/n)$ can be made arbitrarily small, this concludes the proof. \end{proof}
\bet\label{asymptotic-completeness-theorem} The wave operator $\tOmeR^+$, defined in (\ref{definition-of-wave-op}), satisfies 
\beqa
\mathrm{Ran} \,\tOmeR^+=E(\mcR)\hil. \label{asymptotic-completeness-relation}
\eeqa
\eet
\begin{proof} We recall that $\mcR\cap \Sigma=\Sigma_{\iso}\cup \mcE^{(1)}$. We note that for any $\Psi\in E(\Sigma_\iso)\hil = \hil_\iso$ there holds
\beqa
\Psi=\tOmeR^{+}(\Psi\otimes \vac), \label{eigenvectors-relation}
\eeqa
so  $\Ran\,E(\Sigma_{\iso})\subset \Ran \,\tOmeR^+$. 

Next, let us choose a compact set $K\subset \mcE^{(1)}$. 
We denote  $\mcT:=(\mcT^{(1)}\cup\Exc\cup \Sigma_{\pp})$ and choose a vector $\Psi\in E(K)\hil$.   We select open 
sets $G_n\supset \mcT$ s.t.
\beqa
\lan \Psi, E(G_n\backslash \mcT) \Psi \ran\leq \fr{1}{n}. \label{measure-regularity-bound-one}
\eeqa
Similarly as in the proof of Theorem~\ref{isometry}, such sets exist by the regularity of the spectral measure. We define sets $K_n:=K\backslash G_n$, which are  compact. 
Now we write
\beqa
E(K)=E(K_n)+E(G_n\cap K)=E(K_n)+E((G_n\backslash \mcT)\cap K)+E(\mcT\cap K). \label{measure-decomposition}
\eeqa
Due to relation (\ref{measure-regularity-bound-one}), the second term on the r.h.s. above satisfies
\beqa
\|E((G_n\backslash \mcT)\cap K)\Psi\|\leq \fr{1}{n}.
\eeqa
As for the last term on the r.h.s. of (\ref{measure-decomposition}), we note that
\beqa
E(\mcT\cap \mcE^{(1)})=I_{\LLP}^*\Bigl(\int^{\oplus} d\xi\, E_{\xi}(\mcT(\xi)\cap \mcE^{(1)}(\xi))\Bigr)I_{\LLP}=E(\Sigma_{\pp}\cap \mcE^{(1)}),
\eeqa
where we made use of the fact that, by Theorem~\ref{Mourre-estimate-theorem}, $\mcT(\xi)\cap \mcE^{(1)}(\xi)$ is countable for 
any $\xi$. Thus $E(\mcT\cap K)\Psi$ belongs to the range of $E(\Sigma_{\pp}\cap\mcE^{(1)})$ and hence belongs to the range
of the wave operator. In fact, for any $\Psi\in E(\Sigma_{\pp}\cap\mcE^{(1)})\hil$  we have
\beqa
\tOmeR^{+}(\Psi\otimes \vac)=\Psi,   \label{second-eigenvectors-relation}
\eeqa
similarly as in  formula~(\ref{eigenvectors-relation}) above. 

Let us now consider the first term on the r.h.s. of (\ref{measure-decomposition}). For any $(\xi_0,\la_0)\in K_n$ there
exists a neighbourhood $\mco$ as specified in Definition~\ref{small-neighbourhood}. Such neighbourhoods form a
covering of $K_n$ from which we can choose a finite sub-covering $\{\mco_j\}_{j=1}^{N_0}$. By taking intersections of these sets, if necessary, we can
find a finite family of disjoint Borel sets $\{B_i\}_{i=1}^N$, such that each $B_i\subset \mco_{j_i}$ for some $i\in\{1,\ldots, N_0\}$ and their union 
coincides with $K_n$. Thus, by Theorem~\ref{isometric-property}, and Proposition~\ref{local-existence-of-wave-operators}, we can write
\beqa
E(K_n)=\sum_{i=1}^N E(B_i)=\sum_{i=1}^N W_{\mco_{j_i}}^+W^{+*}_{\mco_{j_i}} E(B_i)=\sum_{i=1}^N \tOmeR^+W^{+*}_{\mco_{j_i}} E(B_i),
\eeqa
where $W^+_{\mco_{j_i}}$ are the \localized wave operators  and we made use of the fact that
they  intertwine $(P^{\ex},H^{\ex})$ with $(P,H)$. Thus the range of $E(K_n)$ is contained
in the range of $\tOmeR^{+}$.

Summing up, for any $\Psi\in E(K)\hil$ there exists a sequence of vectors $\Psi_n\in \Ran\,\tOmeR^{+}$ s.t.
\beqa
\|\Psi-\Psi_n\|\leq \fr{1}{n}.
\eeqa
Note that by construction $\tOmeR^+$ vanishes on $E^\ex(\mcR)\hil^\ex \ominus (E(\Sigma_\pp)\otimes 1)E^\ex(\mcR)\hil^\ex$. Hence, by relation~(\ref{conv-wave-isometry}),  
$\Ran\,\tOmeR^{+}$ is closed and we obtain that $\Psi\in \mathrm{Ran}\,\tOmeR^{+}$.
This completes the proof of the fact that $\mathrm{Ran} \,\tOmeR^+\supset E(\mcR)$. The opposite inclusion follows from the
intertwining relation~(\ref{finite-intertwining}). 
\end{proof}

We are now in a position to extract our main theorem from 
Subsection~\ref{results-subsection}, as well as its corollary.
In proofs we will make use of the identity
\beqa
E^{\ex}(\mcR)\hil_+= E^{\ex}(\mcR)(E(\Sigma_{\pp})\otimes 1) \hil^{\ex} = (E(\Sigma_{\pp})\otimes 1) E^{\ex}(\mcR) \hil^{\ex}, \label{extended-hilbert-space-decomp}
\eeqa 
which follows from the definition $\hil_+=(E(\Sigma_{\pp})\otimes 1)\hil^{\ex}$
of the outgoing Hilbert space.

\begin{proof}[Proof of Theorem~\ref{main-theorem}] 
By \eqref{extended-hilbert-space-decomp} and Theorem~\ref{isometry} we conclude the existence of $\OmeR^+=(\tOmeR^+)_{|E^{\ex}(\mcR)\hil_+}$ and the
property
\beqa
\OmeR^{+*}\OmeR^+= E^{\ex}(\mcR)_{|\hil_+}. \label{restricted-wave-operator-isometry}
\eeqa
By construction of $\tOmeR^+$, we observe that $\tOmeR^+\Psi = 0$, for $\Psi \in E^\ex(\mcR)\hil^\ex \ominus E^\ex(\mcR)\hil_+$. Hence,
 Theorem~\ref{asymptotic-completeness-theorem} gives  $\Ran\, \OmeR^+=E(\mcR)\hil$.
Together with (\ref{restricted-wave-operator-isometry})  this implies unitarity of $\OmeR^+\colon E^{\ex}(\mcR)\hil_+\to E(\mcR)\hil$. That is, 
$\OmeR^+\OmeR^{+*}=E(\mcR)$, which concludes the proof. 
\end{proof}

\begin{proof}[Proof of Corollary~\ref{HR-corollary}] To prove  part \ref{HR-Cor-a}, we recall that $a_+^*(h)\Psi:=\OmeR^+( \Psi \otimes a^*(h)\vac)$,
where $\Psi\in E(\mcR)\hil_{\bnd}$ and $h$ are $\mcR$-compatible. Now we compute, making use of (\ref{restricted-wave-operator-isometry}) 
\begin{align}
\lan a_+^*(h)\Psi,a_+^*(h')\Psi'\ran=     \lan \OmeR^+(\Psi\otimes a^*(h)\vac), \OmeR^+( \Psi'  \otimes a^*(h') \vac )\ran    
=\lan \Psi,\Psi'\ran \lan h,h'\ran.
\end{align}
Similarly, for  $\Psi''\in  E(\mcR)\hil_{\bnd}$,
\begin{align}
\lan a_+^*(h)\Psi, \Psi''\ran=\lan \OmeR^{+}(\Psi\otimes a^*(h)\vac),   \OmeR^{+}(\Psi''\otimes \vac)\ran=0,
\end{align}
where we made use of the fact that $\OmeR^{+}(\Psi''\otimes \vac)=\Psi''$ 
(cf. relation~(\ref{trivial-action-wave})) and of (\ref{restricted-wave-operator-isometry}). 

To prove part \ref{HR-Cor-b} of the corollary, we recall that $\Ran\, \OmeR^+=E(\mcR)\hil$ and therefore, any vector
$\Psi_1\in E(\mcR)\hil$ can be written as $\Psi_1=\OmeR^+\Psi_2$, where $\Psi_2\in E^{\ex}(\mcR)\hil_+$. By \eqref{PreFormOfAsymptoticStates} in Lemma~\ref{energetic-considerations}, applied with $O=\mcR$,  we find that
$E^{\ex}(\mcR)\hil^\ex = E(\mcR)\hil \oplus E^{(1)}(\mcR)(\hil_\iso\otimes\mfh)$. Since the second summand is already sitting inside $E^\ex(\mcR)\hil_+$, cf.~\eqref {extended-hilbert-space-decomp}, we find that
\begin{equation}
E^{\ex}(\mcR)\hil_+ = E(\mcR)\hil_\bnd \oplus E^{(1)}(\mcR)(\hil_\iso\otimes\mfh).
\end{equation}
The claim \ref{HR-Cor-b} now follows from \eqref{FormOfAsymptoticStates}, applied with $O = \mcR$.
\end{proof}

\appendix

\section{Fock space combinatorics}\label{Fock-combinatorics}

\setcounter{equation}{0}

\subsection{Fock space}\label{Fock-space-general}

Let $\mfh$ be the single-particle space and $\Ga(\mfh)$ be the symmetric Fock space over $\mfh$ 
given by
\beqa
\Ga(\mfh):=\bigoplus_{n\geq 0}\Ga^{(n)}(\mfh),  
\eeqa
where $\Ga^{(n)}(\mfh)=\mfh^{\otimes_s^{n}}$.  $\Ga^{(0)}(\mfh)$ is spanned by the  vacuum vector  denoted by $\vac$. 
(If the single-particle space $\mfh$ is fixed, we  use a shorter notation $\Fock:=\Ga(\mfh)$ and $\Fock^{(n)}:=\Ga^{(n)}(\mfh)$).
For any set $D\subset \mfh$ we set $\Ga^{(n)}(D)=D^{\otimes_s^{n}}$ and define $\Ga_{\fin}(D)$ as  
the space of finite linear combinations of vectors from $\Ga^{(n)}(D)$, $n=0,1,2\dotsc$

Let $D\subset \mfh$ be a dense domain and  $a\colon D\to \mfh$ a linear map.
Then $\dGa(a)$ is defined on $\Ga^{(n)}(D)$, $n\geq 1$, by
\beq
\dGa(a):=\sum_{i=1}^n \underbrace{1 \otimes\cdots\otimes 1}_{i-1}\otimes a \otimes \underbrace{1\otimes \cdots \otimes 1}_{n-i}. \label{dGamma-def}
\eeq
and extended to $\Ga_{\fin}(D)$ by linearity and the relation $\dGa(a)\vac=0$.  In particular,  $N:=\dGa(1)$ is called the number operator. 
We recall that if $a$ is closable then so is $\dGa(a)$.  Moreover, if $a$ is essentially self-adjoint on $D$, then $\dGa(a)$ is essentially self-adjoint
on $\Ga_{\fin}(D)$. Finally, if $b$ is a quadratic form on $D_1\times D_2$, where $D_1,D_2$ are dense domains in $\mfh$, then one can define $\dGa(b)$
as a quadratic form on $\Ga_{\fin}(D_1)\times \Ga_{\fin}(D_2)$.

Let $\mfh_1,\mfh_2$ be two single-particle spaces and let  $D_1\subset \mfh_1$ be a dense domain.  For any linear map $q: D_1\to\mfh_2$
we define a map $\Ga(q)$ on $\Ga^{(n)}(D_1)$, $n\geq 1$, by
\beq
\Ga(q):= \underbrace{q\otimes\cdots \otimes q}_n   
\eeq 
and extend it to $\Ga_{\fin}(D_1)$ by linearity and the relation $\Ga(q)\vac=\vac$. If $q$ is a contraction (i.e. $\|q\|\leq 1$) then
$\Ga(q)$ extends to a contraction $\Ga(\mfh_1)\to\Ga(\mfh_2)$. We  recall that for a contraction $q$ acting on $\K\otimes \mfh$ one
can define $\Ga(q)$ as a contraction on $\K\otimes \Ga(\mfh)$. 
See \cite[Remark~1.1]{MR12}.

Let $q,a_1,\ldots, a_m$ be operators $D_1\to \mfh_2$ defined on some
common dense domain $D_1\subset\mfh_1$. Then we define $\dGa(q,a_1,\ldots, a_m)$ on  $\Ga^{(n)}(D_1)$, $n \geq m$, by
\beqa
\dGa(q,a_1,\ldots, a_m):=\!\!\!\!\sum_{\su{i_1,\ldots,i_m \\ \forall_{k\neq l} i_k\neq i_l}}
(\underbrace{q\otimes\cdots\otimes q}_{i_1-1}\otimes a_1\otimes q\otimes \cdots\cdots\otimes q \otimes a_m
\otimes \underbrace{q\otimes \cdots\otimes  q}_{n-i_m} ), \label{new-object-def}
\eeqa 
and extend it to $\Ga_{\fin}(D_1)$ by linearity and by setting $\dGa(q,a_1,\ldots, a_m)=0$ on $D_1^{\otimes_s^n}$, where $0\leq n < m$.

We note the following simple relation between the objects introduced above: Let $q$, $p$ be  bounded operators on $\mfh$ which commute  and let $a$ be a self-adjoint operator on some domain $D\subset \mfh$. Then
\beqa
[\Ga(q),\dGa(p,a)]=\dGa(qp,[q,a]) \label{Gamma-dGamma-comm}
\eeqa
in  the sense of quadratic forms on $\Ga_{\fin}(D)\times \Ga_{\fin}(D)$.

\subsection{Extended Fock space} \label{Extended-appendix}

Extended Fock space is defined by $\Ga^{\ex}(\mfh):=\Ga(\mfh)\otimes\Ga(\mfh)$ (or, in a shorter notation, 
$\Fock^{\ex}:=\Fock\otimes\Fock$).
Let $U\colon\Ga(\mfh\oplus\mfh)\to\Ga(\mfh)\otimes \Ga(\mfh)$ be the canonical identification, defined
by the relations
\begin{equation}
Ua^*(h)=(a^*(h)\otimes 1+1\otimes a^*(h))U \quad \textup{and} \quad U\vac=\vac\otimes \vac.
\end{equation}
We will use $U$ to transport objects defined in the previous subsection to the extended Fock space:
Let $q_0, q_{\infty}$ and $a_0,a_{\infty}$  be operators  on  $\mfh$ defined on a dense domain $D\subset\mfh$. 
Let $\un q:=\diag(q_0,q_{\infty})$ and  $\un a:=\diag(a_0,a_{\infty})$ be  operators  on $\mfh\oplus\mfh$
defined on the domain $D\oplus D$.
Then we introduce the following operators on the extended Fock space:
\begin{align}
\Ga^{\ex}(\un q)&:= U\Ga(\un q)U^*=\Ga(q_0)\otimes \Ga(q_{\infty}),\\
\dGa^{\ex}(\un a)&:= U\dGa(\un a)U^*=\dGa(a_0)\otimes 1+1\otimes \dGa(a_{\infty}),\\
\dGa^{\ex}(\un q,\un a)&:= U\dGa(\un q,\un a)U^*=\dGa(q_0, a_0)\otimes \Ga(q_{\infty})+\Ga(q_{0})\otimes \dGa(q_{\infty},a_{\infty}),
\label{dGa-q-p-def-double}
\end{align}
which are defined on $\Ga_{\fin}(D)\otimes \Ga_{\fin}(D)$ and in the last equality we used Lemma~\ref{more-fock-combinatorics}, 
stated below. 
We note that if $q_0$ and $q_{\infty}$ are contractions, then $\Ga^{\ex}(\un q)$ is also a contraction.
In this situation we set 
\begin{align}
\Ga^{(n)}(\un q)&:= \Ga^{\ex}(\un q)_{|\Ga(\mfh)\otimes\Ga^{(n)}(\mfh)},\\
\dGa^{(n)}(\un a)&:= \dGa^{\ex}(\un a)_{|\Ga_{\fin}(D)\otimes \Ga^{(n)}(D) },\\ 
\dGa^{(n)}(\un q,\un a)&:= \dGa^{\ex}(\un q,\un a)_{|\Ga_{\fin}(D)\otimes \Ga^{(n)}(D)}.
\end{align}
In the special case where $q_0=q_{\infty}=:q$, $a_0=a_{\infty}=:a$ we will drop the underlining
and write
\begin{align}
\Ga^{\ex}(q)&= \Ga(q)\otimes \Ga(q),\\
\dGa^{\ex}(a)&= \dGa(a)\otimes 1+1\otimes \dGa(a),\\
\dGa^{\ex}(q, a)&= \dGa(q, a)\otimes \Ga(q)+\Ga(q)\otimes \dGa(q,a), \label{dGa-q-p-def}
\end{align}
which is the standard notation. Since extended objects are unitarily equivalent to
operators on $\Ga(\mfh\otimes\mfh)$, the properties of closedness and essential
self-adjointness are naturally inherited from the single-particle level, as discussed
after formula~(\ref{dGamma-def}) above.

Now let $c_0,c_{\infty}$ be bounded operators on $\mfh$. We define
$c: \mfh \to \mfh\oplus\mfh$, which acts on $h\in\mfh$ by  $c h=(c_0h, c_{\infty}h)$, 
and is s.t. $\|c^*c\|=\|c_0^*c_0+c_{\infty}^*c_{\infty}\|\leq 1$. Then 
\beqa
\cGa(c)&:=&U\Ga(c) \label{check-gamma-def}
\eeqa
is a mapping $\Fock\to\Fock^{\ex}$ of norm one \cite{DGe1}. 
We also define $\cGa^{(n)}(c):=P_{n}\cGa(c)$, where  $P_n\colon \Fock^{\ex}\to \Fock\otimes\Fock^{(n)}$ 
is  the natural restriction map. 
Next, given a linear map $a\colon D\to \mfh\oplus \mfh$, where $D\subset \mfh$, we set
\beqa
\cdGa(c,a):=U\dGa(c,a),
\eeqa
which is a mapping  $\Ga_{\fin}(D)\to\Fock^{\ex}$. 
We also define $\cdGa^{(n)}(c,a)=P_{n}\cdGa(c,a)$.

Let us denote by $(1,1)$ the map $\mfh\to\mfh\oplus\mfh$ which acts by $(1,1)h=(h,h)$,
where $h\in\mfh$. We note that $\|(1,1)^*(1,1)\|=\sqrt{2}$ and  define $\cGa(1,1)$ as an
unbounded operator on $\Ga_{\fin}(\mfh)$. As stated in \cite{DGe1}, the following operators 
\beqa
\cGa(1,1)^*\bigl((N+1)^{-\frac{n}2}\otimes \mathbf{1}_{\{n\}}(N)\bigr) \label{Gamma-one-one}
\eeqa
are bounded for any $n\in\nat$.

\subsection{Useful lemmas}

In this subsection we collect some simple relations between operators on Fock space, which are used repetitively
in the paper. Most of these relations are well known (see e.g. \cite[Section~2]{DGe1} for (\ref{dGa-astar}) and (\ref{dGa-a})).
\bel\label{q-p-lemma} Let $q,p$ be bounded operators and $h\in\mfh$. Then the following
equalities hold in the sense of quadratic forms on $\Ga_{\fin}(\mfh)\times \Ga_{\fin}(\mfh)$:
\begin{align}
&  \dGa(q,p)a^*(h)=a^*(ph)\Ga(q)+a^*(qh)\dGa(q,p), \label{dGa-astar}\\ 
&  a(h)\dGa(q,p)=\Ga(q)a(p^*h)+\dGa(q,p)a(q^*h). \label{dGa-a}
\end{align}
\eel
\begin{proof} Note the identity $\Ga(q+sp)a^*(h)=a^*((q+sp)h)\Ga(q+sp)$ valid for any $s\in \real$.
By computing the matrix elements of this expression between vectors from $\Ga_{\fin}(\mfh)$
and differentiating them  w.r.t. $s$ at $s=0$ we conclude the proof. \end{proof}

\bel\label{rest-terms} Let $\om, a_1,\ldots a_n$ be operators defined on a common domain $D\subset\mfh$, whose
adjoints are defined on a common domain $D^*\subset\mfh$. Let $j$ be a bounded operator on $\mfh$. Then, 
in the sense of quadratic forms on $\Ga_{\fin}(D^*)\times \Ga_{\fin}(D)$
\begin{align}
&  [\dGa(\om),\dGa(j,a_1,\ldots, a_n)]\non\\
& \qquad = \dGa(j, [\om,j], a_1,\ldots, a_n)+\sum_{i=1}^n\dGa(j,a_1,\ldots, [\om, a_i],\ldots, a_n ).\label{new-object}
\end{align}
Now suppose that $j\colon\mfh_1\to\mfh_2$ is  s.t. $\|j\|\leq 1$  and $ a_1,\ldots, a_n\colon \mfh_1\to\mfh_2$ are bounded operators. Then
\beqa
\|\dGa(j, a_1,\ldots, a_n)(1+N)^{-n}\|\leq C\|a_1\|\ldots\|a_n\|.\label{norm-bounds-new-objects}
\eeqa
\eel
\begin{proof} Relation (\ref{new-object}) can easily be seen by differentiating the function
\beqa
(s,s_1,\ldots, s_n)\to\lan \Psi_1,[\Ga(1+s\om), \Ga(j+s_1a_1+\cdots+s_na_n)]\Psi_2\ran,
\eeqa
where $\Psi_1\in \Ga_{\fin}(D^*)$ and $\Psi_2\in \Ga_{\fin}(D)$,
w.r.t. each of the parameters separately and then setting $(s,s_1,\ldots s_n)=0$. The bound (\ref{norm-bounds-new-objects})
follows immediately from  definition~(\ref{new-object-def}). \end{proof}

\bel\label{more-fock-combinatorics} 
Let $q_{0}, q_{\infty}$ be bounded operators on $\mfh$, and let  $p_0, p_{\infty}$ be defined  on a domain $D\subset \mfh$. We define the following operators on $\mfh\oplus\mfh$:
\begin{equation}
\un{q}:=\begin{pmatrix} q_0 & 0 \\ 0 & q_{\infty} \end{pmatrix} 
\qquad \textup{and} \qquad   
\un{p}:=\begin{pmatrix} p_0 & 0 \\ 0 & p_{\infty} \end{pmatrix}.
\end{equation}
 There holds the following identity on vectors from $\Ga_{\fin}(D)\otimes \Ga_{\fin}(D)$:
\beqa
U\dGa(\un q,\un p)U^*=\big(\dGa(q_0,p_0)\otimes \Ga(q_{\infty})+\Ga(q_0)\otimes \dGa(q_{\infty},p_{\infty})\big). \label{cGa-equality-one}
\eeqa
\eel
\begin{proof} Note that $U\Ga(\un q+s\un p)=(\Ga(q_0+sp_0)\otimes \Ga(q_{\infty}+sp_{\infty}))U$ for $s\in\real$.
By computing the matrix elements of this expression between vectors from the specified domains 
and differentiating them  w.r.t. $s$ at $s=0$ we conclude the proof. \end{proof}

\bel\label{rest-terms-double} Let $\om,c_{i,0},c_{i,\infty}$, $1\leq i\leq n$, be operators defined on a common domain $D\subset\mfh$,
whose adjoints are defined on a common domain $D^*\subset\mfh$.  We define $\un\om :=\diag(\om,\om)$ as an operator
 on $\mfh\oplus\mfh$ with a domain $D\oplus D$.  
Now let $j_{0}, j_{\infty}$ be bounded operators
on $\mfh$. We define $c_i:=(c_{i,0},c_{i,\infty})$ as  maps $D\to \mfh\oplus\mfh$ and $j:=(j_{0}, j_{\infty})$ as a map
$\mfh\to\mfh\oplus\mfh$.  Then the following relation holds in the sense of quadratic forms on 
$\Ga_{\fin}(D^*\oplus D^*)\times \Ga_{\fin}(D)$: 
\begin{align}
&  \dGa(\un \om)\dGa(j,c_1,\ldots, c_n)-\dGa(j,c_1,\ldots,c_n)\dGa(\om)\non\\
& \qquad=\dGa(j, [\un\om,j], c_1,\ldots, c_n )+\sum_{i=1}^n\dGa(j, c_1,\ldots,[\un\om, c_i],\ldots, c_n), \label{new-object-repeated}
\end{align}
where  $[\un\om,j]:=\un\om j-j\om=([\om,j_0],[\om,j_{\infty}])$.
\eel
\begin{proof}  The relation follows by differentiating the function 
\beqa
(s,s_1,\ldots,s_n )\to\lan \Psi_1,(\Ga(1+ s\un\om)\Ga(j+\sum_{i=1}^n s_ic_i)-\Ga(j+\sum_{i=1}^n s_ic_i)\Ga(1+s\om))\Psi_2\ran,\quad
\eeqa
where $\Psi_1\in \Ga_{\fin}(D^*\oplus D^*)$ and $\Psi_2\in \Ga_{\fin}(D)$, w.r.t. $s$ and $s_i$
separately and setting $(s,s_1,\ldots,s_n)=0$. \end{proof}
\bel \label{high-order-commutators} Let $a,b$  be operators defined on some common domain $D\subset\mfh$,
whose adjoints are defined on some common domain $D^*\subset\mfh$.  We define $\un a:=\diag(a,a)$ and
$\un b:=\diag(b,b)$ as operators on $\mfh\oplus\mfh$ with domains $D\oplus D$.
Let $q,j_0,j_{\infty}\in B(\mfh)$ and suppose that $[q,j_0]=0$ and $[q,j_{\infty}]=0$.
We define $j:=(j_0,j_{\infty})$ to be a map $\mfh\to\mfh\oplus\mfh$ and we specify $\un q:=\diag(q,q)$  
to be an operator on $\mfh\oplus\mfh$. 
Then, in the sense of quadratic forms on $\Ga_{\fin}(D^*\oplus D^*)\times \Ga_{\fin}(D)$ 
\begin{align}
 \Ga(j)\dGa(q,a)-\dGa(\un q,\un a)\Ga(j) & =\dGa(jq,[j,\un a]), \\
 \Ga(j)\dGa(q,a,b)-\dGa(\un q,\un a,\un b)\Ga(j)& =\dGa(jq,[j, \un a], [j, \un b] )+\dGa(jq,[j, \un a], \un b j)\non\\
& \quad  +\dGa(jq, \un a j, [j, \un b] ),
\end{align}
where $[j,\un a]=j a-\un a j$.
\eel
\begin{proof} The relations follow by differentiating the matrix elements of the functions
\begin{align}
s&\to \Ga(j)\Ga(q+sa)-\Ga(\un q+s\un a)\Ga(j),\\
(s,s_1)&\to  \Ga(j)\Ga(q+sa+s_1b)-\Ga(q+s\un a+s_1\un b)\Ga(j),
\end{align}
in each argument separately, and setting $s=0$, respectively $(s,s_1)=0$. \end{proof}

\section{Commutator expansions}
\setcounter{equation}{0}

Commutator expansions for  functions of several  commuting observables, which we need in the present work, 
were established  in \cite{Ra10}. To state this result, we need first several definitions: 
For $\rho\in \real$, we define the class of functions $S^{\rho}(\real^{\nu})\subset C^{\infty}(\real^{\nu})$,
s.t.
\beqa
|\pa^\al f(x)|\leq C_\al\lan x \ran^{\rho-|\al|},
\eeqa
for any multiindex $\al$. 
In the definition below we use the notation
 $\de_j:=(0,\ldots, 1,\ldots, 0)\in\nat^\nu$, with $1$ on the $j$-th entry.

\bed\label{C-1-definition} Let $A = (A_1,\dotsc,A_\nu)$ be a vector of 
commuting self-adjoint operators with domains $D(A_j)\subset \hil$, and $B$ a bounded operator on $\hil$.
We say that $B\in C^1(A)$, or $B$ is of class $C^1(A)$, if 
the commutator forms $[A_j,B]$, a priori defined as quadratic forms on $D(A_j)$, extend by continuity to bounded operators $[A_j,B]^\circ =: \ad_{A_j}(B) = \ad^{\de_j}_A(B)$.  For $n_0>1$ we define
the class $C^{n_0}(A)$ and iterated commutators $\ad^\al_{A} (B)$
recursively: We say that $B\in C^{n_0}(A)$ if $B\in C^{n_0-1}(A)$
and for any $|\al|< n_0$ and $j\in\{1,\dotsc,\nu\}$, the commutator forms $[A_j, \ad_A^{\al}(B)]$ extend by continuity to bounded operators $\ad_A^{\al+\de_j}(B)$.
\eed
\begin{remark} In the case $\nu=1$ the above definition reduces to a more standard one: $B\in B(\hil)$ belongs to $C^{n_0}(A)$
if for any $\Psi\in\hil$  the map 
\beqa
\real\ni s\to \e^{\i sA}B\e^{-\i sA}\Psi
\eeqa
is $n_0$ times continuously differentiable in the norm topology. We also recall that this definition can be naturally extended to (possibly unbounded) 
self-adjoint operators: In this case we say that $B\in  C^{n_0}(A)$ if $(B-z)^{-1}\in B(\hil)$ is in $C^{n_0}(A)$ for some  -- hence all -- $z\in\complex\backslash \sigma(H)$  \cite{MR12}. Finally, we remind the reader that 
if $B\in C^1(A)$ we have $B D(A)\subset D(A)$.
\end{remark}

Now we are ready to state  the main result of \cite{Ra10}. 
\bel\label{Lemma-Morten} Let $B\in B(\hil)$ and $A=(A_1,\ldots, A_{\nu})$ be a family of self-adjoint, pairwise commuting operators. 
Assume that $B\in C^{n_0}(A)$ for some $n_0\geq n+1\geq 1$, $0\leq t_1\leq n+1$  and  $0\leq t_2\leq 1$
and that $f\in S^{\rho}(\real^{\nu})$ for some $\rho\in \real$ s.t. $t_1+t_2+\rho<n+1$. Then
\beqa
[f(A),B]=\sum_{\al: 1\leq |\al|\leq n}(-1)^{|\al|+1}\fr{1}{\al!}\pa^{\al} f(A)\ad_{A}^{\al}(B)+R_{n+1}(f,A,B),
\eeqa 
as an identity on $D(\lan A \ran^\rho)$, where $R_{n+1}(f,A,B)\in B(\hil)$ and there exists a
constant $c_{n}(f)$, independent of $A$, $B$, s.t. 
\beqa
\|\lan A \ran^{t_1}  R_{n+1}(f,A,B) \lan A \ran^{t_2} \|\leq c_{n}(f)\sum_{\al:|\al|=n+1}\|\ad_A^{\al}(B)\|.
\eeqa
\eel

\begin{remark} One can of course read the commutator expansion in Lemma~\ref{Lemma-Morten} as a form identity
on $D(\lan A\ran^\rho)$.
We wish to argue in this remark, that one can also read it is an operator identity
on $D(\lan A\ran^\rho)$. Suppose $B\in C^{n_0}(A)$, with $n_0\in\nat$. 
Assume $B\in C^{n_0}(A)$. We wish to prove by induction after $n_0$ that 
\begin{equation}\label{AbstractDomInv}
\ad^\alpha_A(B) D(\lan A\ran^{n_0})
\subset D(\lan A\ran^{n_0-|\alpha|}),
\end{equation}
 for multiindices $\alpha$ with $|\alpha|\leq n_0$. 
From \eqref{AbstractDomInv}, using $\rho<n_0$,  it follows by interpolation that 
$\ad_A^\alpha(B) D(\lan A\ran^\rho)\subset D(\lan A\ran^{\rho - |\alpha|})$.
Hence the expansion in Lemma~\ref{Lemma-Morten} is meaningful as an operator identity.
 
Let $B\in C^1(A)$. 
Recall that $B D(A_j)\subset D(A_j)$, for all $j$. Hence, $B D(\lan A\ran)\subset D(\lan A\ran)$.
This proves the claim for $n_0=1$.

Assume that \eqref{AbstractDomInv} is true with $n_0$ replaced by an integer $n < n_0$. To show that  
\eqref{AbstractDomInv} holds true also with $n_0$ we proceed as follows.
We use descending induction after $|\alpha|$, with the case $|\alpha| = n_0$ being trivial.
For $\alpha$ with $|\alpha|< n_0$ we know that $\ad_A^\alpha(B)\in C^1(A)$, and hence
$\ad_A^\alpha(B)D(\lan A\ran )\subset D(\lan A\ran)$ (by the $n_0=1$ step)
 and $A_j \ad_A^\alpha(B) = \ad_A^{\alpha+\delta_j}(B) + \ad_A^\alpha(B)A_j$. 
 The claim now follows by the two induction hypotheses. 
\end{remark}

In our investigations we will use two  special cases of Lemma~\ref{Lemma-Morten}, which we now state explicitly:
\bel\label{many-A} Let $B\in B(\hil)$ and $A=(A_1,\ldots, A_{\nu})$ be a family of self-adjoint, pairwise commuting operators.
 Assume that $B\in C^{3}(A)$ 
and that $f\in S^{2}(\real^{\nu})$. Then
\beqa
[f(A),B]=\sum_{\al:  |\al|=1}\pa^{\al} f(A)\ad_{A}^{\al}(B)+R(f,A,B),
\eeqa 
as an identity on $D(\lan A \ran^2)$, where $R(f,A,B)\in B(\hil)$ and there exists a
constant $c(f)$, independent of $A$ and $B$, s.t. 
\beqa
\| R(f,A,B) \|\leq c(f)\sum_{\al:2\leq |\al|\leq 3}\|\ad_A^{\al}(B)\|.
\eeqa
\eel
\bel\label{one-A} Let $B\in B(\hil)$ and $A$ be a self-adjoint operator. 
Assume that $B\in C^{n_0}(A)$ for some $n_0\geq n+1\geq 1$, 
and that $f\in S^{0}(\real^{\nu})$. Then 
\beqa
[f(A), B]=\sum_{j=1}^n(-1)^{j-1}\fr{1}{j!}f^{(j)}(A)\ad_A^j(B)+R_{n+1}(f,A,B), \label{one-A-expansion}
\eeqa
as an identity on $\hil$, where $R_{n+1}(f,A,B)\in B(\hil)$ and there exists a
constant $c_{n}(f)$, independent of $A$, $B$, s.t. 
\beqa
\|  R_{n+1}(f,A,B) \|\leq c_{n}(f)\|\ad_A^{n+1}(B)\|. \label{d-g-rest-bound}
\eeqa
\eel

\section{Commutator bounds in $L^2(\real^{\nu})$ }\label{commutator-bounds-appendix}
\setcounter{equation}{0}

 Let $A$ and $H$ be self-adjoint operators on a Hilbert space $\hil$, defined on domains $D(A)$ and $D(H)$, and
 s.t. $H$ is of class $C^1(A)$ (cf. Definition~\ref{C-1-definition} above). We recall that the natural  domain $D(A)\cap D(H)$ of the commutator form $\i[H,A]$ is dense in $D(H)$ in the topology given by the norm $\|\Psi\|_H:=\|H\Psi\|+\|\Psi\|$.  We  will write $\i [H,A]^\circ$  for the extension by continuity of the commutator form
from $D(A)\cap D(H)$ to $D(H)$ (and also  for the associated operator
from $D(H)$ to $D(H)^*$). 
If, furthermore, $\i[H,A]^\circ$ extends by continuity to an element of  
$B(\hil)$, as is sometimes the case below,
then $[H,A]^\circ$ will denote this extension.

First, we recall the following abstract result from \cite{Mo81}:
\bep\label{Mourre} Let $H$ and $A$ be self-adjoint operators that satisfy 
\begin{enumerate}[label = \textup{(\alph*)}, ref =\textup{(\alph*)},leftmargin=*]
\item\label{MourreAss-a} $D(A)\cap D(H)$ is a core for $H$. 
\item\label{MourreAss-b} $\e^{\i t A}D(H)\subset D(H)$  and for each $\Psi\in D(H)$ we have $\sup_{|t|<1}\|H\e^{\i t A}\Psi\|<\infty$.
\item\label{MourreAss-c} There is a set $S\subset D(A)\cap D(H)$ which is a core for $H$ and is invariant under $\e^{\i tA}$. 
The form $\i [H,A]$ on $S$ is bounded below and closable, and the  associated self-adjoint operator $\i [H,A]_S^\circ$
satisfies $D(\i [H,A]_S^\circ)\supset D(H)$.   
\end{enumerate}
Then, for all $\Phi,\Psi\in D(A)\cap D(H)$ 
\beqa
\lan \Phi, \i [H,A]\Psi \ran=\lan \Phi, \i [H,A]_S^\circ\Psi \ran.
\eeqa
\eep
\noindent Making use of the above proposition, we  prove the following technical lemma:
\bel\label{domain-questions} Let $g\in C^{\infty}(\real^{\nu})_{\real}$ 
and  let  $v\in C_0^{\infty}(\real^{\nu};\real^{\nu})$. Let $a:=\h(v\cdot\i \nabla_k+\i \nabla_k\cdot v)$, which defines a
  self-adjoint operator in $L^2(\real^{\nu})$. Using the same notation for real-valued functions and their associated self-adjoint multiplication operators 
  we have: $g$ is of class  $C^1(a)$ and the operator $\i[a,g]^\circ$ extends by continuity from $D(g)$ to a bounded operator on 
$L^2(\real^\nu)$ given by
\begin{equation}
\i[a,g]^{\circ} = -v\cdot \nabla g.
\end{equation}
Moreover,   $(z-a)^{-1}$ leaves $D(g)$ invariant for any $z\in\complex\backslash\real$. More precisely, for any $u\in D(g)$ there holds
\begin{equation}
\|g (z-a)^{-1}u\|_2 \leq  \fr{c}{|\Im z |}(\|u\|_2+\|gu\|_2), \label{first-resolvent-bound}
\end{equation}
for come $c\geq 0$ independent of $u$ and $z$.
\eel
\begin{proof} We set in Proposition~\ref{Mourre} $A=a$ and $H=g$ and verify the assumptions: As for \ref{MourreAss-a}, we note that $C_0^{\infty}(\real^{\nu})$, which is a core for $g$, is a subset of $  D(a)\cap D(g)$.

To prove \ref{MourreAss-b}, we follow  \cite{Mo81, MR12}: We recall  that $w_t:= \e^{\i t a}$ is closely related to the
flow $\psi_t$ of the equation $\dot{\psi_t}=v(\psi_t)$ with the initial condition $\psi_0(k)=k$. 
Let $J_t$ be the determinant of the Jacobi matrix $D_k\psi_t$. There holds
 \beqa
(w_t u)(k)=\sqrt{J_t(k)}u(\psi_t(k)), \label{evolution-equation}
\eeqa
where $u\in D(g)$ and $J_t$ is uniformly bounded in $k$ as a consequence of  the Liouville formula:
\beqa
J_t=\e^{\int_0^t ds \mathrm{Tr}Dv(\psi_s(k)) }.
\eeqa
Making use of the boundedness of $v$ we obtain the property of finite propagation speed of $\psi_t$
\beqa
\sup_{k\in\real^{\nu}} \|\psi_t(k)-k\| \leq \sup_{k\in\real^{\nu}} \int_0^t ds\|v(\psi_s(k))\|\leq t\|v\|_{\infty}. \label{finite-speed}
\eeqa 
Equation~(\ref{evolution-equation}) gives 
\beqa
( |g(\psi_t(k))|+1)(|g|w_t u)(k)=|g(k)| \sqrt{J_t(k)} \big(u(\psi_t(k))+(|g|u)(\psi_t(k))\big),
\eeqa
and consequently,
\beqa
(|g|w_t u)(k)=\fr{|g(k)|}{( |g(\psi_t(k))|+1)} \big( (w_t u)(k)+(w_t|g|u)(k)\big). \label{comparing-evolutions}
\eeqa
We note that the
factor $|g(k)|(|g(\psi_t(k))|+1)^{-1}$ is bounded. This follows from the relation
\beqa
g(\psi_t(k))=g(k)+\int_0^{t}ds\, v(\psi_s(k))\cdot \nabla g(\psi_s(k))
\eeqa
and the fact that $v$ is compactly supported. Thus we obtain from formula~(\ref{comparing-evolutions}) that
$w_t u$ is in the domain of $g$ and
\beqa
\|(gw_t u)\|_2\leq c(1+|t|)(\|u\|_2+\|g u\|_2) \label{t-bound-on-the-norm}
\eeqa
for some $c\geq 0$ independent of $u$ and $t$. This concludes the proof of property \ref{MourreAss-b}. 

As for \ref{MourreAss-c}, we set $S=C_0^{\infty}(\real^{\nu})$ and conclude from the finite propagation speed property~(\ref{finite-speed}) 
that $w_t$ leaves $S$ invariant. On $S$ we easily obtain that
\beqa
\i[a,g]=-v\cdot \nabla g \label{form-equality}
\eeqa
and the r.h.s. is bounded due to the compact support of $v$. Thus  $\i [a,g]_S^\circ=-v\cdot \nabla g$ is defined
on the entire Hilbert space, which concludes our verification of \ref{MourreAss-c}.

Now we obtain from Proposition~\ref{Mourre} that  equality (\ref{form-equality}) holds in the sense of 
forms on $D(a)\cap D(g)$, and that $\i[a,g]$ can be extended to a bounded, self-adjoint operator $\i[a,g]^\circ$ 
which coincides with $-v\cdot \nabla g$.

Now let us show that $g\in C^1(a)$. We set $g_0(k)=(z-g(k))^{-1}$ and note that (\ref{form-equality})
applies to the real and imaginary parts of this function. Thus, by
 (\ref{form-equality}) $g_0$ leaves $D(a)$ invariant and  $\i[a,g_0]=g_0\i[a,g]g_0$, defined first as an  operator on  
$D(a)$, extends to a bounded operator on $L^2(\real^{\nu})$. Thus  Lemma~2.2 of \cite{MR12} gives that $g\in C^1(a)$.

Next, we show estimate~(\ref{first-resolvent-bound}). Let us assume that $\mathrm{Im} z>0$. Then
\beqa
(z-a)^{-1}=-\i\int_0^{\infty}dt \,\e^{\i a t}\e^{\i z t}  
\eeqa
and  property~(\ref{first-resolvent-bound}) follows from (\ref{t-bound-on-the-norm}). 
For $\mathrm{Im} z<0$ the argument is analogous. \end{proof}

\bel\label{iterated-commutators}  Let $g_1, \ldots, g_n\in C^{\infty}(\real^\nu)_{\real}$, 
and let $f \in  S^{0}(\real)_{\real}$. Then $f(a/t)\in C^n(g)$, where $g=(g_1,\ldots, g_n)$ is a family
of commuting self-adjoint operators (functions of $k$). More precisely:
 Let $\ti g_i(k):=\chi(k) g(k)$, where $\chi\in C_0^{\infty}(\real)_{\real}$ is equal to one on the
support of $v$ and vanishes outside of a slightly larger set. We define the following bounded operators for $n\in\nat$
\begin{align}
\ti I_0 & :=f(a/t),\\
\ti I_n & :=\i^n[\ti g_n,[\ldots,[\ti g_1, f(a/t)]\ldots]],\quad n\geq 1.
\end{align}
Then, in the sense of quadratic forms on $D(g_n)$,
\beqa
\i[g_n, \ti I_{n-1} ]=\ti I_{n}.
\eeqa
 Consequently,  $\ti I_{n-1}$ leaves $D(g_n)$ invariant and  $\ti I_n$ is the unique bounded operator
which coincides with $\i[g_n, \ti I_{n-1} ]$ on $D(g_n)$ (i.e. $\i[g_n, \ti I_{n-1} ]^{\circ}=\ti I_n$).
\eel
\begin{proof} Proceeding similarly as in \cite{FMS11}, we  write $h(x)=f(x)(x+i)^{-1}$ and choose an almost-analytic extension $\ti h\in C^{\infty}(\complex)$ of $h$, which satisfies
\begin{equation}
 |\pa_{\bar z}\ti h(z)|\leq C_N\lan z\ran ^{-2-N} |y|^N, \label{almost-analytic-bound-one}
\end{equation}
where $z=x+\i y$.  We set $\ha:=a/t$ and write 
\beqa
f(\ha)=\fr{\i}{2\pi}\int_{\complex}\pa_{\bar z}\ti h(z) (\i+\ha)(z-\ha)^{-1}dz \wedge d\bar z \label{tricky-extension}
\eeqa
as a strong integral on $D(a)$. Let us show that $f(\ha)\in C^1(g_1)$: Making use of Lemma~\ref{domain-questions}, and 
formula~(\ref{tricky-extension}), we can write  for
$u_1,u_2\in D(g_1)\cap D(a)$
\beqa
\lan u_1,[g_1, f(\ha)] u_2\ran=-\fr{\i}{t}\fr{\i}{2\pi}\int_{\complex}\pa_{\bar z}\ti h(z) 
\lan u_1, v\cdot\nabla g_1 \,  (z-\ha)^{-1}u_2\ran dz \wedge d\bar z\non\\
-\fr{\i}{t}\fr{\i}{2\pi}\int_{\complex}\pa_{\bar z}\ti h(z) \lan u_1, (\i+\ha)    (z-\ha)^{-1} v\cdot\nabla g_1 (z-\ha)^{-1} u_2\ran   dz \wedge d\bar z.
\eeqa
Due to  property~(\ref{almost-analytic-bound-one})
and the relations $\|(z-\ha)^{-1}\|=|\Im z|^{-1}$, $\|\ha (z-\ha)^{-1}\|\leq 1+|z|/|\Im z|$ we conclude that the integrals
are convergent. Moreover, we note that we can replace $g_1$ in this
formula by the compactly supported function $\ti g_1 =g_1 \chi$. Thus we obtain 
\begin{equation}
\lan u_1,[g_1, f(\ha)] u_2\ran=\lan u_1,[\ti g_1, f(\ha)] u_2\ran. \label{equality-of-gf-commutators} 
\end{equation}
Since $g_1\in C^1(a)$ by Lemma~\ref{domain-questions}, $D(g_1)\cap D(a)$ is dense in $D(g_1)$.
Hence the form  $\i [g_1, f(\ha)]$ is bounded on $D(g_1)$ and therefore $f(\ha)\in C^1(g_1)$. (Cf. Lemma~2.2 of \cite{MR12}).
As a consequence, $f(\ha)$ preserves $D(g_1)$ and we can write
\begin{equation}
\i [g_1,f(\ha)]^{\circ}=\i [\ti g_1, f(\ha)]. \label{new-equality-of-fg-commutators}
\end{equation}
Thus we have proved the lemma for $n=1$.

Let us now consider the case of  $n>1$.
In the sense of quadratic forms on $D(g_n)$ we can write
\begin{align}
\i[g_n, \ti I_{n-1} ]&=\i^n[\ti g_{n-1},[\ldots,[\ti g_1,\i[g_n, f(a/t)]]\ldots]]\non\\
&=\i^n[\ti g_{n-1},[\ldots,[\ti g_1,\i[\ti g_n, f(a/t)]]\ldots]]=\ti I_{n-1},
\end{align}
where in the second step we made use of (\ref{equality-of-gf-commutators}), which holds on $D(g_n)$ as we justified above. 
 Now the proof can be completed as in the case $n=1$. \end{proof}

Let us now proceed to the decay properties  of commutators constructed in the above lemma:
\bel\label{pseudo-lemma}  Let $g_1, \ldots, g_n\in C^{\infty}(\real^\nu)_{\real}$, 
 $f, j_1,\ldots, j_m\in  S^{0}(\real)$ and let us set $j_i^t:=j_i(a/t)$.  
Then $f(a/t)\in C^n(g)$, where $g=(g_1,\ldots, g_n)$ is a family of self-adjoint commuting operators (functions of $k$)
and   the following relations hold:
\begin{align}
& \i[g_1, f(a/t)]^\circ= \fr{1}{t}v\cdot \nabla g_1 f'(a/t)+O(t^{-2}), \label{low-order-commutator}\\
& \i^m[j_1^t,[\ldots,[j_m^t, g_1]^\circ\ldots]]=O(t^{-m}), \label{multiple-j-commutators}\\
&  \i^{n+m} [j_1^t,[\ldots,[j_m^t, [g_1,[\ldots,[g_n, f(a/t)]^\circ\ldots]^\circ]^\circ]\ldots]]=O(t^{-n-m}),\label{multiple-commutator}
\end{align}
Moreover, if $h\in C_0^{\infty}(\real)$ is s.t. $\supp\, f\cap\supp\, h=\emptyset$, then, for 
any $\ti n\in\nat$ (independent of $n$)
\beqa
\i^n[g_1,[\ldots,[g_n, f(a/t)]^\circ\ldots]^\circ]^\circ h(a/t)=O(t^{-\ti n}). \label{regularity-bound}
\eeqa
\eel
\begin{proof} We recall from Lemma~\ref{domain-questions} that the form $\i[a,g_1]$, defined first on $D(a)\cap D(g_1)$,  has a unique extension 
to a bounded operator $\i[a,g_1]^\circ$ which satisfies
\begin{equation}
\i[a,g_1]^{\circ} = -v\cdot \nabla g_1. \label{single-a-g-commutator}
\end{equation}
Thus we can define $\i^n\ad_{a}^n(g_1)$  iteratively: Suppose that $\i^{n-1}\ad_{a}^{n-1}(g_1)$ is a bounded operator  
which coincides with $(-v\cdot \nabla)^{n-1} g_1$. Then we define $\i[a, \i^{n-1}\ad_{a}^{n-1}(g_1)]$
as a quadratic form on $D(a)$ and set     $\i^n\ad_{a}^n(g_1):=\i[a, \i^{n-1}\ad_{a}^{n-1}(g_1)]^\circ$. It is clear from 
relation~(\ref{single-a-g-commutator}) that 
\beqa
\i^n\ad_{a}^n(g_1)=(-v\cdot \nabla)^{n}g_1. \label{iterted-a-g-commutator}
\eeqa

Now we recall from Lemma~\ref{iterated-commutators} that $\i[j_1^t, g_1]^{\circ}=\i[j_1^t, \ti g_1]$, where $\ti g_1$ is a compactly supported
function of the momentum operator. Since $\ti g_1$  belongs to $C^n(a)$ for any $n\in\nat$, by relation (\ref{iterted-a-g-commutator}),
we conclude from~(\ref{one-A-expansion}) that
\begin{align}
[j_1^t, \ti g_1]&=\sum_{\ell=1}^{n'-1}(-1)^{\ell-1}\fr{1}{\ell!}(j_1^{(\ell)})^t\ad_{a/t}^\ell( \ti g_1)+R_{n'}(j_1,a/t, g_1)\non\\
&=\sum_{\ell=1}^{n'-1}(-1)^{\ell-1}\fr{1}{\ell!}(j_1^{(\ell)})^t \fr{1}{t^l}(\i v\cdot \nabla)^\ell \ti g_1    +O(t^{-n'}), \label{l-expansion}
\end{align}
where we used  (\ref{iterted-a-g-commutator})  and  (\ref{d-g-rest-bound}). This proves  (\ref{low-order-commutator}) and (\ref{multiple-j-commutators}) for
$m=1$. To prove (\ref{multiple-j-commutators}) for arbitrary $m$, we proceed by induction. Suppose that (\ref{multiple-j-commutators}) holds for $m<n'$. Then  formula (\ref{l-expansion}) gives
\begin{align}
& [j_1^t,[\ldots,[j_{n'}^t, \ti g_1]\ldots]\non\\
& \qquad =\sum_{\ell=1}^{n'-1}(-1)^{\ell-1}\fr{1}{\ell!}(j_1^{(\ell)})^t \fr{1}{t^\ell} [j_2^t,[\ldots,[j_{n'}^t,  (\i v\cdot \nabla)^\ell \ti g_1     ]\ldots]  
 +O(t^{-n'}),
\end{align}
which is $O(t^{-n'})$ by the induction hypothesis.

Now we proceed to the proof of (\ref{multiple-commutator}). Similarly as in the proof of  Lemma~\ref{iterated-commutators}, 
 we set $\ha=a/t$ and write
\beqa
f(\ha)=\fr{\i}{2\pi}\int_{\complex}\pa_{\bar z}\ti h(z) (\i+\ha)(z-\ha)^{-1}dz \wedge d\bar z \label{tricky-extension-one},
\eeqa
as a strong integral on $D(a)$, where $\pa_{\bar z}\ti h$ satisfies (\ref{almost-analytic-bound-one}). We recall from 
Lemma~\ref{iterated-commutators} that
\beqa
\i^n[g_1,[\ldots,[g_n, f(a/t)]^\circ=\i^n[\ti g_1,[\ldots,[\ti g_n, f(a/t)], \label{multicommutators-equality}
\eeqa
where  $\ti g_i$ are compactly supported functions of $k$. With the help of (\ref{tricky-extension-one}) we can compute
the  commutator on the r.h.s. of (\ref{multicommutators-equality}) as a quadratic form on $D(a)$ (here we make use of the
fact that $\ti g_i\in C^1(a)$ and thus they preserve $D(a)$). The result is a finite linear combination of terms of two types
\begin{align}
\ti I_{0,n}&:=\fr{1}{t^n}\fr{\i}{2\pi}\int_{\complex}\pa_{\bar z}\ti h(z) (\i+\ha) (z-\ha)^{-1}\prod_{i=1}^n\big\{ v\cdot \nabla\ti g_{\si(i)}(z-\ha)^{-1}\big\}dz \wedge d\bar z,\,\,  \label{Helffer-pseudo-one}\\
\ti I_{j,n}&:=\fr{1}{t^n}\fr{\i}{2\pi}\int_{\complex}\pa_{\bar z}\ti h(z) v\cdot\nabla \ti g_j (z-\ha)^{-1}\prod_{i=1}^{n-1}\big\{ v\cdot \nabla\ti g_{\de(i)}
(z-\ha)^{-1}\big\}dz \wedge d\bar z,\,\, \label{Helffer-pseudo-two}
\end{align}
where $\si$ is some permutation of $(1,\ldots, n)$ and $\de$ is some permutation of $(1,\ldots,\check j,\ldots, n)$. 
Making use of properties~(\ref{almost-analytic-bound-one}), and of 
 the relations $\|(z-\ha)^{-1}\|=|\Im z|^{-1}$, $\|\ha (z-\ha)^{-1}\|\leq 1+|z|/|\Im z|$ we conclude that
\beqa
|\lan u_1, \ti I_{i,n} u_2\ran|\leq \fr{c}{t^n}\|u_1\|\|u_2\|
\eeqa
for $u_1,u_2\in D(a)$ and $i\in \{0,\ldots, n\}$. This gives (\ref{multiple-commutator}) for $m=0$ and also
verifies that the r.h.s. of (\ref{multicommutators-equality}) coincides with a liner combination of bounded operators $\ti I_{i,n}$
on the entire Hilbert space.
Let us now proceed to the case $m>0$.
We note that 
\beqa
[j_1^t,[\ldots,[j_m^t, \ti I_{i,n}]\ldots]]
\eeqa
is again a linear combinations of terms of the form (\ref{Helffer-pseudo-one}) and (\ref{Helffer-pseudo-two}), except that
some of the insertions $v\cdot \nabla  \ti g_{j}$ are replaced with
\beqa
 [j_{i_1}^t,[\ldots,[j_{i_{m'}}^t,  v\cdot \nabla \ti g_{j}]\ldots]] \label{auxiliary-j-g-commutator}
\eeqa
for some $i_1,\ldots, i_{m'}\in \{1,\ldots,m\}$. Since (\ref{auxiliary-j-g-commutator}) is of order $O(t^{-m'})$ by
(\ref{multiple-j-commutators}), the proof of (\ref{multiple-commutator}) can now be completed as in the case  $m=0$.

To prove (\ref{regularity-bound}), 
we proceed by induction: For $n=1$ it follows from (the adjoint of) formula~(\ref{l-expansion}). Now we 
define a sequence $\un{\ti g}:=(\ti g_1, \ti g_2, \ldots )$ and write for any $n\in\nat$
\beqa
[\ti g_1,[\ldots,[\ti g_n, f(a/t)]\ldots]]=\ad_{\un{\ti g}}^{\al_n}(f(a/t)),
\eeqa
where $\al_n$ is a multiindex s.t. $\al_n(j)=1$ for $1\leq j\leq n$
and $\al_n(j)=0$ for $j>n$. Now suppose that (\ref{regularity-bound}) holds for 
for $n<n'$. We obtain 
\begin{align}
 \ad_{\un{\ti g}}^{\al_{n'}}(f(a/t)) h(a/t)&= [\ti g_{n'}, \ad_{\un{\ti g}}^{\al_{n'-1}}(f(a/t))]h(a/t)\non\\
&= \ad_{\un{\ti g}}^{\al_{n'-1}}(f(a/t))[h(a/t), \ti g_{n'}]+O(t^{-\ti n}),
\end{align}
where we made use of the induction hypothesis. The first term on the r.h.s. above is $O(t^{-\ti n})$ by the induction hypothesis
and formula~(\ref{l-expansion}). \end{proof}

\section{Admissible and regular propagation observables}
\setcounter{equation}{0}
\bed\label{admissible} Let $\real\ni t\to b(t)\in B(\mfh)$ be a propagation observable, which is bounded, uniformly in $t$.
Let $j_l\in S^{0}(\real)$  and  $g_i\in C^{\infty}(\real^{\nu})_{\real}$, $i,l\in \nat$ and let us set $j_l^t:=j_l(a/t)$. Suppose that $b(t), b(t)^*\in C^n(g)$ for any $n\in \nat$ and
$t\in \real$, where $g=(g_1,\ldots, g_n)$ is a family of commuting self-adjoint operators understood as functions of $k$. 
(Cf. Definition~\ref{C-1-definition}).

\begin{enumerate}[label = \textup{(\alph*)}, ref =\textup{(\alph*)},leftmargin=*]
\item We say that $b$ is admissible, if  for any $m, n$ 
\begin{equation}\label{admissibility-property}
[j_1^t,[\ldots,[ j_m^t,[g_1,[\ldots,[g_n, b(t)]^\circ\ldots]^\circ]^\circ\ldots]\ldots]]=O(t^{-n-m}). 
\end{equation}
\item 
We say that b is regular, if there exists some neighbourhood of zero $\De$, s.t. for any $h_{\De}\in C_0^{\infty}(\real)$,
supported in $\De$, and any $n,\ti n \in\nat$
\begin{equation}\label{regularity-property}
[g_1,[\ldots,[g_n, b(t)]^\circ \ldots]^\circ]^\circ h_{\De}(a/t)=O(t^{-\ti n}),
\end{equation}
and the same relation holds for $b$ replaced with $b^*$. We will call $\De$ the regularity region of $b$.
\end{enumerate}
\eed
\bel\label{G-properties} Let $G$ be the function appearing in the interaction term of the Hamiltonian~(\ref{fiber-Hamiltonian})
and let $b$ be  a regular propagation observable. Then, for any $n\in \nat$,
with $n\leq 6$, there exists a $C_n$ s.t.
\beqa
 \|b(t)\om^n G\|_2\leq C_n/t^{2}. \label{interaction-decay-zero}
\eeqa
\eel
\begin{proof} By \ref{MC-G-decay} and~\ref{MC-three}, we have $\|\om^n G\|_2<\infty$ for $n\leq 6$.
We recall that $a=\h\{v\cdot \i\nabla_k+\i\nabla_k\cdot v\}$, $v$ is compactly supported and vanishes in a neighbourhood of zero, and by \ref{ST-DistDer}; $\pa^{\al}_kG$ is locally
square-integrable away from zero for $|\al|\leq 2$.
Hence, $\|a^2 \om^n G\|_2<\infty$. Now, exploiting regularity of $b$, we write 
\begin{equation}
 \|b(t) \om^n G\|_2  \leq \|b(t) h_{\De}(a/t)\|  \, \| \om^n G\|_2
 +\|h_{\real\backslash \De  }(a/t)(a/t)^{-2}\|   
\fr{1}{t^2}\|a^2\om^n G\|_2 \leq C/t^2,
\end{equation}
where $\De$ appeared in Definition~\ref{admissible} and $h_{\De}$, $h_{\real\backslash \De  }$ form a smooth partition 
of unity s.t. $h_{\De}$ is supported in $\De$ and equal to one on a neighbourhood of zero. \end{proof}   
\bel\label{concrete-admissible-obs} Let $q\in C^{\infty}(\real)_{\real}$ be  s.t.  $q=0$ on some neighbourhood $\De$ of zero   and  
$q'\in C_0^{\infty}(\real)$.
Then the propagation observables
\beqa
\real\ni t \to q^t, \quad \real\ni t \to t\pa_t q^t  \label{propagation-obs-one} 
\eeqa
are admissible and regular with the regularity region $\De$.  (Here $q^t=q(a/t)$).
\eel
\begin{proof} Follows immediately from Lemma~\ref{pseudo-lemma} and the assumed support properties of $q$. \end{proof}

\section{Auxiliary Hamiltonian and energy bounds}
\setcounter{equation}{0}

\setcounter{equation}{0}

In this section we prove higher-order bounds for $H(\xi)$ w.r.t. the free Hamiltonian $H_{0}(\xi)$
(and their counterparts for the auxiliary Hamiltonians introduced in Definition~\ref{auxiliary-Hamiltonian} below).  
We cannot rely on standard higher-order bounds for 
$H(\xi)$ w.r.t. the free boson Hamiltonian $H_{\pho}$ (see e.g. \cite[Lemma~31]{FGSch2}
and \cite[Lemma~8]{FGSch3}), since they do not suffice in the case of the polaron model. 
\bel\label{interchanging-lemma} Let $F\in C^{\infty}(\real^{\mu})_\real$, $f\in C^{\infty}(\real^{\nu}; \real^{\mu})$, $\mu\in\nat$, and let  
$G\in L^2(\real^{\nu})$. 
Then, in the sense of operators on  $\mcC=\Gamma_\fin(C_0^\infty(\real^\nu))$, 
\begin{align}
\label{creation-operator-identity}
a^*(G) F\bigl(\dGa(f(k))\bigr)&= \int dp \, G(p)  F\bigl( -f(p)+\dGa( f(k)) \bigr)a^*(p), \\
\label{annihilation-operator-identity}
a(G)F\bigl(\dGa(f(k))\bigr)  &= \int dp \, \ov{ G}(p) F\bigl(  f(p)+\dGa( f(k)) \bigr) a(p).
\end{align}
\eel
\begin{proof}  Let  $\Psi=\{\Psi^{(n)}\}_{n\in\nat}\in\mcC$ , and observe that only finitely many $\Psi^{(n)}$'s  are nonzero. 
The expression $F(\dGa(f(k)))$ is well-defined as a symmetric operator on $\mcC$,
where it is also essentially self-adjoint.
Then
\begin{align}
& \bigl\{a^*( G)  F\bigl(\dGa( f(k))\bigr)\Psi\bigr\}^{(n)}(k_1,\ldots, k_n)\non\\
& \quad = \fr{1}{\sqrt{n}}\sum_{i=1}^n G(k_i)F\bigl(f(k_1)+\cdots+\check{f(k_i)}+\cdots+f(k_n) \bigr)\Psi^{(n-1)}(k_1,\ldots,\check k_i,\ldots, k_n)\non\\
& \quad =\fr{1}{\sqrt{n}}\sum_{i=1}^n\int dp\, \de(p-k_i)  G(p)F\bigl(f(k_1)+\cdots+f(k_i)+\cdots+ f(k_n) - f(p)\bigr)\non\\
& \qquad \qquad \qquad \times\Psi^{(n-1)}(k_1,\ldots,\check k_i,\ldots, k_n)\non\\
& \quad =\int dp\,\bigl\{G(p) F\bigl(\dGa( f(k))- f(p)\bigr)a^*(p)\Psi\bigr\}^{(n)}(k_1,\ldots, k_n).
\end{align}
This proves (\ref{creation-operator-identity}). As for (\ref{annihilation-operator-identity}), we compute 
\begin{align}
&  \bigl\{a(G)F\bigl(\dGa( f(k))\bigr) \Psi\bigr\}^{(n)}(k_1,\ldots, k_n)\non\\
& \quad = \sqrt{n+1}\int dp\, \ov{G}(p)F\bigl(f(p)+f(k_1)+\cdots+ f(k_n)\bigr)\Psi^{(n+1)}(p,k_1,\ldots, k_n)\non\\
& \quad=\int dp\, \bigl\{\ov{G}(p)F\bigl(f(p)+\dGa( f(k)) \bigr)a(p)\Psi\bigr\}^{(n)}(k_1,\ldots, k_n).
\end{align}
This concludes the proof. \end{proof}

We have the following higher-order lemma

\bel\label{technical}   Let $n_0\in\nat$ and suppose $\lan k\ran^{(n_0-1)\max\{1, s_\Omega\}}G\in L^2(\real^\nu)$. Then $D(|H(\xi)|^n) = D(H_0(\xi)^n)$ for all $n\leq n_0$. Furthermore,
\begin{equation}
\sup_{\lambda\geq 0,\xi\in\real^\nu}\bigl(
\|(H_0(\xi)+\lambda)^n(H(\xi)-\i +\lambda)^{-n}\| + \|(H(\xi)+\lambda)^n(H_0(\xi)-\i +\lambda)^{-n}\| 
\bigr) <\infty.
\end{equation}
\eel

\begin{proof} It is an easy consequence of the spectral theorem, that it suffices to prove the claimed uniform bound for $\lambda = 0$ and uniformly in $\xi$. 
(This follows by an application of the binomial formula to $(H_0(\xi)+\lambda)^n$ and $(H(\xi)+\lambda)^n$).
In fact, in order to take fractional roots we observe that
we can replace $\i$ by a point below the bottom of the spectrum of the relevant operator.
 For this purpose we recall the notation $\Sigma_0 = \inf \sigma(H)$.
We begin by arguing that for $n\leq n_0$
we have 
\begin{equation}\label{DomIncl}
(H(\xi)-\Sigma_0+1)^{-n}\Fock\subset D(H_0(\xi)^{n})
\end{equation}
 and
\begin{equation}\label{UnifDomIncl}
\sup_{ \xi\in\real^\nu} \|H_0(\xi)^{n}(H(\xi)-\Sigma_0+1)^{-n}\|<\infty,
\end{equation}
for all $n\leq n_0$.
The proof will go by induction in half integer powers $n$. Clearly, since $D(H(\xi)) = D(H_0(\xi))$, 
\eqref{DomIncl} holds true for $n=1$ (and hence by interpolation for $n=1/2$). 
Furthermore, the computation
\begin{equation}
 H_0(\xi)(H(\xi)-\Sigma_0+1)^{-1} = I -(1-\Sigma_0+\phi(G))(H(\xi)-\Sigma_0+1)^{-1},
\end{equation}
together with $N^{1/2}$-boundedness of $\phi(G)$ and the estimate
\begin{align}
\non\| N^\frac12(H(\xi)-\Sigma_0+1)^{-1}\|^2 &\leq \frac1{m} \| (H(\xi)-\Sigma_0+1)^{-1}(H_0(\xi)+1)(H(\xi)-\Sigma_0+1)^{-1}\|\\
& \leq C_1 +
C_2 \|N^\frac12(H(\xi)-\Sigma_0+1)^{-1}\|
\end{align}
implies  \eqref{UnifDomIncl} for $n=1$. Hence by interpolation also for $n=1/2$.

We now assume $n\geq 3/2$ (and $n\leq n_0$) is an element of $\nat/2$, and by induction 
we can assume that \eqref{DomIncl} and \eqref{UnifDomIncl} hold with $n$ replaced with $n-1/2$. To perform the induction step it suffices to show that
\begin{equation}\label{DomIncl-n}
\phi(G)(H(\xi)-\Sigma_0+1)^{-n}\Psi \in D(H_0(\xi)^{n-1}),
\end{equation}
for  $\Psi\in\hil$, and
\begin{equation}\label{UnifDomIncl-n}
\sup_{\xi\in\real^\nu}
\|H_0(\xi)^{n-1} \phi(G) (H(\xi)-\Sigma_0+1)^{-n}\| < \infty.
\end{equation}
The statement \eqref{DomIncl-n} is implied by showing that we have
\begin{equation}\label{enough-tech-1}
\phi(G)(H(\xi)-\Sigma_0+1)^{-n}\Psi \in D(\dGa(\omega)^{n-1})\cap D(\Omega(\xi-\dGa(k))^{n-1})
\end{equation}
and statement \eqref{UnifDomIncl-n} follows from 
\begin{equation}\label{enough-tech-2}
\begin{aligned}
& \sup_{\xi\in\real^\nu}
\|\dGa(\omega)^{n-1} \phi(G) (H(\xi)-\Sigma_0+1)^{-n}\|<\infty,\\
&  \sup_{ \xi\in\real^\nu}
\|\Omega(\xi-\dGa(k))^{n-1} \phi(G) (H(\xi)-\Sigma_0+1)^{-n}\| <\infty,
\end{aligned}
\end{equation}
which by induction is known to hold for $n$ replaced with a half-integer  $n' \leq n-1/2$, cf. what was done for $n=1$. 

Let us now prove (\ref{enough-tech-2}).
Let $F_1(r) = r^{n-1}$, $f_1(k) = \omega(k)$, $F_2(k) = \Omega(\xi-k)^{n-1}$, and $f_2(k) = k$, where $r\in\real$, $k\in\real^{\nu}$.
Write $\phi(G) = a^*(G) + a(G)$. Below we only deal with the $a(G)$ contribution, which is the most complicated. The contribution from $a^*(G)$ is similar but simpler.
Compute for $\Phi,\Psi_1\in\mcC$ 
\begin{equation}\label{EBoundStep1}
\langle F_j(\dGa(f_j(k)))\Phi,a(G)\Psi_1\rangle = \langle a^*(G)F_j(\dGa(f_j(k)))(N+1)^{-\frac12}\Phi,N^{\frac12}\Psi_1\rangle.
\end{equation}
Anticipating the use of \eqref{creation-operator-identity} we introduce
\begin{equation}
\Phi_1 = \int dp\, G(p)\frac{F_j(\dGa(f_j(k)) - f_j(p) )}{1 + F_j(\dGa(f_j(k)))} a^*(p) (N+1)^{-\frac12} \Phi.
\end{equation}
The norm of the $n$-particle ($n\geq 1$) contribution is
\begin{align}
\non\|\Phi_1^{(n)}\|^2 & = \int dk_1\cdots dk_{n} \Bigl| \frac1{\sqrt{n}}\sum_{i=1}^n G(k_i) \frac{F_j(f_j(k_1)+\cdots+f_j(k_n) - f_j(k_i))}{1 + F_j(f_j(k_1)+\cdots+f_j(k_n))}\\
& \qquad  \times n^{-\frac12}\Phi^{(n-1)}(k_1,\dots,\check{k}_i,\dots,k_n) \Bigr|^2.
\end{align}
Observe the bound $F_j(x- f_j(k_i))\leq C(1+ F_j(x))\lan k_i\ran^{(n-1)s}$, valid for $j=1,2$ uniformly in $\xi$,
where $s = \max\{1, s_\Omega\}$. Here we used \ref{MC-three}--\ref{MC-Subadd}. This implies
\begin{align}
\non\|\Phi_1^{(n)}\|^2 & \leq \frac{C^2}{n^2}  \int dk_1\cdots dk_{n} \Bigl(\sum_{i=1}^n 
\lan k_i\ran^{(n-1)s}|G(k_i)
\Phi^{(n-1)}(k_1,\dots,\check{k}_i,\dots,k_n) |\Bigr)^2\\
& \leq C' \|\lan k\ran^{(n-1)s}G\|^2 \|\Phi^{(n-1)}\|^2.
\end{align}
Hence, for some $\xi$-independent constant $C''$ we have 
\begin{equation}
\|\Phi_1\| \leq  C''\|\lan k\ran^{(n-1)s}G\|\|\Phi\|.
\end{equation}

Combining \eqref{creation-operator-identity}, \eqref{EBoundStep1}, and the above estimate we get
\begin{align*}
|\langle F_j(\dGa(f_j(k)))\Phi, a(G)\Psi_1\rangle| & \leq C'' \|\lan k\ran^{(n-1)s}G\|\|\Phi\| \|(1+F_j(\dGa(f_j(k))))N^\frac12 \Psi_1\|\\
& \leq m^{-\frac12} C'' \|\lan k\ran^{(n-1)s}G\|\|\Phi\| \|(1+H_0(\xi)^{n-\frac12})\Psi_1\|.
\end{align*}
From the assumption on $G$ we conclude 
$\|\lan k\ran^{(n-1)s}G\|<\infty$.
Since $\mcC$ is a core for any power of $H_0(\xi)$, the estimate extends from $\mcC$ to $\Psi_1\in D(H_0(\xi)^{n-1/2})$,
and hence in particular to $\Psi_1=(H(\xi)-\Sigma_0+1)^{-n}\Psi$. This shows that
$a(G)(H(\xi)+\Sigma_0+\lambda)^{-n}\Psi\in D(F_j(\dGa(f_j(k))))$, for $j=1,2$, and hence
completes the proof of \eqref{enough-tech-1}. The uniform estimates
\eqref{enough-tech-2} follow by the same estimate just derived, since \eqref{enough-tech-2} holds true
with $n$ replaced by $n-1/2$.

In order to establish the reverse claim, that $(H_0(\xi)+1)^{-n_0}\Fock\subset D(|H(\xi)|^{n_0})$
and that $\sup_{\xi\in\real^\nu} \|(H(\xi)-\Sigma_0)^{n_0}(H_0(\xi)+1)^{-{n_0}}\| <\infty$, we 
argue again 
by induction after half-powers. The computations
\begin{align*}
(H(\xi)-\Sigma_0)^n(H_0(\xi)+1)^{-n} & = \bigl\{(H(\xi)-\Sigma_0)^{n-1} (H_0(\xi)+1)^{-(n-1)}\bigr\}\\
& \qquad \times  \bigl\{  (H_0(\xi)+1)^{n-1} (H(\xi)-\Sigma_0) (H_0(\xi)+1)^{-n}\bigr\}
\end{align*}
and
\[
(H_0(\xi)+1)^{n-1} H(\xi) (H_0(\xi)+1)^{-n} = H_0(\xi)(H_0(\xi)+1)^{-1} + (H_0(\xi)+1)^{n-1} \phi(G) (H_0(\xi)+1)^{-n},
\]
together with the observation that we never used above that the $n$ resolvents were
interacting, conclude the proof of the lemma.
\end{proof}

\bed\label{auxiliary-Hamiltonian} We  define  the free and interacting auxiliary  Hamiltonians on $\hil_1=\Ga(\mfh\oplus\mfh)$ by
\begin{equation}
 H_{1,0}(\xi):=\Omeg(\xi-\dGa(\un k))+\dGa(\un \om) \quad \textup{and}\quad
H_1(\xi):=H_{1,0}(\xi)+\phi(G,0)\label{AuxHam}
\end{equation}
on their domain of essential self-adjointness $\mcC_1:=\Ga_{\fin}(C_0^{\infty}(\real^\nu)\oplus C_0^{\infty}(\real^{\nu}))$.
The operators   $\un k_i=\diag(k_i,k_i)$,$i\in\{1,\ldots,\nu\}$ and $\un \om=\diag(\om,\om)$ are essentially self-adjoint  on 
$C_0^{\infty}(\real^{\nu})\oplus C_0^{\infty}(\real^{\nu})\subset \mfh\oplus\mfh$ and $(G,0)\in\mfh\oplus\mfh$. We note 
that $H_{1,0}(\xi)=UH_{0}^{\ex}(\xi)U^*$, $H_{1}(\xi)=UH^{\ex}(\xi)U^*$ and $\mcC_1=U^*\mcC^{\ex}$.
\eed

\begin{corollary}\label{Cor-technical2}  Let $n_0\in\nat$ and suppose $\lan k\ran^{(n_0-1)\max\{1, s_\Omega\}}G\in L^2(\real^\nu)$. Then for any $n\leq n_0$, $\ell\in\nat$ and $\xi\in\real^\nu$ we have
$D(|H^{\ex}(\xi)|^n) = D(H_0^{\ex}(\xi)^n)$, 
$D(|H^{(\ell)}(\xi)|^n) = D(H_0^{(\ell)}(\xi)^n)$ and $D(|H_1(\xi)|^n) = D(H_{1,0}(\xi)^n)$.
\end{corollary}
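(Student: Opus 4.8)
The plan is to reduce all three identities to Lemma~\ref{technical}, which already furnishes the corresponding statement for the pair $(H(\xi),H_0(\xi))$ — including the \emph{uniform} resolvent bounds over all total momenta and all shift parameters $\lambda\geq 0$ — and whose proof, as noted at its end, never used that the $n$ resolvents appearing there were interacting. First, the statement for $H_1(\xi)$ is an immediate transcription of the one for $H^{\ex}(\xi)$: by Definition~\ref{auxiliary-Hamiltonian} we have $H_1(\xi)=UH^{\ex}(\xi)U^*$ and $H_{1,0}(\xi)=UH_0^{\ex}(\xi)U^*$ with $U$ unitary, so domains of powers correspond under $U$. Second, by \eqref{Hamiltonian-decomp} we have the orthogonal decompositions $H^{\ex}(\xi)=H(\xi)\oplus\bigl(\bigoplus_{\ell\geq 1}H^{(\ell)}(\xi)\bigr)$ and $H_0^{\ex}(\xi)=H_0(\xi)\oplus\bigl(\bigoplus_{\ell\geq 1}H_0^{(\ell)}(\xi)\bigr)$. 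I would fix $b\leq\Si-1$ small enough that in addition $H_0^{(\ell)}(\xi)-b\geq 1$ for every $\ell$; this is possible because $\Omeg(\,\cdot\,)\geq -C$ by~\ref{MC-three} together with $\dGa^{(\ell)}(\om)\geq 0$ bounds $H_0^{(\ell)}(\xi)$ below uniformly in $\ell$, while $H^{(\ell)}(\xi)\geq\Si$ as in the proof of Lemma~\ref{expansion-truncation}. Since shifting a self-adjoint operator by a constant leaves the domain of its $n$-th power unchanged, the $H^{\ex}$-identity is equivalent to $D((H^{\ex}(\xi)-b)^n)=D((H_0^{\ex}(\xi)-b)^n)$, and for direct sums this follows once one shows
\[
\sup_{\ell\geq 0}\Bigl(\bigl\|(H_0^{(\ell)}(\xi)-b)^n(H^{(\ell)}(\xi)-b)^{-n}\bigr\|+\bigl\|(H^{(\ell)}(\xi)-b)^n(H_0^{(\ell)}(\xi)-b)^{-n}\bigr\|\Bigr)<\infty,
\]
with $\ell=0$ read as the pair $(H(\xi),H_0(\xi))$; and the $\ell$-th summand of this bound is exactly the ($\xi$-fixed) $H^{(\ell)}$-identity.

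To prove that uniform bound I would use \eqref{MomentumFib}--\eqref{DoubleFib}: $H^{(\ell)}(\xi)=\int^{\oplus}_{\real^{\ell\nu}}d\un k\,\bigl(H(\eta_{\un k})+c_{\un k}\bigr)$ with $\eta_{\un k}:=\xi-\sum_{j=1}^\ell k_j$ and $c_{\un k}:=\sum_{j=1}^\ell\om(k_j)\geq 0$, and the same with $H$ replaced by $H_0$. Applying Lemma~\ref{technical} with total momentum $\eta_{\un k}$ and shift parameter $\lambda:=c_{\un k}-b\geq 1$ gives $\|(H_0(\eta_{\un k})+c_{\un k}-b)^n(H(\eta_{\un k})-\i+c_{\un k}-b)^{-n}\|\leq C_n$ uniformly in $\un k$ and $\ell$, together with the companion bound with $H$ and $H_0$ interchanged. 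Because $H(\eta_{\un k})+c_{\un k}-b\geq\Si-b\geq 1$, one may replace $(H(\eta_{\un k})-\i+c_{\un k}-b)^{-n}$ by $(H(\eta_{\un k})+c_{\un k}-b)^{-n}$ at the cost of a factor $\|(H(\eta_{\un k})-\i+c_{\un k}-b)(H(\eta_{\un k})+c_{\un k}-b)^{-1}\|^n\leq 2^{n/2}$, uniformly, and similarly on the free side. Since all operators involved are decomposable, taking the direct integral over $\un k$ turns these fiber estimates into the claimed uniform operator bound, and also yields the $H^{(\ell)}$-identity directly (the direct integral of the range-inclusions). Reassembling over $\ell$ gives the $H^{\ex}$-identity, and conjugating by $U$ gives the $H_1$-identity.

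The main obstacle is not a genuine difficulty but the double bookkeeping of uniformity: one needs the comparison of powers of $H^{(\ell)}(\xi;\un k)$ with those of $H_0^{(\ell)}(\xi;\un k)$ to be uniform both in the momentum argument $\un k$ \emph{and} in the sector index $\ell$, so that the fiberwise and sectorwise domain identities integrate and sum up to operator-level statements. This is exactly what the sup over $\lambda\geq 0$ and over total momenta in Lemma~\ref{technical} provides, so the remaining work is just keeping track of the lower bounds $H^{(\ell)}(\xi)\geq\Si$ and $H_0^{(\ell)}(\xi)\geq -C$ and of the harmless passage from the shift $-\i$ to a real shift bounded away from the bottom of the spectrum.
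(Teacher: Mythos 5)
Your proposal is correct and follows essentially the same route as the paper: the paper's proof likewise reduces the claim to Lemma~\ref{technical} via the direct integral decomposition $H^{(\ell)}(\xi)=\int^\oplus d\un k\, H^{(\ell)}(\xi;\un k)$, with the uniformity in $\lambda\geq 0$ and in the total momentum supplied by that lemma doing exactly the work you describe, and the $H_1(\xi)$ case obtained by conjugation with $U$. Your additional bookkeeping (the real shift $b$, the passage from the $-\i$ shift, and the explicit uniformity in $\ell$) is a correct and slightly more detailed rendering of what the paper leaves implicit.
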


\begin{proof}
Using the direct integral decomposition
\[
H_0^{(\ell)}(\xi)^n(H^{(\ell)}(\xi)+\Sigma_0+1)^{-n}
= \int^\oplus dk_1\cdots dk_\ell H_0^{(\ell)}(\xi;\uk)^n(H^{(\ell)}(\xi;\uk)+\Sigma_0+1)^{-n},
\]
and similarly with $H_0$ and $H$ interchanged,
we conclude the corollary from Lemma~\ref{technical}.
\end{proof}

\begin{remark}\label{Remark-AuxHamSuffices}
 The auxiliary Hamiltonian $H_1(\xi)$ is useful when computations and estimates involve 
 multiple Fock space operations, since only one Fock space is involved. However, when one makes manifest use of conservation of asymptotic particle number, the $H^{\ex}(\xi)$ representation is advantageous.

 Having established estimates and identities for $H_1(\xi)$
 we shall by conjugating with the unitary $U$  obtain analogous results for the extended Hamiltonian $H^{\ex}(\xi)$. Then,
by applying the projection $P_\ell$ on the subspace $\Fock\otimes \Fock^{(\ell)}\subset \Fock^{\ex}$ we obtain analogous results for the
Hamiltonians $H^{(\ell)}(\xi)$.
\end{remark} 

\begin{corollary}\label{Cor-technical} Let $n_0\in\nat$ and suppose $\lan k\ran^{(n_0-1)\max\{1, s_\Omega\}}G\in L^2(\real^\nu)$. Then for any $n\leq n_0$, $\ell\in\nat$ 
there exists $C>0$ such that for all $z\in\complex$, with $\Imm z\neq 0$, we have
\begin{equation}
\begin{aligned}
&\bigl\|H_0(\xi)^n(H(\xi)-z)^{-1}(H_0(\xi)+1)^{-(n-1)}\bigr\| \leq C |\Imm z|^{-1},\\
&\bigl\|H_0^{(\ell)}(\xi)^n(H^{(\ell)}(\xi)-z)^{-1}(H_0^{(\ell)}(\xi)+1)^{-(n-1)}\bigr\| \leq C |\Imm z|^{-1},\\
&\bigl\|H_0^\ex(\xi)^n(H^\ex(\xi)-z)^{-1}(H^\ex_0(\xi)+1)^{-(n-1)}\bigr\| \leq C |\Imm z|^{-1},\\
&\bigl\|H_{1,0}(\xi)^n(H_1(\xi)-z)^{-1}(H_{1,0}(\xi)+1)^{-(n-1)}\bigr\| \leq C |\Imm z|^{-1}.
\end{aligned}
\end{equation}
\end{corollary}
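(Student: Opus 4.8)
The plan is to deduce all four estimates of Corollary~\ref{Cor-technical} from Corollary~\ref{Cor-technical2} together with the abstract observation that if two positive self-adjoint operators $T_0$ and $T$ have the same domain and the same domain of $n$-th powers, with a uniform comparison $\|T_0^n (T+1)^{-n}\| \leq M$, then one controls $\|T_0^n (T-z)^{-1}(T_0+1)^{-(n-1)}\|$ with only one inverse power of $|\Imm z|$. First I would recall from Corollary~\ref{Cor-technical2} that, under the hypothesis $\lan k\ran^{(n_0-1)\max\{1,s_\Omega\}}G\in L^2(\real^\nu)$, we have $D(|H(\xi)|^n)=D(H_0(\xi)^n)$ for $n\leq n_0$, and likewise for $H^{(\ell)}(\xi)$, $H^\ex(\xi)$, and $H_1(\xi)$. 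Since all four operators are bounded below (by $\Sigma_0$, respectively by the relevant $n$-boson threshold, which is $\geq \Sigma_0$), we may shift by a fixed constant and assume the operators are $\geq 1$, so that negative powers are unambiguous. The closed graph theorem then gives a finite constant $M = M(n,\xi)$ with $\|H_0(\xi)^n (H(\xi)-\Sigma_0+1)^{-n}\|\leq M$; the content of Lemma~\ref{technical} (via its direct-integral corollary) is precisely that $M$ can be chosen \emph{uniformly in }$\xi$, which is what makes the final constant $C$ in Corollary~\ref{Cor-technical} independent of $\xi$.

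\textbf{Key steps.} I would write, abbreviating $T_0 := H_0(\xi)$, $T := H(\xi)-\Sigma_0$, both $\geq 1$ after the shift, and $z' := z - \Sigma_0$:
\begin{equation}
T_0^n (T-z')^{-1} (T_0+1)^{-(n-1)}
= \bigl\{ T_0^n (T+1)^{-n} \bigr\}\,\bigl\{ (T+1)^{n} (T-z')^{-1}(T+1)^{-(n-1)}\bigr\}\,\bigl\{(T+1)^{n-1}(T_0+1)^{-(n-1)}\bigr\}.
\end{equation}
The first and third braces are bounded uniformly in $\xi$ by Corollary~\ref{Cor-technical2} (the third by the reverse inclusion $D(T_0^{n-1})=D(T^{n-1})$, again with $\xi$-uniform norm). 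For the middle brace I would use the spectral theorem for $T$: writing $(T+1)^n(T-z')^{-1}(T+1)^{-(n-1)} = (T+1)(T-z')^{-1}$, which as a scalar function $\lambda\mapsto (\lambda+1)(\lambda-z')^{-1}$ on $\sigma(T)\subset[0,\infty)$ is bounded by $1 + |1+z'|\,|\Imm z'|^{-1} \leq C'|\Imm z|^{-1}$ for $|\Imm z|$ bounded (and for large $|\Imm z|$ it is simply bounded, so the stated form with $|\Imm z|^{-1}$ on the right is only a constraint near the real axis and is trivially satisfied far from it after absorbing into $C$). Multiplying the three bounds yields the first displayed inequality. The remaining three inequalities are obtained verbatim with $H(\xi),H_0(\xi)$ replaced by $H^{(\ell)}(\xi),H_0^{(\ell)}(\xi)$, by $H^\ex(\xi),H_0^\ex(\xi)$, and by $H_1(\xi),H_{1,0}(\xi)$ respectively; in each case the corresponding domain identity and $\xi$-uniform bound is furnished by Corollary~\ref{Cor-technical2} (and for the last one by Remark~\ref{Remark-AuxHamSuffices}, i.e. conjugation with $U$).

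\textbf{Main obstacle.} The only genuinely delicate point is making sure the comparison constants are uniform in $\xi$ and independent of $z$; but this is exactly what Lemma~\ref{technical} was engineered to provide (its supremum is over $\lambda\geq 0$ and $\xi\in\real^\nu$), and Corollary~\ref{Cor-technical2} transports it to the extended, truncated, and auxiliary Hamiltonians. So the argument here is essentially bookkeeping: expand the resolvent sandwich into three factors, identify each with a quantity already controlled, and observe that $(\lambda+1)/(\lambda-z')$ is bounded by $O(|\Imm z|^{-1})$ on $[0,\infty)$. I expect no substantive difficulty beyond being careful with the shift by $\Sigma_0$ versus the shift by $1$ and with the endpoint behaviour as $|\Imm z|\to\infty$, which is handled by allowing $C$ to absorb a harmless additive constant.
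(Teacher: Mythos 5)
Your proposal is correct and is essentially identical to the paper's proof: the paper uses exactly the same three-factor decomposition $\bigl\{H_{1,0}(\xi)^n(H_1(\xi)-\Sigma_0+1)^{-n}\bigr\}\bigl\{(H_1(\xi)-\Sigma_0+1)(H_1(\xi)-z)^{-1}\bigr\}\bigl\{(H_1(\xi)-\Sigma_0+1)^{n-1}(H_{1,0}(\xi)+1)^{-(n-1)}\bigr\}$, bounding the outer factors by Corollary~\ref{Cor-technical2} and the middle one by the spectral theorem, and then repeats the computation for each pair of free and interacting Hamiltonians. No substantive difference.
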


\begin{proof} The corollary follows from Corollary~\ref{Cor-technical2}, Remark~\ref{Remark-AuxHamSuffices} and the computation:
\begin{align}
& H_{1,0}(\xi)^n(H_{1}(\xi)-z)^{-1}(H_{1,0}(\xi)+1)^{-(n-1)}   = \bigl\{ H_{1,0}(\xi)^n(H_1(\xi)-\Sigma_0+1)^{-n}\bigr\}\\
 &\non \qquad \times 
\bigl\{(H_1(\xi)-\Sigma_0+1)(H_1(\xi)-z)^{-1}\bigr\}\bigl\{(H_1(\xi)-\Sigma_0+1)^{n-1}(H_{1,0}(\xi)+1)^{-(n-1)}\bigr\}.
\end{align}
Alternatively one can repeat the computation above for each pair of free and interacting Hamiltonian, invoking Corollary~\ref{Cor-technical2} for each of them separately.
\end{proof}

\section{Commutators with the Hamiltonian}
\setcounter{equation}{0}

\subsection{Commutators involving $\dGa(\,\cdot\,,\,\cdot\,)$}

In this subsection we will make use of the auxiliary Hamiltonian $H_1(\xi)$
introduced in Definition~\ref{auxiliary-Hamiltonian}).

\bel\label{comm-H-dGa-lemma} Let $q_0,q_{\infty}\in C^{\infty}(\real)$ be s.t. $q_0',q_{\infty}'\in C_0^{\infty}(\real)$ and $0\leq q_0,q_{\infty}\leq 1$.
 Let $\un q^t=\diag(q_0^t,q_{\infty}^t)$
be the corresponding propagation observable on $\mfh\oplus\mfh$. Let $\real\ni t\to b_j(t)\in B(\mfh)$, $j\in \{0,\infty\}$,  
be two  families of admissible  operators and let $\un b=\diag(b_0,b_\infty)$ be the corresponding 
propagation observable on $\mfh\oplus\mfh$. Let $f\in S^{s_\Omega}(\real)$.
Then, setting $R_{1,0}:=(1+H_{1,0}(\xi))^{-1}$,
we obtain
\begin{align}
 [f(\dGa(\un k)),\dGa(\un q,\un b)]R_{1,0}^4
=\nabla f(\dGa(\un k) )\cdot \bigl(\dGa(\un q,[\un k,\un b]^\circ) 
+\dGa(\un q,[\un k, \un q]^\circ, \un b )\bigr) 
R_{1,0}^4+O(t^{-2})
\label{dGa-comm-double-lemma}
\end{align}
and each term on the r.h.s. above is bounded and $O(t^{-1})$.
\eel
\begin{proof} Observe first that by Lemma~\ref{iterated-commutators}
and Definition~\ref{admissible},
  $\un b, \un q \in C^1(\un k_j)$, for each $j$.
Hence $[\un k_j,\un b]$ and $[\un k_j, q]$ extend from $D(\un k)$ (dense in each $D(\un k_j)$) to bounded operators  $[\un k_j,\un b]^\circ$ and $[\un k_j, q]^\circ$. We write $[\un k, \,\cdot\, ]^\circ$ for the vector $([\un k_1,\,\cdot\,]^\circ,\dotsc,[\un k_\nu,\,\cdot\,]^\circ)$. By Lemma~\ref{pseudo-lemma}
and Definition~\ref{admissible} all these bounded operators are $O(t^{-1})$.

  Making use of Lemma~\ref{many-A}, with $B=\dGa(\un q,\un b)(1+N_1)^{-4}$, we get
\begin{align}
& [f(\dGa(\un k)), \dGa(\un q,\un b)](1+N_1)^{-4}\non\\
& \qquad =\nabla f(\dGa(\un k))\cdot 
\big(\dGa(\un q,[\un k, \un b]^\circ)+\dGa(\un q,[\un k, \un q]^\circ, \un b )    \big)(1+N_1)^{-4}\non\\
& \qquad\quad +R(f, \dGa(\un k),\dGa(\un q,\un b)(1+N_1)^{-4}), \label{Omega-formula-lemma}
\end{align}
as a form equality on $D(\dGa(\un k)^2)$, where $N_1$ is the number operator on $\Ga(\mfh\oplus\mfh)$.
Here we exploited the fact that $f \in S^2(\real^{\nu})$  
and that $\dGa(\un q,\un b)(1+N_1)^{-4}$ is bounded and belongs to $C^3(\dGa(\un k))$ by the assumed properties
of $\un q, \un b$ and by Lemma~\ref{rest-terms}.  Moreover, we obtain from Lemma~\ref{many-A} and Lemma~\ref{rest-terms} that
\begin{align}
& \|R(f, \dGa(\un k),\dGa(\un q, \un b)(1+N_1)^{-4})\|\non\\
& \qquad \leq \sum_{\al: 2\leq |\al| \leq 3 }\|\ad_{\dGa(\un k)}^{\al}(\dGa(\un q, \un b))(1+N_1)^{-4})\|=O(t^{-2}).
\end{align}
Since $(1+N_1)^\ell(1+H_{1,0}(\xi))^{- \ell}$ is bounded for any $\ell\in\nat$, we have shown that 
\begin{align}
& [f(\dGa(\un k)), \dGa(\un q,\un b)]R_{1,0}^4\non\\
& \qquad =\nabla f(\dGa(\un k))\cdot\bigl(\dGa(\un q, [\un k, \un b]^\circ)+\dGa(\un q, [\un k,\un q]^\circ,  \un b) \bigr) R_{1,0}^4+O(t^{-2}).
\end{align}
Let us show that the term involving $\dGa(\un q, [\un k,\un q]^\circ,  \un b)$ above  is $O(t^{-1})$. 
If $s_\Omega\leq 1$, then it follows from  Lemma~\ref{rest-terms} directly, since $\nabla f$ in this case is bounded. If $s_\Omega\geq 1$, we can insert $I = (\dGa(\un k)\cdot\dGa(\un k)+1)^{-1}(\dGa(\un k)\cdot\dGa(\un k)+1)$, noting that $\dGa(\un k)\cdot\dGa(\un k) = \sum_{j=1}^\nu \dGa(\diag(k_j,k_j))^2$,
and compute
\begin{align}
&\nabla f (\dGa(\un k))\cdot \dGa(\un q, [\un k,\un q]^\circ,  \un b)R_{1,0}^4\non\\ 
& \quad =\nabla f(\dGa(\un k))\cdot   (\dGa(\un k)\cdot\dGa(\un k)+1)^{-1} [\dGa(\un q, [\un k,\un q]^\circ,  \un b),\dGa(\un k)\cdot \dGa(\un k)]   R_{1,0}^4\\
\non & \qquad +\nabla f(\dGa(\un k))\cdot (\dGa(\un k)\cdot\dGa(\un k)+1)^{-1}\dGa(\un q, [\un k,\un q]^\circ,  \un b) (\dGa(\un k)\cdot\dGa(\un k)+1) R_{1,0}^4.
\end{align}
This expression is $O(t^{-1})$ by Lemma~\ref{rest-terms} and the bound $|\pa_i f(\eta)|\leq c\lan \eta \ran$. 
Note that to deal with the first term on the right-hand side one should first 
expand the commutator and write
\begin{align*}
[\dGa(\un q, [\un k,\un q]^\circ,  \un b),\dGa(\un k)\cdot \dGa(\un k)] 
& = \sum_{j=1}^\nu \dGa(\diag(k_j,k_j))[\dGa(\un q, [\un k,\un q]^\circ,  \un b), \dGa(\diag(k_j,k_j))]\\
& \quad +   \sum_{j=1}^\nu [\dGa(\un q, [\un k,\un q]^\circ,  \un b), \dGa(\diag(k_j,k_j))]\dGa(\diag(k_j,k_j)).
\end{align*}
 An analogous argument applies to
the term involving $\dGa(\un q, [\un k, \un b]^\circ)$ on the r.h.s of (\ref{Omega-formula-lemma}). \end{proof}

\bep\label{comm-H-dGa} Let $q_0,q_{\infty}$ be as specified in Definition~\ref{j-definition-a} and let $\un q^t=\diag(q_0^t,q_{\infty}^t)$
be the corresponding propagation observable on $\mfh\oplus\mfh$.
Let $\real\ni t\to b_j(t)\in B(\mfh)$, $j\in \{0,\infty\}$,  be two 
 families of admissible  operators s.t. $b_0$ is regular. 
Let $\un b=\diag(b_0,b_\infty)$ be the corresponding propagation observable on $\mfh\oplus\mfh$. Then, setting $\un q = \un q^t$, 
$R_{1,0}:=(1+H_{1,0}(\xi))^{-1}$ and $R_{0}:=(1+H_{0}(\xi))^{-1}$, we obtain
\begin{align}
[H_1(\xi),\dGa(\un q,\un b)]R_{1,0}^4&=-\nabla\Omeg(\xi-\dGa(\un k))\cdot \bigl(\dGa(\un q,[\un k,\un b]^\circ)+\dGa(\un q, [\un k, \un q]^\circ,  \un b)\bigr) R_{1,0}^4\non\\
& \quad +\bigl(\dGa(\un q,  [\un \om,\un b]^\circ)+\dGa(\un q, [\un \om, \un q]^\circ,  \un b) \bigr)R_{1,0}^4+O(t^{-2})
\label{dGa-comm-double}
\end{align}
and consequently
\begin{align}
[H(\xi),\dGa(q_0, b_0)]R_{0}^4&=-\nabla\Omeg(\xi-\dGa(k))\cdot\bigl(\dGa(q_0,[k, b_0]^\circ)+\dGa(q_0, [k, q_0]^\circ,  b) \big)R_{0}^4\non\\
& \quad +\big(\dGa(q_0,  [\om, b_0]^\circ)+\dGa(q_0, [\om, q_0]^\circ,   b_0) \big)R_{0}^4+O(t^{-2}).
\label{dGa-comm}
\end{align}
Each term on the r.h.s. of relations~(\ref{dGa-comm-double}) and (\ref{dGa-comm}) is bounded and $O(t^{-1})$. 
\eep
\begin{proof} Observe first that by Lemma~\ref{iterated-commutators}
and Definition~\ref{admissible},
  $\un q \in C^1(\un k_j)\cap C^1(\un \om)$, for each $j$, and $\un b\in C^1(\un\om)$. See also the first paragraph in the proof of 
  Lemma~\ref{comm-H-dGa-lemma} for notation and the observation that the bounded operators $[\un k,\un q]^\circ,[\un \omega,\un q]^\circ$ and $[\un \om,\un b]^\circ$
  are all $O(t^{-1})$.

 Let us first prove~(\ref{dGa-comm-double}).   
Making use of Lemma~\ref{comm-H-dGa-lemma}, and of the fact that $\Omeg\in S^{s_\Omega}(\real)$, we obtain the identity
\begin{align}
& [\Omeg(\xi-\dGa(\un k)), \dGa(\un q,\un b)]R_{1,0}^4\non\\
& \qquad=-\nabla\Omeg(\xi-\dGa(\un k))\cdot\bigl(\dGa(\un q, [\un k, \un b]^\circ)+\dGa(\un q, [\un k,q]^\circ,  \un b) \bigr) R_{1,0}^4+O(t^{-2})
\end{align}
in the sense of forms on $D(H_{1,0}(\xi))$.
All terms on the r.h.s. above are $O(t^{-1})$.

As for the second term from the free auxiliary fiber Hamiltonian~(\ref{AuxHam}), it suffices to note that Lemma~\ref{rest-terms} gives
\begin{align}
[\dGa(\un \om),\dGa(\un q,\un b)] R_{1,0}^4 
=\big(\dGa(\un q,[\un \om,\un q]^\circ,\un b)+\dGa(\un q, [\un \om, b]^\circ )\big)R_{1,0}^4 =O(t^{-1})
\end{align}
in the sense of forms on $D(H_{1,0}(\xi))$.
The interaction term in the Hamiltonian gives
\begin{align}
&[\phi(G,0),\dGa(\un q,\un b)]R_{1,0}^{4}
=\big(a^*((1-q_0)G,0)\dGa(\un q,\un b)-a^*(b_0G,0)\Ga(\un q)\non\\
 &\qquad+\Ga(\un q)a(b_0^*G,0)+\dGa(\un q,\un b)a((q_0-1)G,0)\big) R_{1,0}^{4}=O(t^{-2}),
\end{align}
where we made use of Lemma~\ref{q-p-lemma}, and exploited regularity of $b_0$ and $1-q_0$.  
This concludes the proof of (\ref{dGa-comm-double}). 

Now let us prove relation~(\ref{dGa-comm}). By conjugating (\ref{dGa-comm-double}) with the unitary $U$,
we get
\begin{align}
 &[H^{\ex}(\xi),\dGa^{\ex}(\un q,\un b)](R_{0}^{\ex})^4\non\\
&\quad=-\nabla\Omeg(\xi-\dGa^{\ex}(\un k))\cdot \bigl(\dGa^{\ex}(\un q,[\un k,\un b]^\circ)
+\dGa^{\ex}(\un q, [\un k, \un q]^\circ,  \un b)\bigr) (R_{0}^{\ex})^4\non\\
 &\qquad +\bigl(\dGa^{\ex}(\un q,  [\un \om,\un b]^\circ)+\dGa^{\ex}(\un q, [\un \om, \un q]^\circ,  \un b) \bigr)(R^{\ex}_{0})^4+O(t^{-2}),
\end{align}
where $R_{0}^{\ex}=(1+H_{0}^{\ex}(\xi))^{-1}$. By applying the projection $P_{0}$ on the subspace $\Fock\otimes\Fock^{(0)}\subset \Fock^{\ex}$ 
to both sides of this equality,  we obtain (\ref{dGa-comm}). \end{proof}

\bel\label{Helffer-dGa} Let $\un q^t$ and $\un b$ be as specified in Proposition~\ref{comm-H-dGa}. Let $\chi\in C_0^{\infty}(\real)_{\real}$. Then
\beq
[\chi(H_1(\xi)),\dGa(\un q^t,\un b)]=O(t^{-1}).\label{Helffer-dGa-formula}
\eeq 
\eel
\begin{proof} As above we abbreviate $R_{1,0}=(1+H_{1,0}(\xi))^{-1}$. Proposition~\ref{comm-H-dGa} gives
\beqa
[H_1(\xi),\dGa(\un q,\un b)]R_{1,0}^4=O(t^{-1}). \label{new-cGa-comm-two}
\eeqa
Now  we will use  the Helffer-Sj\"ostrand calculus. 
(See e.g.~\cite[Proposition~C.2.1]{DGBook}.)
We choose an almost analytic extension $\ti\chi\in C_0^{\infty}(\complex)$ of $\chi$ 
s.t.
\beqa
|\pa_{\bar z}\ti\chi(z)|\leq C_n|\Imm z|^n, \label{almost-analytic-property-one-two-three}
\eeqa
for $n\in \nat$ and write
\begin{align}
&  [\chi(H_1(\xi)),\dGa(\un q,\un b)]R_{1,0}^3  \non\\
&\qquad =\fr{\i}{2\pi}\int \,dz\wedge d\bar z\,\pa_{\bar z}\ti\chi(z) (z-H_{1}(\xi))^{-1} [H_1(\xi),\dGa(\un q,\un b)](z-H_1(\xi))^{-1}R_{1,0}^3.
\end{align}
Making use of relations (\ref{almost-analytic-property-one-two-three}), (\ref{new-cGa-comm-two}), and of the fact that   
$\|(1+H_{1,0}(\xi))^4(z-H_1(\xi))^{-1}(1+H_{1,0}(\xi))^{-3}\|\leq c/|\Imm z|$, proven in 
Corollary~\ref{Cor-technical}, we 
show that
\beqa
[\chi(H_1(\xi)),\dGa(\un q,\un b)]R_{1,0}^3=O(t^{-1}).  \label{Helffer-intermediate-relation}
\eeqa
Next, we choose a function $\ti\chi\in C_0^{\infty}(\real)_{\real}$ s.t. $\chi\ti\chi=\chi$ and write
\begin{align}
[\chi(H_1(\xi)),\dGa(\un q,\un b)]&= [\chi(H_1(\xi)),\dGa(\un q,\un b)]\ti\chi(H_1(\xi))\non\\
&\quad +\chi(H_1(\xi))[\ti\chi(H_1(\xi)),\dGa(\un q,\un b)].
\end{align}
Making use of (\ref{Helffer-intermediate-relation}), we conclude the proof. \end{proof}


\subsection{Commutators involving $\Ga(\,\cdot\,)$}

\bel\label{f-commutator} Let $j_0, j_{\infty}\in C^{\infty}(\real)$, $j_0',j_{\infty}'\in C_0^{\infty}(\real)$ and $0\leq j_0, j_{\infty}\leq 1$. 
Let $\un j^t:=\diag(j_0^t,j_{\infty}^t)$ be defined as a propagation observable on $\mfh\oplus\mfh$. 
Let $f\in S^{s_\Omega}(\real^{\nu})$. Then, setting $R_{1,0}(\xi)=(1+H_{1,0}(\xi))^{-1}$, we obtain
\begin{align}
 [f(\dGa(\un k)-\xi), \Ga(\un j^t)]R_{1,0}(\xi)^{3} 
  =\nabla f(\dGa(\un k)-\xi)\cdot \dGa(\un j^t,[\un k, \un j^t]^\circ)R_{1,0}(\xi)^{3}+O(t^{-2}), \label{f-commutator-double}
\end{align}
and the first term on the r.h.s. above is bounded and $O(t^{-1})$, with both $O$-symbols being uniform in $\xi\in\real^\nu$.
\eel
\begin{proof} Observe first that by Lemma~\ref{iterated-commutators}, we have $\un j^t\in C^1(\un k)$. The operator $[\un k,\un j^t]^\circ$ is bounded and $O(t^{-1})$ by Lemma~\ref{pseudo-lemma}.

We set $\un j:=\un j^t$ and write, making use of Lemma~\ref{many-A}
\begin{align}
[f(\dGa(\un k)-\xi), \Ga(\un j)](1+N_1)^{-3}&=\nabla f(\dGa(\un k)-\xi) \cdot \dGa(\un j,[\un k, \un j]^\circ)(1+N_1)^{-3}\non\\
& \quad +R(f, \dGa(\un k)-\xi,\Ga(\un j)(1+N_1)^{-3}),
\end{align}
as a form equality on $D(H_{1,0}(\xi))$. Here $N_1$ is the number operator on $\Ga(\mfh\oplus\mfh)$ and
we used the fact that  $f \in S^2(\real^{\nu})$ and that $\Ga(\un j)(1+N_1)^{-3}$  belongs to $C^3(\dGa(\un k)-\xi)$.
To justify this latter property, we note that for $|\al|\leq 3$
\beqa
\ad_{\dGa(\un k)-\xi}^{\al}(\Ga(\un j) )(1+N_1)^{-3}=O(t^{-|\al|}),
\eeqa
where we made use of Lemma~\ref{rest-terms}.   We obtain from Lemma~\ref{many-A} that
\begin{align}
& \|R(f, \dGa(\un k)-\xi,\Ga(\un j)(1+N_1)^{-3})\|\non \\
& \qquad  \leq \sum_{\al: 2\leq |\al| \leq 3 }\|\ad_{\dGa(\un k)-\xi}^{\al}(\Ga(\un j))(1+N_1)^{-3}\|=O(t^{-2}),
\end{align}
uniformly in $\xi$. In fact, the right-hand side does not actually depend on $\xi$.
Since $(1+N_1)^\ell(1+H_{1,0}(\xi))^{-\ell}$ is bounded uniformly in $\xi$ for any $\ell\in\nat$, we have shown that 
\beqa
[f(\dGa(\un k)-\xi), \Ga(\un j)]R_{1,0}(\xi)^{3} =\nabla f(\dGa(\un k)) \cdot \dGa(\un j,[\un k, \un j]^\circ)R_{1,0}^{3}(\xi)+O(t^{-2}),
\eeqa
uniformly in $\xi$ and in the sense of forms on $D(H_{1,0}(\xi))$.
To check that the first term on the r.h.s. above is bounded and of order  $O(t^{-1})$ uniformly in $\xi$,
we can as in the proof of Lemma~\ref{comm-H-dGa-lemma} assume that $s_\Omega\geq 1$ and argue in the exact same fashion, replacing however $\dGa(k)\cdot\dGa(k)+1$ by $(\dGa(k)-\xi)\cdot (\dGa(k)-\xi)+1$. We skip the details, which are slightly simpler here since there is only one term.
\end{proof}

\bep\label{comm-H-Ga} 
Let $j_0, j_{\infty}$ be as specified in Definition~\ref{j-definition-a}.
Let $\un j^t:=\diag(j_0^t,j_{\infty}^t)$ be defined as an operator on $\mfh\oplus\mfh$. 
Then, setting $R_{1,0}(\xi):=(1+H_{1,0}(\xi))^{-1}$, $R_{0}(\xi):=(1+H_{0}(\xi))^{-1}$, we get
\begin{align}
 &[H_1(\xi),\Ga(\un \jj^t)]R_{1,0}^{3}(\xi)\non\\
& \quad =\bigl(-\nabla\Omeg(\xi-\dGa(\un k)) \cdot \dGa(\un j^t,[\un k,\un j^t]^\circ)+\dGa(\un j^t,[\un \om,\un j^t]^\circ)\bigr)R_{1,0}(\xi)^{3} +O(t^{-2}),
\label{comm_H-Ga-rel-one}
\end{align}
uniformly in $\xi\in\real^\nu$. Consequently
\begin{align}
& [H(\xi),\Ga(\jj_0^t)]R_{0}^3(\xi)\non\\
&\quad =\bigl(-\nabla\Omeg(\xi-\dGa(k))\cdot \dGa(j_0^t,[k,j_0^t]^\circ)+\dGa(j_0^t,[\om,j_0^t]^\circ)\bigr)R_{0}^3(\xi)+O(t^{-2}),
\label{comm_H-Ga-rel-two}
\end{align}
uniformly in $\xi\in\real^\nu$
The explicit terms on the r.h.s. of relations~(\ref{comm_H-Ga-rel-one}), and (\ref{comm_H-Ga-rel-two}) are bounded and $O(t^{-1})$, uniformly in $\xi$.
\eep
\begin{proof}  Observe first that by Lemma~\ref{iterated-commutators}, we have $\un j^t\in C^1(\un k)\cap C^1(\un \om)$. The operators $[\un k,\un j^t]^\circ$
and $[\un \om, \un j^t]^\circ$ are bounded and $O(t^{-1})$ by Lemma~\ref{pseudo-lemma}.

 We set $\un j:=\un j^t$. Lemma~\ref{f-commutator} gives that
\begin{equation}
 [\Omeg(\xi-\dGa(\un k)), \Ga(\un j)]R_{1,0}(\xi)^{3}
=-\nabla\Omeg(\xi-\dGa(\un k))\cdot \dGa(\un j,[\un k, \un j]^\circ)R_{1,0}(\xi)^{3}+O(t^{-2}),
\end{equation}
and all terms on the r.h.s. above are $O(t^{-1})$ uniformly in $\xi$.

As for the second term from the free auxiliary Hamiltonian~\eqref{AuxHam}, it suffices to note that
 uniformly in $\xi$ we have 
\beqa
[\dGa(\un\om),\Ga(\un j)]R_{1,0}(\xi)^{3} =\dGa(\un j,[\un\om,\un j]^\circ)R_{1,0}(\xi)^{3}=O(t^{-1}).
\eeqa
The interaction term from the interacting auxiliary Hamiltonian~\eqref{AuxHam}  contributes to $O(t^{-2})$. To show this,
we recall the relations
\begin{equation}
\Ga(\un j)a^*(G,0) =a^*(\un j (G,0))\Ga(\un j)\quad \textup{and}\quad
a(G,0)\Ga(\un j) =\Ga(\un j)a(\un j (G,0)), \label{aGa-com}
\end{equation}
which hold on $\Ga_{\fin}(\mfh\oplus\mfh)$ and imply that
\begin{align}
&[\phi(G,0), \Ga(\un j)] R_{1,0}(\xi)^{3}\non\\
&\qquad =\bigl(\Ga(\un j)a((j_0-1)G,0)-a^*((j_0-1)G,0)\Ga(\un j)\bigr)R_{1,0}(\xi)^{3} =O(t^{-2}),
\end{align}
uniformly in $\xi$.
In the last step we made use of the fact that $j_0-1$ is regular and of Lemma~\ref{G-properties}. 
This concludes the proof of (\ref{comm_H-Ga-rel-one}). 

Now let us prove relation~(\ref{comm_H-Ga-rel-two}). By conjugating (\ref{comm_H-Ga-rel-one}) with the unitary $U$,
we get
\begin{align}
& [H^{\ex}(\xi),\Ga^{\ex}(\un \jj)]R_{0}^{\ex}(\xi)^{3}\non\\
&\quad =\bigl(-\nabla\Omeg(\xi-\dGa^{\ex}(\un k))\cdot\dGa^{\ex}(\un j,[\un k,\un j]^\circ)+\dGa^{\ex}(\un j,[\un \om,\un j]^\circ)\bigr)
R_{0}^{\ex}(\xi)^{3} +O(t^{-2}),
\end{align}
where $R_{0}^{\ex}(\xi)=(1+H_{0}^{\ex}(\xi))^{-1}$. By applying the projection $P_{0}$ on the subspace $\Fock\otimes\Fock^{(0)}\subset \Fock^{\ex}$ 
to both sides of this equality,  we obtain (\ref{comm_H-Ga-rel-two}). \end{proof}

\bel\label{j_0Lemma} Let $\chi\in C_0^{\infty}(\real)_{\real}$ and $j_0, j_{\infty}$ be as specified in Definition~\ref{j-definition-a}.
Let $\un j^t:=\diag(j_0^t,j_{\infty}^t)$ be defined as an operator on $\mfh\oplus\mfh$. 
Then there holds the estimate
\beqa
[\chi(H_1(\xi)+\lambda),\Ga(\un \jj^t)]=O(t^{-1}),
\eeqa
uniformly in $\xi\in\real^\nu$ and $\lambda\geq 0$.
\eel
\begin{proof} This lemma follows from Proposition~\ref{comm-H-Ga} by the method of almost analytic extensions. (Cf. the proof of Lemma~\ref{Helffer-dGa}). \end{proof}


\subsection{Commutators involving $\cGa(\,\cdot\,)$}

\bel\label{f-commutator-ex} Let $j_0, j_{\infty}\in C^{\infty}(\real)$, $j_0',j_{\infty}'\in C_0^{\infty}(\real)$, $0\leq j_0, j_{\infty}\leq 1$,
and $j_0^2+j_{\infty}^2\leq 1$.
Let $ j^t:=(j_0^t,j_{\infty}^t)$ be defined as an operator from
$\mfh$ to  $\mfh\oplus\mfh$. 
Let $f\in S^{s_\Omega}(\real^{\nu})$. Then, setting $R_0(\xi) = (1+H_0(\xi))^{-1}$ and $R_{0}^{\ex}(\xi)=(1+H_{0}^{\ex}(\xi) )^{-1}$, we obtain
\begin{align}
& \bigl(f(\dGa^{\ex}(k)-\xi)\cGa(j^t)-\cGa(j^t)f(\dGa(k)-\xi)\bigr)R_{0}(\xi)^{3}\non\\
& \qquad = \nabla f(\dGa^{\ex}(k)-\xi)\cdot \cdGa^{\ex}(j^t,[\un k, j^t]^\circ)R_{0}(\xi)^{3}+O(t^{-2}), \label{f-commutator-ex-one}\\
& R_{0}^{\ex}(\xi)^{3} \bigl(f(\dGa^{\ex}(k)-\xi)\cGa(j^t)-\cGa(j^t)f(\dGa(k)-\xi)\bigr)\non\\
& \qquad = R_{0}^\ex(\xi)^{3} \nabla f(\dGa^{\ex}(k)-\xi)\cdot \cdGa^{\ex}(j^t,[\un k, j^t]^\circ)+O(t^{-2}), \label{f-commutator-ex-two}
\end{align}
uniformly in $\xi\in\real^\nu$. 
Furthermore, all explicit terms on the r.h.s. of (\ref{f-commutator-ex-one}) and (\ref{f-commutator-ex-one}) above are bounded and $O(t^{-1})$, uniformly in $\xi\in\real^\nu$. 
\eel

\begin{remark}\label{Rem-MixedCommForm} Here $[\un k, j^t]^\circ$ is a $\nu$-vector 
with entries $[\un k_i, j^t]^\circ$, which is itself a $2$-vector
$([k_i,j_0^t]^\circ,[k_i,j_\infty^t]^\circ)$ with bounded operators as components, obtained through extension by continuity of the  form $[\un k_i, j^t]=\un k_i j^t-j^t k_i$
densely defined on $D(\un k)\times  D(k)$.  Recall that by 
Lemma~\ref{iterated-commutators}, we have $j^t_0,j_\infty^t\in C^1(k_i)$, for each $i$. 
The vector operator $[\un k, j^t]^\circ$ is bounded and $O(t^{-1})$ by Lemma~\ref{pseudo-lemma}.
\end{remark}

\begin{proof} 
 As in \cite{MR12}, we define 
\begin{equation}\label{AuxAuxGammas}
\cGa^{\ex}(q)=\cGa(q)P_{0} \quad \textup{and} \quad \cdGa^{\ex}(q,p)=\cdGa(q,p)P_{0},
\end{equation}
 where $P_{0}\colon\Fock^{\ex}\to \Fock$ is the natural restriction to the subspace $\Fock\otimes\Fock^{(0)}=\Fock$ in $\Fock^{\ex}$. (The notation $\cGa^{\ex}$ and $\cdGa^{\ex}$ is used
only in this proof).  After identifying $\Fock\otimes\Fock^{(0)}$ with $\Fock$, we can write
\begin{equation}
\begin{aligned}
& \bigl(f(\dGa^{\ex}(k)-\xi)\cGa(j^t)-\cGa(j^t)f(\dGa(k)-\xi)\bigr)R_{0}(\xi)^{3}\\
& \qquad =[f(\dGa^{\ex}(k)-\xi), \cGa^{\ex}(j^t)]R_{0}^{\ex}(\xi)^{3}.
\end{aligned}
\end{equation}
Next, we set $j:=j^t$ and write, making use of Lemma~\ref{many-A},
\begin{align}
& [f(\dGa^{\ex}(k)-\xi), \cGa^{\ex}(j)](1+N^{\ex})^{-3} \non \\ 
 & \qquad  =\nabla f(\dGa^{\ex}(k)-\xi)\cdot \cdGa^{\ex}(j,[\un k,  j]^\circ)(1+N^{\ex})^{-3} \\
\non & \qquad \quad +R(f, \dGa^{\ex}(k)-\xi,\cGa^{\ex}(j)(1+N^{\ex})^{-3}),
\end{align}
as a form identity on $D(H^\ex_0(\xi))$. Here $N^{\ex}$ is the number operator on $\Fock^{\ex}$.
We used Lemma~\ref{high-order-commutators}, the assumption that  $f \in S^{s_\Omega}(\real^{\nu})$ and that $\cGa^{\ex}(j)(1+N^\ex)^{-3}$  belongs to $C^3(\dGa^{\ex}(k)-\xi)$.
To justify this latter property, we note that by Lemma~\ref{rest-terms-double}
\beqa
\ad_{\dGa^{\ex}(k)-\xi}^{\al}(\cGa^{\ex}(j) )(1+N^{\ex})^{-3}=O(t^{-|\al|}),
\eeqa
uniformly in $\xi$, for $|\al|\leq 3$ (the expression is in fact $\xi$-independent).
Thus we obtain from Lemma~\ref{many-A} that
\begin{align}
& \|R(f, \dGa^{\ex}(k)-\xi,\cGa^{\ex}(j)(1+N^{\ex})^{-3})\|\non\\
& \qquad \leq \sum_{\al: 2\leq |\al| \leq 3 }\|\ad_{\dGa^{\ex}(k)-\xi}^{\al}(\cGa^{\ex}(j))(1+N^{\ex})^{-3}\|=O(t^{-2}),
\end{align}
uniformly in $\xi$.
Since $(1+N^{\ex})^\ell(1+H_{0}^{\ex}(\xi))^{-\ell}$ is bounded uniformly in $\xi$ for any $\ell\in\nat$, we have shown that uniformly in $\xi$ we have
\begin{align}
[f(\dGa^{\ex}(k)-\xi), \cGa^{\ex}(j)]R_{0}^{\ex}(\xi)^{3} &=\nabla f(\dGa^{\ex}( k)-\xi)\cdot \cdGa^{\ex}( j,[\un k,  j]^\circ)R_{0}^{\ex}(\xi)^{3}+O(t^{-2}),\label{cGammaResOnRight}\\
R_{0}^{\ex}(\xi)^{3}[f(\dGa^{\ex}(k)-\xi), \cGa^{\ex}(j)] &= R_{0}^{\ex}(\xi)^{3}\nabla f(\dGa^{\ex}( k)-\xi)\cdGa^{\ex}( j,[\un k,  j]^\circ)+O(t^{-2}).
\end{align}
To check that all the terms on the r.h.s. of the above relations  are $O(t^{-1})$ uniformly in $\xi$, we can assume that $s_\Omega\geq 1$ and write
\begin{align}
& (R_{0}^{\ex})^{3}\nabla f(\dGa^{\ex}(k))\cdot \cdGa^{\ex}(j,[\un k,  j]^\circ) \non\\
& \qquad =(1+N^{\ex})(R_{0}^{\ex})^{3}\nabla f(\dGa^{\ex}(k))(1+N^{\ex})^{-1}\cdot \cdGa^{\ex}(j,[\un k,  j]).
\end{align}
For \eqref{cGammaResOnRight} we argue as at the end of the proofs of Lemmata~\ref{comm-H-dGa-lemma}
and~\ref{f-commutator}, inserting
$I = (1+(\dGa(\un k)-\xi)^2)^{-1} (1+(\dGa(\un k)-\xi)^2)$ and commuting the second factor
onto the resolvent on the right to obtain 
\begin{align}
& \nabla f(\dGa^{\ex}(k)-\xi)\cdot \cdGa^{\ex}(j, [\un k,  j])R^{\ex}_{0}(\xi)^{3}\\
&  =-\nabla f(\dGa^{\ex}(k)-\xi)  (1+(\dGa(\un k)-\xi)^2)^{-1} \cdot  [\cdGa^{\ex}(j,[\un k,  j]^\circ),(\dGa(\un k)-\xi)^2] R^{\ex}_{0}(\xi)^{3} \non\\
\non &\quad +\nabla f(\dGa^{\ex}(k)-\xi)   (1+(\dGa(\un k)-\xi)^2)^{-1}  \cdot \cdGa^\ex(j,[\un k,  j]^\circ)  (1+(\dGa(\un k)-\xi)^2)R^{\ex}_{0}(\xi)^3.
\end{align}
Here $(\dGa(k)-\xi)^2 = (\dGa(k)-\xi)\cdot (\dGa(k)-\xi)$.
Recalling \eqref{AuxAuxGammas} and that  $|\pa_if(\eta)|\leq c\lan \eta \ran$, and making use of Lemma~\ref{rest-terms-double}, we conclude the proof. \end{proof}

\bep \label{H-cGa-comm} Let $j_0, j_{\infty}$ be as specified in Definition~\ref{j-definition-a} and s.t. $j_0^2+j_{\infty}^2\leq 1$.
Put $ j^t=(j_0^t,j_{\infty}^t)\colon \mfh\to\mfh\oplus\mfh$.
 Then, setting $R_{0}(\xi)=(1+H_{0}(\xi))^{-1}$ and  $R_{0}^{\ex}(\xi)=(1+H_{0}^{\ex}(\xi))^{-1}$, we obtain uniformly in $\xi\in\real^\nu$ the asymptotic expansions
\begin{align}
 &\bigl(H^{\ex}(\xi)\cGa(j^t)-\cGa(j^t)H(\xi)\bigr)R_{0}^3(\xi)\non\\
& \qquad =\bigl(-\nabla\Omeg(\xi-\dGa^{\ex}(k))\cdot \cdGa(j^t,[\un k, j^t]^\circ)+\cdGa(j^t, [\un \om, j^t]^\circ) \bigr)R_{0}^{3}(\xi)+O(t^{-2})
\label{Ga-check-comm-one}
\end{align}
and
\begin{align}
  &R_{0}^{\ex}(\xi)^3\bigl(H^{\ex}(\xi)\cGa(j^t)-\cGa(j^t)H(\xi)\bigr)\non\\
&\qquad =R_{0}^{\ex}(\xi)^3\bigl(-\nabla\Omeg(\xi-\dGa^{\ex}(k)) \cdot\cdGa(j^t,[\un k, j^t]^\circ)+\cdGa(j^t, [\un \om, j^t]^\circ) \bigr)+O(t^{-2}),
\label{Ga-check-comm-two}
\end{align}
and all explicit terms on the r.h.s. of relations (\ref{Ga-check-comm-one}) and (\ref{Ga-check-comm-two}) are bounded and $O(t^{-1})$ uniformly in $\xi\in\real^\nu$.
\eep
\begin{proof} We prove only relation~(\ref{Ga-check-comm-one}), as the proof of (\ref{Ga-check-comm-two}) is analogous.
 Observe again that by Lemma~\ref{iterated-commutators}, we have $\un j^t\in C^1(\un k)\cap C^1(\un \om)$. The operators $[\un k,j^t]^\circ$
and $[\un \om,  j^t]^\circ$ are bounded and $O(t^{-1})$ by Lemma~\ref{pseudo-lemma}.
See also Remark~\ref{Rem-MixedCommForm} for a more thorough
explanation of the notation.

 Lemma~\ref{f-commutator-ex} gives
\begin{align}
& \bigl(\Omeg(\xi-\dGa^{\ex}(k))\cGa(j^t)-\cGa(j^t)\Omeg(\xi-\dGa(k))\bigr)R_{0}(\xi)^3\non\\
& \qquad =-\nabla\Omeg(\xi-\dGa^{\ex}(k))\cdGa(j^t,[\un k_i, j^t]^\circ)R_{0}(\xi)^{3}+O(t^{-2}),
\end{align}
and all terms on the r.h.s. above are $O(t^{-1})$ uniformly in $\xi$. As for the second term in the extended Hamiltonian~(\ref{extended-fiber-hamiltonian}), 
we obtain
\beqa
(\dGa^{\ex}(\om)\cGa(j^t)-\cGa(j^t)\dGa(\om))R_{0}(\xi)^3=\cdGa(j^t,[\un \om, j^t]^\circ)R_{0}(\xi)^3,
\eeqa
where we made use of Lemma~\ref{high-order-commutators}. It is clear that this expression is $O(t^{-1})$ uniformly in $\xi$.

Finally, we consider the interaction term from Hamiltonian~(\ref{extended-fiber-hamiltonian}). There holds
\begin{align}
& \bigl((\phi(G)\otimes 1)\cGa(j^t)-\cGa(j^t)\phi(G)\bigr)R_{0}(\xi)^3\\
& \qquad =\bigl((a^*((1-j_0^t)G)\otimes 1+1\otimes a^*(j^t_{\infty}G))\cGa(j^t)+\cGa(j^t)a((j_0^t-1)G)\bigr)R_{0}(\xi)^3.\non
\end{align}
Since $j_0^t-1$ and $j^t_{\infty}$ are regular propagation observables,  this expression is $O(t^{-2})$ (uniformly in $\xi$) by Lemma~\ref{G-properties}.
This concludes the proof. \end{proof} 
\bel\label{Helffer} Let $j_0, j_{\infty}$ be as specified in Definition~\ref{j-definition-a} and s.t. $j_0^2+j_{\infty}^2\leq 1$.
Put $j^t=(j_0^t,j_{\infty}^t)\colon \mfh\to\mfh\oplus\mfh$.
There holds the relation
\begin{equation}
\bigl(\chi(H^{\ex}(\xi)+\lambda)\cGa(j^t)-\cGa(j^t)\chi(H(\xi)+\lambda)\bigr)  =O(t^{-1}), \label{Helffer-one}
\end{equation}
uniformly in $\xi\in\real^\nu$ and $\lambda\geq 0$.
\eel
\begin{proof} This lemma follows from Proposition~\ref{H-cGa-comm} by the method of almost analytic extensions. (Cf. the proof of Lemma~\ref{Helffer-dGa}). \end{proof}

The following corollary about domain invariance follows directly from
Propositions~\ref{comm-H-Ga} and~\ref{H-cGa-comm}.

\bec\label{Cor-DomInv-Gamma} Let $j_0, j_{\infty}$ be as specified in 
Definition~\ref{j-definition-a} and s.t. $j_0^2+j_{\infty}^2\leq 1$. Suppose $q\in C^\infty(\real)$, with $0\leq q\leq 1$, is bounded with bounded derivatives. Then
\begin{align}
& \Ga(q^t)\colon D(H(\xi)^3) \to D(H(\xi)),\\
& \cGa(j^t)\colon D(H(\xi)^3)\to D(H^\ex(\xi)),\\
& \cGa(j^t)^*\colon D(H^\ex(\xi)^3)\to D(H(\xi)).
\end{align}
\eec

We do not believe the third power in the corollary above is optimal.
It is however sufficient for our purpose.
A similar result was derived in \cite{MAHP} for localizations in configuration space.

\section{Auxiliary results for the proof of Proposition~\ref{a-dGa-Heisenberg}} \label{First-proposition-appendix}
\setcounter{equation}{0}

\bep\label{l=0} Let $\chi\in C_0^{\infty}(\real)_{\real}$. 
Let $q\in C^{\infty}(\real)$ be s.t. $q'\in C_0^{\infty}(\real)$ and $0\leq q\leq 1$, 
 and let $b$ be an admissible and regular propagation observable. Let $j_0$ be as specified in   
Definition~\ref{j-definition-a} and s.t. $\supp\, j_0\subset \De$, where $\De$ appeared in
Definition~\ref{admissible}.  Then
\begin{align}
& \chi\dGa(q^t,b)\chi\Ga(\jj_0^t)=O(\R^{-1}), \label{disjoint-supports-two}\\
& \chi[H(\xi),\dGa(b)]\chi\Ga(\jj_0^t)=O(\R^{-2}),\label{another-sandwiched-commutator}
\end{align}
where we set $\chi:=\chi(H(\xi))$.
\eep
\begin{proof}   To prove  (\ref{disjoint-supports-two})
we write
\begin{equation}
\chi\dGa(q^t,b)\chi\Ga(\jj^t_0)=\chi\dGa(q^t,b)[\chi, \Ga(\jj^t_0)]+O(t^{-2}),
\end{equation}
where we exploited regularity of $b$. The first term on the r.h.s.  is of order $O(\R^{-1})$ by Lemma~\ref{j_0Lemma}.

To verify (\ref{another-sandwiched-commutator}), we make use  of Proposition~\ref{comm-H-dGa}, which gives
\begin{align}
\chi[H(\xi),\dGa(b)]\chi\Ga(\jj^t_0) &= -\chi\nabla\Omeg(\xi-\dGa(k))\cdot \dGa([k,b]^\circ)  [\chi, \Ga(\jj^t_0)]\non\\
& \quad +\chi\dGa([\om,b]^\circ)[\chi,\Ga(\jj^t_0)] +O(t^{-2}), \label{double-j_0-commutator}
\end{align}
where we exploited regularity of $b$. The first two terms on the r.h.s. above are $O(t^{-2})$ by admissibility of $b$, cf.~Definition~\ref{admissible},
and Lemma~\ref{j_0Lemma}. \end{proof}

\bep \label{derivative-term} Let $q\in C^{\infty}(\real)$ be s.t. $q'\in C_0^{\infty}(\real)$ and $0\leq q\leq 1$.
Let $b$ be an admissible propagation observable and
 $j_0, j_{\infty}$ be as specified in Definition~\ref{j-definition-a}, with $j_0^2+j_{\infty}^2=1$. 
Then
\begin{align}
&\chi(H(\xi)+\lambda)\dGa(q^t,b)\chi(H(\xi)+\lambda)\non \\
& \qquad =\cGa(j^t)^*\chi^{\ex}(H^\ex(\xi)+\lambda)  \dGa^{\ex}(q^t,b)\chi^{\ex}(H^\ex(\xi)+\lambda)\cGa(j^t)+O(\R^{-1}),
\end{align}
uniformly in $\xi\in\real^\nu$ and $\lambda\geq 0$.

\eep
\begin{proof} We set $j:=j^t$, $q:=q^t$, $\chi:=\chi(H(\xi)+\lambda)$ and $\chi^{\ex}:=\chi(H^{\ex}(\xi)+\lambda)$. The reader is asked to keep in mind that $\chi$ and $\chi^\ex$ depend on both $\xi$ and $\lambda$. Write
\beqa
\chi\dGa(q, b )\chi=\cGa(j)^*\bigl(\cGa(j)\chi-\chi^{\ex}\cGa(j)\bigr)
\dGa(q, b)\chi
+\cGa(j)^*\chi^{\ex}\cGa(j)\dGa(q,b) \chi. \label{derivative-term-identity}
\eeqa
The first term on the r.h.s. above is of order $O(\R^{-1})$ uniformly in $\xi$ and $\lambda\geq 0$ by  Lemma~\ref{Helffer}. 
The last term on the r.h.s. of (\ref{derivative-term-identity}) can be rearranged as follows
\begin{align}
\cGa(j)^*\chi^{\ex}\cGa(j)\dGa(q,b) \chi &=\cGa(j)^*\chi^{\ex}\dGa^{\ex}(q,b)\cGa(j) \chi +\cGa(j)^*\chi^{\ex}\dcGa(jq,[j,\un b])\chi\non\\
&= \cGa(j)^*\chi^{\ex}\dGa^{\ex}(q,b)\cGa(j) \chi+O(t^{-1}), \label{derivative-term-second-step}
\end{align}
uniformly in $\xi$ and $\lambda\geq 0$.
Here  we made use of Lemma~\ref{high-order-commutators} and 
admissibility of $b$, cf.~Definition~\ref{admissible}. (Note that $[j,\un b] = jb-\un b j$ and $\un b=\diag(b,b)$ is a bounded propagation observable on $\mfh\oplus\mfh$). To exchange $\cGa(j)\chi$ with
$\chi^{\ex}\cGa(j)$ in (\ref{derivative-term-second-step}) we use again Lemma~\ref{Helffer}. This concludes the proof. \end{proof} 

\bel\label{Helffer-second-component-lemma} Let $q\in C^{\infty}(\real)_{\real}$ be s.t.  $q'\in C_0^{\infty}(\real)$. Let $\chi\in C_0^{\infty}(\real)_{\real}$.
Then
\beq
[1\otimes q^t,\chi(H^{(1)}(\xi))]=O(t^{-1}). \label{Helffer-second-component-formula}
\eeq
\eel
\begin{proof}   We follow the strategy explained in Remark~\ref{Remark-AuxHamSuffices}. By setting in  Lemma~\ref{Helffer-dGa} 
$\un q^t=\diag(1,1)$, $b=(0,q^t)$, conjugating formula~(\ref{Helffer-dGa-formula})
with the unitary $U$ and applying the projection  $P_{1}$ on the subspace $\Fock\otimes\Fock^{(1)}\subset\Fock^{\ex}$
we obtain (\ref{Helffer-second-component-formula}). \end{proof}

\bec\label{resolvents} Let $q,p\in C^{\infty}(\real)$ be s.t. $q',p'\in C_0^{\infty}(\real)$ and $0\leq q\leq 1$.
Let $b$ be  an admissible and regular propagation observable. 
Let $\chi\in C_0^{\infty}(\real)_{\real}$ be supported in $(-\infty, \Sigma_0^{(2)}(\xi))$. 
Then
\beq
\chi(H^{(1)}(\xi))(\dGa(q^t,b)\otimes p^t)\chi(H^{(1)}(\xi))=O(t^{-1}).\label{resolvents-formula}
\eeq
\eec
\begin{proof} We set $q:=q^t$, $p=p^t$ and  choose $\chi_0\in C_0^{\infty}(\real)$, supported in  $(-\infty, \Sigma_0^{(2)}(\xi))$
and s.t. $\chi=\chi_0\chi$. Then, abbreviating $\chi^{(1)}:= \chi(H^{(1)}(\xi))$ and  making use of the fact that $\chi^{(1)}(\dGa(q,b)\otimes 1)=O(1)$, we obtain
from Lemma~\ref{Helffer-second-component-lemma} that
\beqa
\chi^{(1)}(\dGa(q,b)\otimes p)\chi^{(1)}=\chi^{(1)}(\dGa(q,b)\otimes 1)\chi^{(1)}(1\otimes p) \chi_0^{(1)}+O(t^{-1}).
\eeqa 
Thus it suffices to prove~(\ref{resolvents-formula}) with $p=1$.
We rewrite this expression  as a direct integral
\beqa
\int^{\oplus} dk\, \chi(H(\xi-k)+\om(k))\,\dGa(q,b)\,\chi(H(\xi-k)+\om(k)).
\eeqa
In order to establish that the above expression is $O(t^{-1})$, it suffices to argue that
\begin{equation}\label{SuffBelowThreshold}
\chi(H(\xi-k)+\omega(k))\,\dGa(q,b)\,\chi(H(\xi-k)+\omega(k)) = O(t^{-1}),
\end{equation}
uniformly in $k\in\real^\nu$.

For $k\in\real^\nu$, define the function
$\chi_k(s):=\chi(s+\om(k))$.
It is easily seen that $\chi_k \in  C_0^{\infty}(\real)$ 
is supported in $(-\infty, \Sigma_0^{(1)}(\xi-k))$. Indeed, If $s+\omega(k)\in \supp\,\chi$,
then $s+\omega(k) < \Sigma_0^{(2)}(\xi)\leq \Sigma_0^{(1)}(\xi-k)+\omega(k)$.

Now let $j_0, j_{\infty}$ be as specified in Definition~\ref{j-definition-a}, s.t. $j_0^2+j_{\infty}^2=1$ and $j_0$ is supported in the
set $\De$ specified in Definition~\ref{admissible}. We set $j:=j^t$ below. Then,
\begin{align}
& \chi_k(H(\xi-k))\,\dGa(q,b)\, \chi_k(H(\xi-k))\non\\
& \qquad =\cGa(j)^* \chi_k(H^{\ex}(\xi-k))\,\dGa^{\ex}(q,b) \,\chi_k(H^{\ex}(\xi-k))\cGa(j)+O(t^{-1})\non\\
& \qquad =\Ga(j_0)\chi_k(H(\xi-k))\,\dGa(q,b)\,\chi_k(H(\xi-k))\Ga(j_0)+O(t^{-1}).
\end{align} 
Here in the first step we made use of Proposition~\ref{derivative-term}, which in particular ensures that the asymptotic expansion above is uniform in $k\in\real^\nu$. In the second step we applied the decomposition~(\ref{Hamiltonian-decomp})
of $H^{\ex}(\xi-k)$ and observed that, due to the support property of $\chi_k$, only the $l=0$ term is non-zero. 
Next, making use of Lemma~\ref{high-order-commutators}, we get
\begin{align}
\Ga(j_0)\chi_k(H(\xi-k))\dGa(q,b)\chi_k(H(\xi-k))
&= [\Ga(j_0),\chi_k(H(\xi-k))] \dGa(q,b)\chi_k(H(\xi-k))\non\\
&\quad + \chi_k(H(\xi-k))\dGa(j_0q,j_0b)\chi_k(H(\xi-k)).
\end{align}
This expression is of order $O(t^{-1})$, uniformly in $k$, by Lemma~\ref{j_0Lemma}  and regularity of $b$. This concludes the verification of \eqref{SuffBelowThreshold}, and hence we have established the corollary.\end{proof}

\bep \label{gamma-check}  Let $b$ be an admissible and regular propagation observable. 
Let $\chi\in C_0^{\infty}(\real)_{\real}$ and  $j_0, j_{\infty}$ be as specified in Definition~\ref{j-definition-a} and s.t. $j_0^2+j_{\infty}^2=1$. Then
\beqa
 \chi[H(\xi),\dGa(b)]\chi
=\cGa(j^t)^*\chi^{\ex}[H^{\ex}(\xi), \dGa^{\ex}(b)]\chi^{\ex}\cGa(j^t)+O(t^{-2}),
\label{main-equation}
\eeqa
where we set $\chi:=\chi(H(\xi))$ and $\chi^{\ex}:=\chi(H^{\ex}(\xi))$.
\eep
\begin{proof} We set $j:=j^t$ and write
\beqa
\chi[H(\xi),\dGa(b)]\chi=\cGa(j)^*\chi^{\ex}\cGa(j)[H(\xi),\dGa(b)]\chi+O(\R^{-2}),
\eeqa 
where we made use of the fact that, by Proposition~\ref{comm-H-dGa}, $[H(\xi),\dGa(b)]\chi=O(t^{-1})$.
Next, we will show that 
\beqa
\chi^{\ex}\bigl(  \cGa(j) [H(\xi),\dGa(b)]-[H^{\ex}(\xi), \dGa^{\ex}(b) ]\cGa(j) \bigr)\chi=O(t^{-2}).
 \label{long-commutator}
\eeqa
In view of Proposition~\ref{comm-H-dGa}, it is enough to check that
\begin{align}
& \chi^{\ex}\cGa(j)\pa_i\Omeg(\xi-\dGa(k))\dGa([k_i,b]^\circ)\chi \non\\
& \qquad =\chi^{\ex}\pa_i\Omeg(\xi-\dGa^{\ex}(k))\dGa^{\ex}([\un k_i,\un b]^\circ) \cGa(j)\chi+O(t^{-2}),\label{comm-rel-one} 
\end{align}
and
\beqa
\chi^{\ex}\cGa(j)\dGa([\om,b]^\circ)\chi= \chi^{\ex}\dGa^{\ex}([\om, b]^\circ)\cGa(j)+O(t^{-2}). \label{comm-rel-two}
\eeqa
We prove only (\ref{comm-rel-one}), since the proof of (\ref{comm-rel-two}) is analogous (and simpler). First, we note that,
\begin{align}
& \chi^{\ex}\cGa(j)\pa_i\Omeg(\xi-\dGa(k))\dGa([k_i,b]^\circ)\chi\non\\
& \qquad =\chi^{\ex}\pa_i\Omeg(\xi-\dGa^{\ex}(k))\cGa(j)\dGa([k_i,b]^\circ)\chi+O(t^{-2}),
\end{align}
where we exploited Lemma~\ref{f-commutator-ex}. Next, making use of Lemma~\ref{high-order-commutators}, we obtain
\begin{align}
\cGa(j)\dGa([k_i,b]^\circ)\chi&= \dGa^{\ex}([k_i,b]^\circ)\cGa(j)\chi+\cGa(j, [j,[\un k_i,\un b]^\circ] )\chi\non\\
&= \dGa^{\ex}([k_i,b]^\circ)\cGa(j)\chi+O(t^{-2}),
\end{align}
where we used admissibility of $b$, cf.~Definition~\ref{admissible}. Thus we have justified~\eqref{comm-rel-one}.

To conclude the proof,  it suffices  to show that
\beqa
\chi^{\ex} [H^{\ex}(\xi), \dGa^{\ex}(b)]\bigl(\cGa(j)\chi-\chi^{\ex}\cGa(j)\bigr)
=O(t^{-2}). \label{last-step-comm}
\eeqa
This follows from  Proposition~\ref{comm-H-dGa} and Lemma~\ref{Helffer}. \end{proof}

\bel\label{decay-of-dGa-term} Let $\chi\in C_0^{\infty}(\real)_{\real}$ be supported in $(-\infty, \Sigma_0^{(2)}(\xi))$. Let   $b_0$  be an admissible and regular propagation observable. Then 
\beq
\chi(H^{(1)}(\xi))[H^{(1)}(\xi), \dGa(b_0)\otimes 1]\chi(H^{(1)}(\xi))=O(t^{-2}).
\eeq
\eel
\begin{proof}  By conjugating  formula~(\ref{dGa-comm-double}) with the unitary $U$,  setting $\un q^t=\diag(1,1)$ and $\un b=\diag(b_0,0)$, we obtain 
\begin{align}
& [H^{\ex}(\xi),\dGa^{\ex}(\un b)]\\
&\non  \quad =\Bigl(\dGa^{\ex}([\un \om, \un b]^\circ)-\sum_{i=1}^\nu \pa_i\Omeg(\xi-\dGa^{\ex}(k))\dGa^{\ex}([\un k_i, \un b]^\circ)\Bigr)+O(t^{-2})(H^\ex_0(\xi)+1)^4,
\label{dGa-comm-double-one}
\end{align}
as a form identity on $D(H^\ex_0(\xi)^4)$.
Now we apply the projection $P_{1}$ on the subspace $\Fock\otimes\Fock^{(1)}\subset \Fock^{\ex}$ to both sides of this equality
and insert  both sides between the operators $\chi^{(1)}:= \chi(H^{(1)}(\xi))$. We get
\begin{align}
&\chi^{(1)}[H^{(1)}(\xi), \dGa(b_0)\otimes 1]\chi^{(1)}\\
\non&\quad =\chi^{(1)}\Bigl(-\nabla\Omeg(\xi-\dGa^{(1)}(k))\cdot (\dGa([k, b_0]^\circ)\otimes 1)+(\dGa([\om, b_0]^\circ)\otimes 1)\Bigr)\chi^{(1)}+O(t^{-2}),
\end{align}
where we used the higher order domain result in Corollary~\ref{Cor-technical2} for $H^\ex(\xi)$.
In view of Lemma~\ref{Helffer-dGa}, it suffices to show that
\beqa
\chi^{(1)}(\dGa([g, b_0]^\circ)\otimes 1)\chi^{(1)}=O(t^{-2})
\eeqa
for any $g\in C^{\infty}(\real^{\nu})$, whose derivatives (of non-zero order) are bounded.  This follows from
Corollary~\ref{resolvents} and the fact that $t\to t[g,b_0(t)]^\circ$ is admissible and regular. This latter fact is clear from Definition~\ref{admissible}.   \end{proof}

\section{Auxiliary results for the proof of Proposition~\ref{second-technical} } \label{appendix-second-technical}
\setcounter{equation}{0}

\bep\label{l=0-Ga-lemma} Let $\chi\in C_0^{\infty}(\real)_{\real}$. Let $ q\in C^{\infty}(\real)$ be s.t. $q'\in C_0^{\infty}(\real)$, $0 \leq q\leq 1$ and $q=1$ on a neighbourhood 
$\De$ of zero. Let $j_0$ be as specified in   Definition~\ref{j-definition-a} and s.t. $\supp\, j_0\subset \De$. 
Then
\begin{equation}\label{j-zero-Ga}
\chi\dGa(q^t,\pa_t q^t)\chi\Ga(j_0^t)=O(t^{-2})\quad \textup{and}\quad 
\chi[H(\xi), \Ga(\qd^t)]\chi \Ga(j_0^t)=O(t^{-2}), 
\end{equation}
where $\chi:=\chi(H(\xi))$.
\eep
\begin{proof} We set $q:=q^t$, $j_0:=j_0^t$ and write
\beq
\chi\dGa(q,\pa_t q)\chi\Ga(j_0)= \chi\dGa(q,\pa_t q)[\chi,\Ga(j_0)]=O(t^{-2}),
\eeq
due to the support property of $j_0$, Lemma~\ref{j_0Lemma} and the fact that $\chi\dGa(q,\pa_t q)=O(t^{-1})$.

Proceeding to the proof of the second part of (\ref{j-zero-Ga}), we write
\beqa
\chi [H(\xi), \Ga(\qd)]\chi\Ga(j_0)=\chi[H(\xi), \Ga(\qd)][\chi,\Ga(j_0)]+\chi [H(\xi), \Ga(\qd)]\Ga(j_0)\chi.
\eeqa
Here we used Corollary~\ref{Cor-DomInv-Gamma} to justify the formal computation.
The first term on the r.h.s. above is $O(t^{-2})$ by Lemma~\ref{j_0Lemma} and Proposition~\ref{comm-H-Ga}. 
As for the second term, we apply Proposition~\ref{comm-H-Ga} again:
\begin{align}\label{comm_H-Ga-rel-three}
& \chi[H(\xi),\Ga(q)]\Ga(j_0)\chi\\
\non &\qquad =\chi \bigl(-\nabla\Omeg(\xi-\dGa(k))\cdot \dGa(q,[k,q]^\circ)\Ga(j_0)+\dGa(q,[\om,q]^\circ)\Ga(j_0)\bigr)\chi+O(t^{-2}).
\end{align}
We note that $\dGa(q,[\om,q]^\circ)\Ga(j_0)\chi=O(t^{-2})$ and $\dGa(q,[k,q]^\circ)\Ga(j_0)\chi=O(t^{-2})$,  since $1-q$ is regular with the regularity region $\De$. 
See \eqref{regularity-property} and Lemma~\ref{concrete-admissible-obs}. This concludes the proof. \end{proof}

\bep\label{Ga-Heisenberg} Let $\chi\in C_0^{\infty}(\real)_{\real}$. Let $ q\in C^{\infty}(\real)$ be s.t. $q'\in C_0^{\infty}(\real)$, $0 \leq q\leq 1$ and $q=1$ on a neighbourhood  of zero. Let $j_0, j_{\infty}$ be as specified in   Definition~\ref{j-definition-a} and s.t. $j_0^2+j_{\infty}^2=1$. Then
\begin{align}
\chi \dGa(q^t,\pa_t q^t)\chi
&=\cGa(j^t)^*\chi^{\ex}\dGa^{\ex}(q^t, \pa_t q^t )\chi^{\ex}\cGa(j^t)+O(t^{-2}),   \label{q-derivative-term}\\
\chi[H(\xi), \Ga(q^t)]\chi&=\cGa(j^t)^*\chi^{\ex}[H^{\ex}(\xi),  \Ga^{\ex}(q^t)] \chi^{\ex}\cGa(j^t)+O(t^{-2}), \label{q-commutator}
\end{align}
where $\chi:=\chi( H(\xi))$ and $\chi^{\ex}:=\chi( H^{\ex}(\xi))$.
\eep
\begin{proof}   As for  (\ref{q-derivative-term}), it follows from  Proposition~\ref{derivative-term}, Lemma~\ref{concrete-admissible-obs} and~\eqref{regularity-property} applied with $b(t) = t \pa_t q^t$. 
Proceeding to the proof of formula~(\ref{q-commutator}),  we set $j:=j^t$, $q:=q^t$, and note that
\beq
\cGa(j)\chi[H(\xi), \Ga(q)]\chi=\chi^{\ex}\cGa(j)[H(\xi), \Ga(q)]\chi+O(t^{-2})
\eeq
by Lemma~\ref{Helffer} and  Proposition~\ref{comm-H-Ga}. 
Note that Corollary~\ref{Cor-DomInv-Gamma} ensures the validity of the computation above, as well as those to follow. Next, we will show that
\beqa
\chi^{\ex}\big( \cGa(j)[H(\xi), \Ga(q)] - [H^{\ex}(\xi),  \Ga^{\ex}(q)] \cGa(j) \big)\chi=O(t^{-2}).
\eeqa
By Proposition~\ref{comm-H-Ga}, it suffices to check that
\begin{align}
 &\chi^{\ex} \cGa(j)\nabla\Omeg(\xi-\dGa(k))\cdot \dGa(q,[k,q]^\circ)\chi\non\\ 
 &\qquad = \chi^{\ex}\nabla\Omeg(\xi-\dGa^{\ex}(k)) \cdot\dGa^{\ex}(q,[k,q]^\circ) \cGa(j) \chi+O(t^{-2}) \label{commutator-proof-one}
\end{align}
and
\beqa
\chi^{\ex} \cGa(j)\dGa(q,[\om,q]^\circ)\chi= \chi^{\ex} \dGa^{\ex}(q,[\om,q]^\circ) \cGa(j) \chi+O(t^{-2}). \label{commutator-proof-two} 
\eeqa
We show only (\ref{commutator-proof-one}), as the proof of (\ref{commutator-proof-two}) is analogous. Making use
of Lemma~\ref{f-commutator-ex}, and of the fact that $\dGa(q,[k,q]^\circ)\chi=O(t^{-1})$, we can write
\begin{align}
& \chi^{\ex} \cGa(j)\nabla\Omeg(\xi-\dGa(k))\cdot \dGa(q,[k,q]^\circ)\chi\non\\
& \qquad =\chi^{\ex}\nabla\Omeg(\xi-\dGa^{\ex}(k))\cdot \cGa(j) \dGa(q,[k,q]^\circ)\chi+O(t^{-2}).
\end{align}
Next, by exploiting the fact that $\chi^{\ex}\nabla\Omeg(\xi-\dGa^{\ex}(k))$ is bounded, and that Lemma~\ref{high-order-commutators} gives
\beqa
\cGa(j)\dGa(q,[k,q]^\circ)\chi=\dGa^{\ex}(q,[k,q]^\circ)\cGa(j)\chi+O(t^{-2}),
\eeqa
we conclude the proof of (\ref{commutator-proof-one}).

 We still have to show that
\beqa
\chi^{\ex}[H^{\ex}(\xi), \Ga^{\ex}(q)](\cGa(j)\chi-\chi^{\ex}\cGa(j))=O(t^{-2}).
\eeqa
This follows from Lemma~\ref{Helffer} and Proposition~\ref{comm-H-Ga}. \end{proof}

\bep\label{new-comm-prop} Let $q, \ti q\in C^{\infty}(\real)$ be s.t. $q', \ti q' \in C_0^{\infty}(\real)$, $0 \leq q\leq 1$, $q=1$ on a neighbourhood  of zero.
Let $\chi\in C_0^{\infty}(\real)_{\real}$ be supported in $(-\infty,\Sigma_0^{(2)}(\xi))$. Then
\beq
\chi^{(1)} [H^{(1)}(\xi),\Ga(q^t)\otimes 1](1\otimes \ti q^t)\chi^{(1)}=O(t^{-2}),
\eeq
where $\chi^{(1)}=\chi(H^{(1)}(\xi))$.
\eep
\begin{proof} Let us set in Proposition~\ref{comm-H-Ga} $j_0=q$, $j_{\infty}=1$ and conjugate equation~(\ref{comm_H-Ga-rel-one}) with the unitary $U$.
We obtain, as a form identity on $D(H_0^\ex(\xi)^3)$,
\begin{align}
& [H^{\ex}(\xi),\Ga^{\ex}(\un \jj)]\\
\non & \qquad = -\nabla\Omeg(\xi-\dGa^{\ex}(\un k))\cdot \dGa^{\ex}(\un j,[\un k,\un j]^\circ)+\dGa^{\ex}(\un j,[\un \om,\un j]^\circ)
+O(t^{-2})(H_0^\ex(\xi)+1)^3,\quad
\end{align}
where we set $j:=j^t$. Let us now apply the projection $P_{1}$ on $\Fock\otimes\Fock^{(1)}$ to both sides of this equality. 
We get, as a form identity on $D(H_0^{(1)}(\xi)^3)$,
\begin{align}
[H^{(1)}(\xi),\Ga(q)\otimes 1] & = -\nabla\Omeg(\xi-\dGa^{(1)}(k))\cdot (\dGa(q,[k,q]^\circ)\otimes 1)\non\\
&  \quad +(\dGa(q, [\om,q]^\circ)\otimes 1)+O(t^{-2})(H_0^{(1)}(\xi)+1)^3,  
\end{align}
where we abbreviated $q:=q^t$ and made use of relation~(\ref{dGa-q-p-def-double}).
Thus we can write
\begin{align}
 \chi^{(1)}[H^{(1)}(\xi),\Ga(q)\otimes 1](1\otimes \ti q)\chi^{(1)} &
=-\chi^{(1)}\nabla\Omeg(\xi-\dGa^{(1)}(k))\cdot (\dGa(q,[k,q]^\circ)\otimes \ti q)\chi^{(1)} \non\\
& \quad+\chi^{(1)}(\dGa(q, [\om,q]^\circ)\otimes \ti q)\chi^{(1)} +O(t^{-2}),    \label{reduced-commutator-equation}
\end{align}
where we set $\ti q:=\ti q^t$. Here we used Corollary~\ref{Cor-technical2}, with $n=3$.
Let us consider the first term on the r.h.s. above. We choose a function $\ti\chi\in C_0^{\infty}(\real)_{\real}$, supported in $(-\infty,\Sigma_0^{(2)}(\xi))$
and s.t. $\ti\chi\chi=\chi$. Then we get
\beqa
(1\otimes \ti q)\chi^{(1)}=(\ti\chi^{(1)})^2(1\otimes \ti q)\chi^{(1)}+O(t^{-1})
\eeqa
by Lemma~\ref{Helffer-second-component-lemma}. Next, we note that
\beqa
(\dGa(q,[k,q]^\circ)\otimes 1)(\ti\chi^{(1)})^2=\ti\chi^{(1)} (\dGa(q,[k,q]^\circ)\otimes 1)   \ti\chi^{(1)}+O(t^{-2}).
\eeqa 
Here we made use of Lemma~\ref{Helffer-dGa}, (after conjugating  expression~(\ref{Helffer-dGa-formula}) with $U$ and applying $P_{1}$ as above) and
of the fact that $t\to t[k,q]^\circ$ is an admissible and regular propagation observable. This is a consequence of the fact that $1-q$ is admissible
and regular by Lemma~\ref{concrete-admissible-obs}.

Thus making use of the fact that $\chi^{(1)}\nabla\Omeg(\xi-\dGa^{(1)}( k))$ is bounded and 
\beqa
\chi^{(1)}\nabla\Omeg(\xi-\dGa^{(1)}( k))\cdot (\dGa(q,[k,q]^\circ)\otimes 1)=O(t^{-1}),
\eeqa
we obtain
\begin{align}
& \chi^{(1)}\nabla\Omeg(\xi-\dGa^{(1)}( k))\cdot (\dGa(q,[k,q]^\circ)\otimes \ti q)\chi^{(1)}\non\\
& \quad =\chi^{(1)}\nabla\Omeg(\xi-\dGa^{(1)}( k))\cdot \ti\chi^{(1)}(\dGa(q,[k,q]^\circ)\otimes 1)\ti\chi^{(1)} (1\otimes \ti q)\chi^{(1)}+O(t^{-2}).
\end{align}
Exploiting again the fact that $t\to t[k,q]^\circ$ is admissible and regular, we obtain from Corollary~\ref{resolvents} that the first term on
the r.h.s. above is $O(t^{-2})$. The term involving $\dGa(q, [\om,q]^\circ)$ on the r.h.s. of (\ref{reduced-commutator-equation}) is treated
analogously. \end{proof}

\section{Auxiliary results for the proof of Propositions~\ref{main-propagation-estimate} and~\ref{minimal-velocity-estimate}}
\setcounter{equation}{0}

\bep\label{q-a-commutator} Let $\chi\in C_0^{\infty}(\real)_{\real}$ and  let $q\in C^{\infty}(\real)_{\real}$ be  s.t. $q'\in C_0^{\infty}(\real)$. Then
\beqa
\chi^{(1)}\i[H^{(1)}(\xi), 1\otimes q^t]\chi^{(1)}=\fr{1}{t}\chi^{(1)}C (1\otimes (q')^t)\chi^{(1)}+O(t^{-2}),
\eeqa
where $\chi^{(1)}:=\chi(H^{(1)}(\xi))$ and $C$ is a bounded operator on $\Fock\otimes\Fock^{(1)}$, which satisfies $[C,1\otimes p^t]=O(t^{-1})$ for any $p\in C^{\infty}(\real)_{\real}$ s.t. $p'\in C_0^{\infty}(\real)$.

If, in addition, $q'$ is positive and $\sqrt{q'}\in C_0^{\infty}(\real)$, then
\begin{align}
& \chi^{(1)}\i [H^{(1)}(\xi), 1\otimes q^t]\chi^{(1)}\non\\
& \qquad =\fr{1}{t}(1\otimes\sqrt{(q')^t})\chi^{(1)}\i[H^{(1)}(\xi), 1\otimes a]^\circ\chi^{(1)}(1\otimes\sqrt{(q')^t})+O(t^{-2}). 
\label{sandwitched-commutator-positive-q}
\end{align}
\eep
\begin{proof} We write $q:=q^t$ and set in Proposition~\ref{comm-H-dGa} $b_1=0$ and $b_2=q^t$. Clearly, $b_2$ is admissible.
By conjugating formula~(\ref{dGa-comm-double}) with the unitary $U$, we obtain 
\begin{align}
& [H^{\ex}(\xi),\dGa(\un b)]\non\\
&\qquad = -\nabla\Omeg(\xi-\dGa^{\ex}(k))\cdot \dGa^{\ex}([\un k,\un b]^\circ)+\dGa^{\ex}([\un \om,\un b]^\circ)+O(t^{-2})(H_0^\ex(\xi)+1)^4,
\label{dGa-comm-double-one-one}
\end{align}
as a form identity on $D(H_0^\ex(\xi)^4)$.

Now we apply to both sides of this equality the projection $P_{1}$ on the subspace $\Fock\otimes\Fock^{(1)}\subset\Fock^{\ex}$ and 
multiply by the operators $\chi^{(1)}$. We get
\begin{align}
\non & \chi^{(1)}\i [H^{(1)}(\xi),1\otimes q]\chi^{(1)}\\
&\qquad =\chi^{(1)}\bigl(-\nabla\Omeg(\xi-\dGa^{(1)}(k))\cdot (1\otimes \i[k, q]^\circ)+1\otimes \i[\om,q]^\circ\bigr)\chi^{(1)}+O(t^{-2}). \label{expansion-one-otimes-q}
\end{align}
Here we used  Corollary~\ref{Cor-technical2}, with $n=4$.
Now we obtain from Lemma~\ref{pseudo-lemma}, that
\begin{equation}
\i[k, q]^\circ=\fr{1}{t}\i[k,a]^\circ q'+O(t^{-2}) \quad \textup{and} \quad
\i[\om,q]^\circ=\fr{1}{t}\i[\om,a]^\circ q'+O(t^{-2}),
\end{equation}
where $\i[k,a]^\circ =v$ and $\i[\om,a]^\circ=\nabla\om\cdot v$.
Thus we get from relation~(\ref{expansion-one-otimes-q}) that
\begin{align}
& \chi^{(1)}\i[H^{(1)}(\xi),1\otimes q]\chi^{(1)}\non\\\label{AppI-Symm0}
& \qquad =\fr{1}{t}\chi^{(1)}\bigl(-\nabla\Omeg(\xi-\dGa^{(1)}(k))\cdot (1\otimes v)+1\otimes \nabla\om\cdot v\bigr)(1\otimes q')\chi^{(1)}+O(t^{-2}).
\end{align}
We choose $\ti\chi\in C_0^{\infty}(\real)_{\real}$ s.t. $\chi\ti\chi=\chi$ and set 
\beq
C:=\ti\chi^{(1)}\bigl(-\nabla\Omeg(\xi-\dGa^{(1)}(k))\cdot (1\otimes v)+1\otimes \nabla\om\cdot v\bigr).
\eeq
It is clear that $C$ is bounded. The property $[C,1\otimes p^t]=O(t^{-1})$ follows from Lemmas~\ref{pseudo-lemma}, 
\ref{comm-H-dGa-lemma} and \ref{Helffer-second-component-lemma}.
This concludes the proof of the first part of the proposition.

Proceeding to the proof of (\ref{sandwitched-commutator-positive-q}),  we note that by Lemma~\ref{pseudo-lemma}
\begin{equation}\label{AppI-Symm1}
[v, \sqrt{q'}]=O(t^{-1}) \qquad \textup{and}\qquad 
[\nabla\om\cdot v, \sqrt{q'}]=O(t^{-1}).
\end{equation}
There also holds by Lemma~\ref{comm-H-dGa-lemma} (after conjugating it with $U$ and applying the projection $P_{1}$)
\beqa\label{AppI-Symm2}
\chi^{(1)}[\nabla\Omeg(\xi-\dGa^{(1)}(k)), 1\otimes \sqrt{q'}]=O(t^{-1}). 
\eeqa
On the other hand, Lemma~\ref{Helffer-second-component-lemma} gives $[1\otimes \sqrt{q'},\chi^{(1)}]=O(t^{-1})$.
Observing that 
\begin{equation}
\i[H^{(1)}(\xi),1\otimes a]^\circ = -\nabla\Omega(\xi-\dGa^{(1)}(k))\cdot (1\otimes v) 
+ 1\otimes \nabla\om\cdot v,
\end{equation}
we conclude \eqref{sandwitched-commutator-positive-q} by symmetrizing \eqref{AppI-Symm0}, with the aid of \eqref{AppI-Symm1} and \eqref{AppI-Symm2}.
 \end{proof} 

\bel \label{minimal-lemma} Let $a_i:=\h (v_i(k)\cdot\i\nabla_k+\i\nabla_k \cdot v_i(k))$ for some 
$v_{i}\in C_0^{\infty}(\real^{\nu}\backslash\{0\};\real^{\nu})$, for $i\in \{1,2\}$.
Let $\un a=\diag(a_1,a_2)$ be an operator on (a domain in) $\mfh\oplus\mfh$. Then $H_1(\xi)$ is of class $C^1(\dGa(\un a))$ and
\begin{equation}
[H_1(\xi), \dGa(\un a)]^\circ \in B(D(N H_{1,0}(\xi));\hil) \subset B(D(H_{1,0}(\xi)^2);\hil)
\end{equation}
In particular $\chi(H_1(\xi))\in C^1(\dGa(\un a))$, for any $\chi\in C_0^\infty(\real)_{\real}$.
\eel
\begin{proof} From \cite[Prop.~2.8]{MR12}, and a conjugation by the unitary $U$, we learn that $H_1(\xi)$ is of class $C^1(\dGa(\un a))$. We furthermore find that
\beqa
\i[H_1(\xi), \dGa(\un a)]^\circ =\dGa(\un v\cdot \un{\nabla\om})-\dGa(\un v)\cdot \nabla\Omeg(\xi-\dGa(\un k))-\phi(\i a_1 G,0),
\eeqa
where $\un v:=\diag(v_{1},v_{2})$ is a $\nu$-tuple of operators  on $\mfh\oplus\mfh$ and  the expression on the r.h.s. above is manifestly $N H_{1,0}(\xi)$-bounded. 
The remaining part of the lemma follows analogously as in the proof of Lemma~\ref{Helffer-dGa}. \end{proof}

\bel\label{minimal-auxiliary}  Let $\chi\in C_0^{\infty}(\real)_{\real}$, $\ti q\in C_0^{\infty}(\real)$, $q\in C^{\infty}(\real)$ be  s.t. $q'\in C^{\infty}_0(\real)$, 
$0\leq q \leq 1$ and $q= 1$ in some neighbourhood  of zero.  Let $t\to b(t)=(a/t)\ti q(a/t)$.  Let $\un q^t=(q^t, q^t)$ and $\un b(t)=(b(t),b(t))$
be propagation observables on $\mfh\oplus\mfh$. Then
\beq
 [ \dGa(\un q^t,\un b), \chi(H_1(\xi))] =O(t^{-1}).
\eeq 
\eel
\begin{proof} We set $\un q:=\un q^t$ and $R_{1,0} = (1+H_{0,1}(\xi))^{-1}$. We note that $b$ is admissible by Lemma~\ref{pseudo-lemma}.  Let us first estimate
the commutator of $\dGa(\un q^t,\un b)$ with $H_1(\xi)$. As for the first term from the free auxiliary Hamiltonian, cf.~\eqref{AuxHam}, 
Lemma~\ref{comm-H-dGa-lemma} gives
\beqa
[\Omeg(\xi-\dGa(\un k)), \dGa(\un q, \un b)]R_{1,0}^4=O(t^{-1}).
\label{dGaqp-H-repeated}
\eeqa
Concerning the second term from the Hamiltonian, we obtain from Lemma~\ref{rest-terms} 
\beqa
[\dGa(\un\om),\dGa(\un q, \un b)] R_{1,0}^4 
=\bigl(\dGa(\un q,[\un \om, \un q]^\circ, \un b)+\dGa(\un q, [\un \om, \un b]^\circ)\bigr)R_{1,0}^4 =O(t^{-1}).
\eeqa
 The  interaction term from the Hamiltonian gives 
\begin{align}
& [\phi(G,0),\dGa(\un q, \un b)]R_{1,0}^{4}=\bigl(a^*((1-q)G,0)\dGa(\un q, \un b)-a^*(b G,0)\Ga(\un q)\non\\
& \qquad +\Ga(\un q)a(b^*G,0)+\dGa( \un q, \un b)a((q-1)G,0)\bigr) R_{1,0}^{4}=O(t^{-1}),
\end{align}
where we made use of Lemma~\ref{q-p-lemma}. In the last step we exploited the fact that $\|(1-q)G\|_2\leq C/t^2$, since $1-q$ is regular,
and the bound $\|bG\|_2=\fr{1}{t}\|qaG\|_2\leq c/t$, which follows from the fact that $G$ is in the domain of $a$. Thus we have shown that
\beqa
[H_1(\xi), \dGa(\un q, \un b)]R_{1,0}^4=O(t^{-1}).
\eeqa 
Now one concludes the proof using the method of almost analytic extensions as in the proof of Lemma~\ref{Helffer-dGa}. \end{proof}

\section{Auxiliary results for the proof of Theorem~\ref{wave-operators-theorem} }
\setcounter{equation}{0}

In the present appendix we ask the reader to keep
Corollary~\ref{Cor-DomInv-Gamma} in mind. It ensures that the statements of 
results and manipulations in proofs are meaningful.
\bel\label{commutator-exchange} Let $\chi\in C_0^{\infty}(\real)_{\real}$, $j_0, j_{\infty}$ be as specified in Definition~\ref{j-definition-a} and s.t.
$j_0^2+j_{\infty}^2=1$, and let $q=(q_0,q_{\infty}):=(j_0^2,j_{\infty}^2)$.
Then
\beqa
\chi^{\ex}\big(H^{\ex}(\xi)\cGa(q^t)-\cGa(q^t)H(\xi)\big)\chi
=2\chi^{\ex}[H^{\ex}(\xi),\Ga^{\ex}(\un j^t)]\cGa(j^t)\chi+O(t^{-2}), \label{wave-op-commutator}
\eeqa
where $\un j^t:=\diag(j_0^t,j_{\infty}^t)$ is a propagation observable on $\mfh\oplus\mfh$  
and we set $\chi:=\chi(H(\xi))$ and $\chi^{\ex}:=\chi(H^{\ex}(\xi))$.
\eel
\begin{proof} We set $q:=q^t$, $j:=j^t$.  We note that, by Proposition~\ref{H-cGa-comm}, 
\begin{align}
&\chi^{\ex}\bigl(H^{\ex}(\xi)\cGa(q)-\cGa(q)H(\xi)\bigr)\chi\non\\
&\qquad =\chi^{\ex}\bigl(-\nabla\Omeg(\xi-\dGa^{\ex}(k))\cdot \cdGa(q,[\un k, q]^\circ)+\cdGa(q, [\un \om, q]^\circ) \bigr)\chi+O(t^{-2}),
\label{Ga-check-comm-one-new}
\end{align}
where $[\un \om, q]^\circ$ is the extension by continuity of the form 
$\un\om q-q\om$,  a priori defined on $(D(\om)\oplus D(\om))\times D(\om)$.
The same remark goes for $[\un k, q]^\circ$.
On the other hand, Proposition~\ref{comm-H-Ga} gives
\begin{align}
&\chi^{\ex}[H^{\ex}(\xi),\Ga^{\ex}(\un j)]\cGa(j)\chi \non\\
&\qquad =\chi^{\ex}\big(-\nabla\Omeg(\xi-\dGa^{\ex}( k)) \cdot \dGa^{\ex}(\un j,[\un k,\un j]^\circ)+\dGa^{\ex}(\un j,[\un \om,\un j]^\circ)\big)\cGa(j)\chi  +O(t^{-2}),
\label{comm_H-Ga-rel-one-new}
\end{align}
where we made use of Lemma~\ref{Helffer} to observe $\cGa(j)\chi=\chi^{\ex}\cGa(j)+O(t^{-1})$.

In view of (\ref{Ga-check-comm-one-new}) and (\ref{comm_H-Ga-rel-one-new}), to complete the proof of the lemma,
it suffices to show that
\begin{align}
& \chi^{\ex}\nabla\Omeg(\xi-\dGa^{\ex}(k))\cdot \cdGa(q,[\un k, q]^\circ)\chi\non\\
& \qquad =2\chi^{\ex}\nabla\Omeg(\xi-\dGa^{\ex}( k)) \cdot \dGa^{\ex}(\un j,[\un k,\un j]^\circ)\cGa(j)\chi+O(t^{-2}) 
\label{first-relation-wave-op}
\end{align}
and
\beqa
\chi^{\ex}\cdGa(q, [\un \om, q]^\circ) \chi=2\chi^{\ex}\dGa^{\ex}(\un j,[\un \om,\un j]^\circ)\cGa(j)\chi+O(t^{-2}).
\label{second-relation-wave-op}
\eeqa
Both relations are a consequence of the following fact: Let $g\in C^{\infty}(\real)$ be  s.t. all its derivatives of
non-zero order are bounded. Let $\un g:=\diag(g,g)$ be the corresponding operator on $\mfh\oplus\mfh$.
Then 
\begin{align}
2\dGa^{\ex}(\un j,[\un g,\un j]^\circ)\cGa(j)\chi=\cdGa(\un j j, 2[\un g,\un j]^\circ j)\chi
&=\cdGa(q,[\un g, q]^\circ)\chi-\cdGa(q,  [[\un g,\un j]^\circ, j])\chi\non\\
&=\cdGa(q,[\un g, q]^\circ)\chi+O(t^{-2}),
\end{align}
where in the last step we made use of Lemma~\ref{pseudo-lemma}. This concludes the proof. \end{proof}
\bel\label{Ga-H-Ga} Let $\chi\in C_0^{\infty}(\real)_{\real}$ and  let $j_0$ be as specified in Definition~\ref{j-definition-a}.
Then there holds 
\beq
\chi[\Ga(j_0^t),[H(\xi),\Ga(j_0^t)]]\chi=O(t^{-2}),
\eeq
where we set $\chi:=\chi(H(\xi))$.
\eel
\begin{proof}  We set  $j_0:=j_0^t$ and recall that, by Proposition~\ref{comm-H-Ga},
\begin{align}\label{comm_H-Ga-rel-two-new}
& [H(\xi),\Ga(\jj_0)]\\
&\non\qquad =\bigl(-\nabla\Omeg(\xi-\dGa(k))\cdot \dGa(j_0,[k,j_0]^\circ)+\dGa(j_0,[\om,j_0]^\circ)\bigr)+O(t^{-2})(H_{1,0}(\xi)+1)^3,
\end{align}
in the sense of forms on $D(H_{1,0}(\xi)^3)$. In view of this relation and the fact that $\Ga(j_0)\chi=\chi\Ga(j_0)+O(t^{-1})$
(Lemma~\ref{j_0Lemma}), it is enough to check that
\beqa
\chi [\Ga(j_0),\nabla\Omeg(\xi-\dGa(k))\cdot \dGa(j_0,[k,j_0]^\circ)]\chi=O(t^{-2}) \label{double-commutator-first}
\eeqa
and
\beqa
\chi [\Ga(j_0),  \dGa(j_0,[\om,j_0]^\circ)]\chi=O(t^{-2}). \label{double-commutator-second}
\eeqa
Relation~\eqref{double-commutator-second} follows immediately from formula~(\ref{Gamma-dGamma-comm}) which gives
\beqa
\chi [\Ga(j_0),  \dGa(j_0,[\om,j_0]^\circ)]\chi=\chi\dGa(j_0^2,[j_0,[\om,j_0]^\circ])\chi=O(t^{-2}), \label{simple-part-double-commutator}
\eeqa
where in the last step we applied Lemma~\ref{pseudo-lemma}. Formula~(\ref{double-commutator-first}) follows from
\beqa
[\Ga(j_0),  \dGa(j_0,[k,j_0]^\circ)]\chi=O(t^{-2}),
\eeqa
which is justified as (\ref{simple-part-double-commutator}), from the fact that $\dGa(j_0,[k,j_0]^\circ)\chi=O(t^{-1})$ and from Lemma~\ref{f-commutator}, which gives
\beqa
\chi [\Ga(j_0),\nabla\Omeg(\xi-\dGa(k))]=O(t^{-1}).
\eeqa
This concludes the proof. \end{proof}

\section{Negative spectrum of the conjugate operator}
\setcounter{equation}{0}
\bel\label{negative-spectrum-of-a} Let $\chi\in C_0^{\infty}(\real)_{\real}$ and $\Psi\in D(\dGa(a))$. There exists a constant $c>0$ such that the following holds true:  
For any pair of functions $q, q_R\in C^{\infty}(\real)$, with  $0\leq q,q_R\leq 1$,  $\supp\, q \subset (-\infty,\eps)$
for some $\eps>0$, $q(s)=q_R(s)$ for $s>-R$ and $q_R(s)=0$ for $s<-R-1$; we have
\beqa
\sup_{t\geq 1}\|(\Ga(q^t)-\Ga(q_R^t))\e^{-\i tH(\xi)} \chi(H(\xi)) \Psi\|\leq \frac{c}{R}.
\eeqa
\eel
\begin{proof} Let us denote by $\mathbf{1}_{\{A\leq -R\}}$ the spectral projection of a self-adjoint operator $A$ on the interval $(-\infty,-R]$.
We set $q:=q^t$ and $q_R:=q_R^t$ and recall that $\Psi\in D(\dGa(a))$. As a consequence $\chi(H(\xi))\Psi\in D(\dGa(a))$ for any  
$\chi\in C_0^{\infty}(\real)_{\real}$, since $H(\xi)$ is of class $C^{1}(\dGa(a))$~\cite[Proposition~2.5]{MR12}.

Making use of the subsequent Lemma~\ref{Large-R}, and abbreviating $\chi = \chi(H(\xi))$, we obtain
\begin{align}
& \bigl\| \bigl(\Ga(q(a/t))- \Ga(q_{R}(a/t)) \bigr)\e^{-\i tH(\xi)}\chi\Psi\bigr\|\non\\
& \qquad =\bigl\| \mathbf{1}_{\{\dGa(a/t)\leq -R+\eps N\}}\bigl(\Ga(q)- \Ga(q_{R}) \bigr)\e^{-\i tH(\xi)}\chi\Psi\bigr\| \non\\
& \qquad =\bigl\| \mathbf{1}_{\{\dGa(a/t)\leq -R+\eps N\}}\mathbf{1}_{\{ N\leq R \}}\bigl(\Ga(q)- \Ga(q_R) \bigr)\e^{-\i tH(\xi)}\chi\Psi\bigr\|\non\\
& \quad\qquad +\bigl\| \mathbf{1}_{\{\dGa(a/t)\leq -R+\eps N\}}\mathbf{1}_{\{ N\geq R \}}\bigl(\Ga(q)- \Ga(q_{R}) \bigr)\e^{-\i tH(\xi)}\chi\Psi\bigr\|.\label{Control-of-large-R}
\end{align}
The first term on the r.h.s. above can be estimated by
\begin{align}
& \bigl\| \mathbf{1}_{\{\dGa(a/t)\leq -R(1-\eps)\}}  \bigl(\Ga(q)- \Ga(q_{R}) \bigr)\e^{-\i tH(\xi)}\chi\Psi\bigr\|\non\\
& \qquad =\bigl\| \mathbf{1}_{\{\dGa(a/t)\leq -R(1-\eps)\}} \dGa(a/t)^{-1} \bigl(\Ga(q)- \Ga(q_{R}) \bigr)\dGa(a/t)\e^{-\i tH(\xi)}\chi\Psi\bigr\|\non\\
& \qquad \leq \fr{2}{(1-\eps)Rt}\bigl\|\e^{\i tH(\xi)}\dGa(a)\e^{-\i tH(\xi)}\chi\Psi\bigr\|. \label{expression-to-be-integrated}
\end{align}
To estimate the expression on the r.h.s. of (\ref{expression-to-be-integrated}), we proceed similarly as in the proof of \cite[Lemma~44]{FGSch3}:
\begin{align}
\bigl\|\e^{\i tH(\xi)}\dGa(a)\e^{-\i tH(\xi)}\chi\Psi\bigr\| & \leq  \bigl\| \int_0^tdt' \e^{\i t'H(\xi)}[H(\xi), \dGa(a)]^\circ\e^{-\i t'H(\xi)}\chi\Psi\bigr\|
+\bigl\|\dGa(a)\chi\Psi\bigr\|\non\\
& \leq tc'\bigl\|\Psi\|+\|\dGa(a)\chi\Psi\bigr\|,
\end{align}
where we made use of the fact that $c':=\|[H(\xi), \dGa(a)]^\circ\chi\|<\infty$, by Lemma~\ref{minimal-lemma}. Thus we obtain that
\beqa
\| \mathbf{1}_{\{\dGa(a/t)\leq -R(1-\eps)\}}  \bigl(\Ga(q)- \Ga(q_{R}) \bigr)\e^{-\i tH(\xi)}\chi\Psi\|\leq \fr{c''}{R}\|\Psi\|+\fr{c''}{Rt}\|\dGa(a)\chi\Psi\|,
\eeqa
where the constant $c''$ does not depend on the choice of $t$, $q$ and $q_R$.
 As for the second term on the r.h.s. of (\ref{Control-of-large-R}), it is bounded by
\beqa
2\|\mathbf{1}_{\{ N\geq R \}}\chi\Psi\| \leq 2\|\mathbf{1}_{\{ N\geq R \}} (1+N)^{-1}\| \,
\|(1+N)\chi\Psi\|\leq \fr{c'''}{R}\|\Psi\|,
\eeqa
where $c''' :=  2\|(N+1)\chi(H(\xi))\|<\infty$. 
Altogether, we get that
\beqa
\| \bigl(\Ga(q(a/t))- \Ga(q_{R}(a/t)) \bigr)\e^{-\i tH(\xi)}\chi\Psi\|\leq \fr{c}{R}\|\Psi\|+\fr{c}{Rt}\|\dGa(a)\chi\Psi\|,
\eeqa
where $c$ is independent of $t$, $q$ and $q_R$. This concludes the proof. \end{proof}
\bel\label{Large-R} Let $q, q_R\in C^{\infty}(\real)$ be s.t.  $0\leq q,q_R\leq 1$,  $\supp\, q \subset (-\infty,\eps)$
for some $\eps>0$, $q(s)=q_R(s)$ for $s>-R$ and $q_R(s)=0$ for $s<-R-1$. Then, for $\Psi\in\hil$,
\begin{equation}
\bigl(\Ga(q(a/t))- \Ga(q_{R}(a/t) )\bigr)\Psi
 =\mathbf{1}_{\{\dGa(a/t)\leq -R+\eps N\}}\bigl(\Ga(q(a/t))-\Ga(q_{R}(a/t))\bigr)\Psi.\label{Large-R-eq}
\end{equation}
\eel
\begin{proof} As all the operators involved commute with the number operator, it is enough to consider the problem in some $n$-particle subspace.
We embed $\Fock^{(n)}$ into the non-symmetrized tensor product of single-particle spaces $\mfh^{\otimes n}$.
We note that $a\otimes 1\otimes \cdots\otimes 1, 1\otimes a \otimes 1\otimes\cdots \otimes 1, \ldots, 1\otimes\cdots\otimes 1\otimes a$ 
is a family of $n$ commuting operators on $\mfh^{\otimes n}$. We denote their joint spectral projection-valued measure by $F$. Thus the $n$-particle component of the vector on the l.h.s. of (\ref{Large-R-eq}) is a sum of terms of the form
\beqa
\int q(a_1/t)\ldots q(a_{i-1}/t)\bigl(q(a_i/t)-q_R(a_i/t)\bigr)q_R(a_{i+1}/t)\ldots q_R(a_n/t) dF(a)\Psi_n,
\eeqa
where $\Psi_n$ is the $n$-particle component of $\Psi$. Now, by the assumed properties of $q$ and $q_R$, we obtain that the above
expression is equal to 
\begin{align}
&\int \mathbf{1}\bigl( (a_1/t+\cdots+a_n/t)\leq -R+\eps n \bigr)\non\\
 &\quad \times q(a_1/t)\ldots q(a_{i-1}/t)\bigl(q(a_i/t)-q_R(a_i/t)\bigr)q_R(a_{i+1}/t)\ldots q_R(a_n/t) dF(a)\Psi_n.
\end{align}
This proves (\ref{Large-R-eq}) on $\mfh^{\otimes n}$. Since both sides of (\ref{Large-R-eq}) leave $\Fock^{(n)}$ invariant, this completes the proof. \end{proof} 

\section{Structure of the isolated spectrum}
\setcounter{equation}{0}

We begin by recalling some analytic perturbation theory for isolated eigenvalues following Kato.  Suppose that $D\subset \complex$ is an open set which intersects with the real line
and $D\ni \kappa\to T(\kappa)$ is a holomorphic family of Type A in the sense of Kato.
We assume that $T(\kappa)$ is a self-adjoint operator when $\kappa\in D\cap\real$.
Suppose $\lambda_0\in\real$ is an isolated eigenvalue of the self-adjoint operator $T(\kappa_0)$, with $\kappa_0\in\real\cap D$. Denote by $n_0$ its multiplicity, which we assume to be finite. Let $e>0$ be such that 
$\sigma(T(\kappa_0))\cap J_{2e} = \{\lambda_0\}$, where
$J_e :=  [\lambda_0-e,\lambda_0+e]$. 

Abbreviate $\sigma_e(\kappa):= \sigma(T(\kappa))\cap B_e(\lambda_0)$.
There exists $r>0$ such that $B_r(\kappa_0)\subset D$ and for all $\kappa\in B_r(\kappa_0)$ we have $\sigma_{2e}(\kappa) = \sigma_e(\kappa)$.
Such an $r$ exists because the set $\{(\kappa,\lambda)| \lambda\in\sigma(T(\kappa))\}$ is a closed subset of $D\times \complex$.

Denote by $C$ the circle in $\complex$ encircling $\lambda_0$ with radius $3e/2$.
Then $\sigma_e(\kappa)$ is enclosed by the circle for all $\kappa\in B_r(\kappa_0)$, and accounts for all the spectrum of $T(\kappa)$ inside (or on) the circle.
We can thus compute the Riesz projection:
\[
P(\kappa) =  \fr{1}{2\pi i}\oint_C dz\fr{1}{z-T(\kappa)}.
\]
For real $\kappa$ the bounded operator $P(\kappa)$ is the spectral projection
onto the spectral subspace  pertaining to the spectrum of $T(\kappa)$ inside the cluster $\sigma_e(\kappa)$.
In particular $P(\kappa_0) = P_{\lambda_0}(\kappa_0)$, the orthogonal projection onto the  $n_0$-dimensional eigenspace of $T(\kappa_0)$, pertaining to 
the eigenvalue $\lambda_0$. Due to norm-continuity
of $\kappa\to P(\kappa)$ we conclude that the set
$\sigma_e(\kappa)$ has cardinality at most $n_0$, corresponding to eigenvalues with (algebraic) multiplicities summing up to $n_0$. 

 Denote by $v_1^0,\dots,v_{n_0}^0$ an ONB for the range of $P_{\lambda_0}(\kappa_0)$. Then, possibly choosing $r$ smaller, we may assume
 that $v_j(\kappa) = P(\kappa)v_j^0$ forms a linearly independent analytic set of vectors spanning $\Ran(P(\kappa))$. Using the Gram-Schmidt procedure we can pass to an analytic ONB $u_1(\kappa),\dotsc,u_{n_0}(\kappa)$ for $\Ran(P(\kappa))$. 
Such a basis defines an analytic family of unitary maps $\Pi_\kappa\colon \Ran(P(\kappa))\to \complex^{n_0}$, defining $\Pi_\kappa(u_j(\kappa)) = e_j$, the $j$'th standard basis vector.  We can now construct an analytic family
of $n_0\times n_0$ matrices $A(\kappa) = \Pi_\kappa T(\kappa) \Pi_\kappa^*$. 
By construction $A(\kappa)$ is self-adjoint for $\kappa\in B_r(\kappa_0)\cap\real$
and $\sigma(A(\kappa))= \sigma_e(\kappa)$. 

By a result of Kato \cite[Theorem~6.1]{Ka}, we can identify a number $m_0\leq n_0$ of real analytic functions $\mu_j\colon B_r(\kappa_0)\cap \real\to \real$,
such that $\sigma_e(\kappa)=\{\mu_1(\kappa),\dotsc,\mu_{m_0}(\kappa)\}$.
They all coincide with $\lambda_0$ if $\kappa=\kappa_0$ and are otherwise distinct. 

The above discussion implies the following result 
on analytic continuation of shells through crossings.
\bep\label{analytic-continuation} Let $\mcXo$ be a  level crossing, which is a sphere of radius $R>0$. 
Let $(\mcA_m^-,S_m^-)$, $m\in J^-$ and $(\mcA_n^+,S_n^+)$, $n\in J^+$ be shells approaching
this crossing from the inside and outside, respectively. 
Then, (after suitable identification of the  index sets $J_\pm=:J$) one can find analytic functions
\beqa
\mcA_n^+\cup \mcXo\cup   \mcA_n^-\ni \xi\to S_n(\xi),
\eeqa
such that $S_n(\xi)=S^\pm_n(\xi)$, $\xi\in \mcA_n^{\pm}$.
\eep
\begin{proof} \rm  Put $T(\kappa) = H(\kappa,0,\dotsc,0)$, where we exploit the rotation invariance of the spectrum to conclude the proposition from the preceding discussion.
\end{proof}

Let $\mcA(r;R):=\set{\xi\in\real^{\nu}}{ r<|\xi|<R }$ for some $0\leq r<R\leq \infty$.  
Keeping in mind the possibility that the inner or outer  boundary of a shell is a subset of the essential spectrum,
we obtain from the above proposition that $\Sigma_\iso\backslash (\{0\}\times\real)$ is a union of graphs of an at most countable family of rotation invariant analytic functions
\beqa
\mcA(r_n; R_n)\ni \xi\to S_n(\xi),
\eeqa
where $n\in J$. (The zero total momentum fiber has been cut out since 
one may in principle have shells like graphs of the two functions $\xi\to(\xi\pm\xi_0)^2$ crossing analytically at $\xi = 0$ but not naturally occurring as a single-valued rotation invariant function.) These considerations enable a splitting of the isolated bound states $\hil_\iso = E(\Sigma_\iso)\hil$ into 
dressed electron subspaces:
\begin{align}\label{QuasiParticleSpace}
\hil_\iso & = \bigoplus_{n} \hil_{\iso,n}, \qquad \textup{where}\quad \hil_{\iso,n} = \ov{\widetilde{\hil}_{\iso,n}},\\
\label{PreQuasiParticleSpace}
\widetilde{\hil}_{\iso,n} &=\Bigset{ I_{\LLP}^* \int^\oplus d\xi\, \Psi_\xi }{  \Psi\in C^0_0(\mcA(r_n;R_n);\Fock),\ H(\xi)\Psi_\xi = S_n(\xi)\Psi_\xi},
\end{align}
where by $\Psi\in C^0_0(\mcA(r_n;R_n);\Fock)$ it is understood that $\xi \to \Psi_\xi\in\Fock$ is a continuous function,
 compactly supported in $\mcA(r_n;R_n)$.

After this preparation we state and prove the following  corollary of Proposition~\ref{analytic-continuation}:
\bec\label{Non-constant-corollary} Let $\om$ be the boson dispersion relation. Then
\beq
S^{(1)}_n(\xi; k):=S_n(\xi-k)+\om(k), \label{constant-function-one}
\eeq 
defined for $k\in \xi-\mcA(r_n;R_n)$, is a constant function at most for $\xi$ from some countable set. 
\eec
\begin{proof} Let us first assume that $\om$ is not a constant function.
Suppose that 
\beq
\xi-\mcA(r_n;R_n)\ni k\to S^{(1)}_n(\xi; k) 
\eeq
is constant  for $\xi=\xi_0$ and $\xi=\xi_0+k'$ for 
some  $k'\neq 0$. (For $\nu>1$ it is enough to assume that there is one such $k'$ to arrive at a contradiction.
For $\nu=1$ we assume that there are uncountably many).
Then
\begin{equation}
\begin{aligned}
&k\in \xi_0-\mcA(r_n;R_n): & & S_n(\xi_0-k)+\om(k)  =c_{\xi_0},\\
&k\in \xi_0+k'-\mcA(r_n;R_n):& &  S_n(\xi_0-k+k')+\om(k)=c_{\xi_0+k'}.
\end{aligned}
\end{equation}
But the latter condition means that $k-k'\in \xi_0-\mcA(r_n;R_n)$, so we can substitute it into the first equality,
obtaining the equations

\begin{equation}
k\in \xi_0+k'-\mcA(r_n;R_n): \qquad \begin{aligned}
& S_n(\xi_0-k+k')+\om(k-k') =c_{\xi_0},\\
& S_n(\xi_0-k+k')+\om(k)=c_{\xi_0+k'}.
\end{aligned}
\end{equation}
Consequently,
\beq
\om(k)-\om(k-k')=c_{\xi_0+k'}-c_{\xi_0}. \label{periodicity-of-omega}
\eeq
Since this equality holds on an open set, it extends to all $k\in\real^{\nu}$ by analyticity. 
Now let us assume that $\nu>1$. Then, making use
of rotation invariance of $\om$, we obtain for any $O\in\mrO(\nu)$ 
\beqa
\om(k)-\om(k-Ok')=c_{\xi_0+k'}-c_{\xi_0}.
\eeqa
By differentiating this relation w.r.t. one-parameter families of rotations, we obtain $\nabla\om(k)\cdot Lk'=0$,
for any element $L$ of the Lie algebra of the group of rotations. Recalling that
such $L$ are antisymmetric matrices and choosing coordinates so that $k'=(c,0,\ldots,0)$, 
we obtain that $\pa_i\om(k)=0$ for all $2\leq i\leq \nu$. By rotation invariance, this
is only possible if $\om$ is  constant, which is a contradiction.

Let us now go back to formula~(\ref{periodicity-of-omega}) and assume that $\nu=1$.
By differentiating this relation  w.r.t. $k$, we obtain that 
\beqa
\nabla\om(k)=\nabla\om(k-k')
\eeqa
i.e. $\nabla\om$ is a continuous function  which has uncountably many periods $k'$. But this
is only possible if $\nabla\om$ is a constant function \cite{CK66}. This implies that $\om(k)=c_1k+c_2$.
We note that $c_1=0$ by reflection invariance. Thus we 
obtain again that $\om$ is a constant function contradicting our assumption.

Finally, let us suppose that $\om$ is a constant function.  Then $S^{(1)}_n(\xi; k)=S_n(\xi-k)+\om(k)$
can only be  constant  if $S_n$ is  constant. 
But this is excluded by the following property
\beqa
\lim_{|\xi|\to\infty}( \Sigma_0^{(1)}(\xi)-\Sigma_0(\xi))=0,
\eeqa
proven in \cite[Theorem~2.4]{MRMP}, and the fact that for a constant dispersion relation
\beq
\Sigma_{0}^{(1)}(\xi)=\inf_{k\in\real^{\nu}}(\Sigma_0(\xi-k)+\om(k))=\inf_{k\in\real^{\nu}}\Sigma_0(k)+m=\mathrm{const.}
\eeq
In the above reasoning  we made use of Proposition~\ref{analytic-continuation} to show that
any shell $(\mcA, S)$ s.t. $S$ is constant extends to  a constant shell $S_n$ on $\mcA(0,\infty)$. \end{proof}

\section{Structure of the spectrum of the extended Hamiltonian}
\setcounter{equation}{0}

For a Borel set $O\subset \real\times \real^\nu$ we recall the notion of $O$-compatibility
from Subsection~\ref{results-subsection}.  A state $\Psi\in\hil_\bnd$ and a boson wave packet $h\in\mfh$ are called
$O$-compatible if  there exists a  Borel subset $S\subset \real^{\nu+1}$ such that:
$\Psi\in E(S)\hil$ and for any $k$ in the essential support of $h$ and $(\xi,\mu)\in S$,
we have $(\xi+k,\mu+\omega(k))\in O$. As shown in Lemma~\ref{energetic-considerations} below, this property ensures that
the simple tensor $\Psi\otimes a^*(h)\vac$ is an element of $E^{(1)}(O)(\hil_\bnd\otimes \mfh)$. 

Recall that $E^{(1)}$ denotes the joint spectral resolution for the pair $P^{(1)},H^{(1)}$, cf.~\eqref{Hl-And-Pl}, as well as the notation $\hil_\iso = E(\Sigma_\iso)\hil$ for the
subspace of $\hil_\bnd$, consisting of isolated bound states \eqref{Hiso}. 
Finally, we remind the reader of the notation $\mcR\subset \real^{\nu+1}$ for the set of points $(\xi,\lambda)$, with $\lambda < \Sigma^{(1)}_0(\xi)$, i.e. the energy-momentum set below the two-boson threshold. For the purpose of this appendix we write $\mcC(O)\subset \hil_\bnd\otimes \mfh$ for the set of $O$-compatible pairs $(\Psi,h)$.
The following lemma characterizes the incoming and outgoing states below the two-boson threshold.
It is similar to \cite[Lemma~30]{FGSch2}.

\bel\label{energetic-considerations} Let  $O\subset\mcR$ be an open set. 
Then   
\begin{align}\label{PreFormOfAsymptoticStates}
& E^{\ex}(O)\hil^{\ex} = E(O)\hil\oplus  E^{(1)}(O)(\hil_\iso\otimes\mfh),\\
\label{FormOfAsymptoticStates}
& E^{(1)}(O)(\hil_\iso\otimes\mfh) = \ov{\Span\bigset{\Psi\otimes a^*(h)\Om }{ (\Psi,h)\in \mcC(O)}},\\
\label{AsympOnePartAreBound}
& E^{(1)}(O)(\hil\otimes\mfh)\subset \hil_\iso\otimes \mfh.
\end{align}
\eel
\begin{proof}  
 Let $\mathbf{1}_O$ be the characteristic function of $O$. Making use of the decomposition~(\ref{Hamiltonian-decomp}), 
 we compute
\begin{equation}
 \mathbf{1}_O\bigl(P^{\ex},H^{\ex}\bigr)=\mathbf{1}_O(P,H)\oplus\Bigl( \bigoplus_{\ell=1}^\infty \mathbf{1}_O\bigl(P^{(\ell)},H^{(\ell)}\bigr)\Bigr).
\label{chi-decomposition-formula}
\end{equation}
Since $O$ is located below the $2$-boson threshold $\Sigma^{(2)}_0$, the contributions from asymptotic particle
sectors, with $\ell\geq 2$, are zero. The range of the $0$'th summand is $E(O)\hil$ and the range of the $1$'st
summand is $E^{(1)}(O)(\hil\otimes\mfh)$. We are thus reduced to establishing the identity \eqref{FormOfAsymptoticStates} and the inclusion \eqref{AsympOnePartAreBound}.

Abbreviate
\begin{equation}
V := \overline{\Span\bigset{ \Psi\otimes a^*(h)\Om }{ (\Psi,h) \in \mcC(O)}}.
\end{equation}
Clearly, $V \subset \hil_\iso\otimes\mfh$. In order to prove \eqref{FormOfAsymptoticStates} we need to verify
$E^{(1)}(O)( \hil\otimes\mfh) = V$.

In the following we will make repeated use of the direct integral representation
\begin{equation}
I_\LLP^{(1)}(\Psi\otimes a^*(h)\vac) = \int^\oplus d\xi \int^\oplus  dk \, h(k) \Psi_{\xi-k}
\end{equation}
for simple tensors, with $\Psi\in\hil$ and $h\in\mfh$. This decomposition is the same as the one in Subsection~\ref{Extended-Hamiltonian}, cf.~\eqref{FirstExFib}, \eqref{MomentumFib}
and~\eqref{DoubleFib}.
If $\Psi\in\widetilde{\hil}_{\iso,n}$, cf.~\eqref{PreQuasiParticleSpace}, we can in particular compute:
\begin{align}
\non I_\LLP^{(1)} E^{(1)}(O)(\Psi\otimes a^*(h)\vac)& = \int^\oplus  d\xi \int^\oplus dk\, h(k) \mathbf{1}_O(\xi,H^{(1)}(\xi;k))\Psi_{\xi-k}\\
&  = \int^\oplus  d\xi  \int^\oplus dk \,h(k) \mathbf{1}_O(\xi,S_n(\xi-k)+\omega(k))\Psi_{\xi-k}.
\end{align}
If $\Psi$ and $h$ are $O$-compatible we see that for $k\in\supp\, h$ and $\xi$ such that $\Psi_{\xi-k}\neq 0$,
we must have $(\xi,S_n(\xi-k)+\omega(k))\in O$ and hence, by \eqref{QuasiParticleSpace} and a density argument, we have established that $V\subset E^{(1)}(O)(\hil\otimes\mfh)$. 

We proceed to show that $E^{(1)}(O)(\hil\otimes\mfh)\subset \hil_\iso\otimes\mfh$. For this it suffices to argue that
for any Borel set $U\subset\set{(\xi,\lambda)}{ \la\geq \Sigma^{(1)}_0(\xi)}$ and state $\varphi = E^{(1)}(O)(\Psi\otimes a^*(h)\Om)$, with $\Psi\in\hil$ and $h\in\mfh$,
we must have $(E(U)\otimes 1) \varphi = 0$. For this we compute
\begin{align}
\non & (E(U)\otimes 1) \varphi = (E(U)\otimes 1)I_\LLP^{(1)*} 
\int^\oplus d\xi \int^\oplus  dk\, h(k) \mathbf{1}_O(\xi,H^{(1)}(\xi;k))\Psi_{\xi-k}\\
& \quad =I_\LLP^{(1)*}  \int^\oplus d\xi \int^\oplus  dk \,h(k)\mathbf{1}_U(\xi-k,H(\xi-k)) \mathbf{1}_O(\xi,H(\xi-k)+\omega(k))\Psi_{\xi-k}.
\end{align}
For a point $(\xi-k,\mu)$ to be in $U$ we must have $\mu \geq \Sigma^{(1)}_0(\xi-k)$.
Hence, $\mu + \omega(k) \geq \Sigma^{(1)}_0(\xi-k)+\omega(k)\geq \Sigma^{(2)}(\xi)$.
Conversely, for a point $(\xi,\mu+\omega(k))$ to be in $O$, we must have $\mu + \omega(k) < \Sigma^{(2)}(\xi)$.
Since these two situations cannot occur simultaneously we conclude that 
$\mathbf{1}_U(\xi-k,H(\xi-k)) \mathbf{1}_O(\xi,H(\xi-k)+\omega(k)) = 0$. This concludes the proof of \eqref{AsympOnePartAreBound}. 

Consider a state of the form $\chi(P^{(1)},H^{(1)})(\Psi\otimes a^*(h)\vac)$, $\chi\in C_0^{\infty}(O)_{\real}$,
$\Psi\in\widetilde{\hil}_{\iso,n}$, and $h\in\mfh$ with compact essential support. To conclude the proof it suffices to show that 
such states can  be approximated by elements from $V$. Note that by \eqref{QuasiParticleSpace}, the spectral theorem, 
and the inclusion $E^{(1)}(O)(\hil\otimes\mfh)\subset \hil_\iso\otimes\mfh$ just proved: any state
in $E^{(1)}(O)(\hil\otimes\mfh)$ can be approximated using states of the considered form.

Put $r := d(\supp\,\chi,\real^{\nu+1}\backslash O)>0$. 
Let $\epsilon>0$ be given. We may assume $2\epsilon<r$. Using that $\chi$ is uniformly continuous
we get a $\delta'$, such that $|\chi(\xi',\mu')-\chi(\xi'',\mu'')|\leq \epsilon$, for $(\xi',\mu'),(\xi'',\mu'')$
 with $|\mu'-\mu''| <\delta'$ and $|\xi'-\xi''|<\delta'$. We may take $\delta' < \epsilon$.
Let $R>0$ be so large that $\supp\,h \subset \set{k\in\real^\nu}{|k| \leq R}$. 
Using that $\omega$ is also uniformly continuous on the ball of radius $R$,
we get a $0<\delta \leq \delta'$ such that $|\omega(k')-\omega(k'')|<\delta'$ if $|k'-k''|<\delta$.
 
Cover $B_R(0)$ with finitely many pairwise disjoint Borel sets $B_\ell$ such that $B_\ell\subset B_\delta(k_\ell)$, $\ell=1,\dotsc, L$, for some collection of momenta $k_1,\dotsc,k_L$.
Write
\begin{align}
\non & I_\LLP^{(1)}\chi(P^{(1)},H^{(1)})(\Psi\otimes a^*(h)\vac) = 
 \int^\oplus d\xi \int^\oplus  dk\, h(k) \chi(\xi,S_n(\xi-k)+\omega(k))\Psi_{\xi-k}\\
 &\qquad  = \sum_{\ell=1}^L \int^\oplus d\xi \int^\oplus  dk\, h(k) \mathbf{1}_{B_\ell}(k)\chi(\xi,S_n(\xi-k)+\omega(k))\Psi_{\xi-k}.
\end{align}
For $k\in B_\delta(k_\ell)$ we have
$|\chi(\xi;S_n(\xi-k)+\omega(k))- \chi(\xi-k+k_\ell;S_n(\xi-k)+\omega(k_\ell))|<\epsilon$. Define
\begin{equation}
\psi_\ell := (\chi_\ell(P,H)\Psi)\otimes a^*(\mathbf{1}_{B_\ell} h)\vac,
\end{equation}
with $\chi_\ell(\xi,\lambda) := \chi(\xi+k_\ell,\lambda+\omega(k_\ell))$. Then
$K_\ell:=\supp\,\chi_\ell = \supp\,\chi - (k_\ell,\omega(k_\ell))$. 
Note that $K_\ell\cap \Sigma \subset \Sigma_\iso$, and hence; $\psi_\ell\in \hil_{\iso}\otimes \mfh$.

Estimate
\begin{align}
\non &\Bigl\| I^{(1)}_\LLP\psi_\ell -  \int^\oplus d\xi \int^\oplus  dk\, h(k) \mathbf{1}_{B_\ell}(k)\chi(\xi,S_n(\xi-k)+\omega(k))\Psi_{\xi-k}\Bigr\|^2\\
\non &\quad = \int d\xi \int  dk\, \mathbf{1}_{B_\ell}(k)|h(k)|^2 \bigl|\chi_\ell(\xi-k,S_n(\xi-k))-
\chi(\xi,S_n(\xi-k)+\omega(k)) \bigr|^2 \|\Psi_{\xi-k}\|^2\\
& \quad \leq \epsilon^2 \|\Psi\|^2 \int dk\, \mathbf{1}_{B_\ell}(k)|h(k)|^2.
\end{align}
Due to  the fact that $B_{\ell}\cap B_{\ell'}=\emptyset$, summing up over $\ell$ yields
\begin{equation}
\bigl\|\chi(P^{(1)},H^{(1)})(\Psi\otimes a^*(h)\Om)-\sum_{\ell=1}^L\psi_\ell\bigr\|\leq \epsilon\|h\|\|\Psi\|.
\end{equation}

It remains to verify that $\chi_\ell(P,H)\Psi$ and $\mathbf{1}_{B_\ell} h$ are 
$O$-compatible, such that we in fact have $\psi_\ell\in V$.
Let $k\in B_\ell\subset B_\delta(k_\ell)$ and $(\xi,\mu)\in K_\ell$.
Then 
\begin{align}
\non (\xi + k, \mu + \omega(k)) & = (\xi+k_\ell,\mu+\omega(k_\ell)) +(k-k_\ell,\omega(k)-\omega(k_\ell))\\
& \in \supp\,\chi + (k-k_\ell,\omega(k)-\omega(k_\ell)).
\end{align}
By the choice of $\delta$ we conclude that $|(k-k_\ell,\omega(k)-\omega(k_\ell))| <\epsilon<r$ and hence
we have $(\xi + k, \mu + \omega(k))\in O$. This means that $\chi_\ell(P,H)\Psi$ and $\mathbf{1}_{B_\ell} h$ are 
$O$-compatible, which concludes the proof.
\end{proof}

\bel\label{Extended-spectrum-lemma} Let
\beqa
\Sigma^{(1)}_{\pp}:=\bigset{(\xi,\la)\in\real^{\nu+1} }{\la\in \si_{\pp}(H^{(1)}(\xi))}. 
\eeqa
Then $E^{(1)}(\Sigma^{(1)}_\pp\cap \mcE^{(1)}) =0$, hence the set 
$\set{ \xi\in \real^{\nu}}{ \sigma_{\pp}(H^{(1)}(\xi))\cap \mcE^{(1)}(\xi)\neq \emptyset }$
has zero Lebesgue measure. 
\eel
\begin{proof} Let us consider a  vector $\Psi\in  E^{(1)}(O)(\hil\otimes\Fock^{(1)})$, where $O\subset \mcE^{(1)}$ is some Borel subset.
Let $\mathbf{1}_O$ be the characteristic function of $O$. Then, making use of the expansion~(\ref{Hamiltonian-decomp}), we can write 
 \beqa
\Psi=
I_{\mathrm{LLP}}^{(1)*}\int^{\oplus} d\xi \int^{\oplus}dk\,\mathbf{1}_O(\xi, H^{(1)}(\xi;k)) \Psi_{\xi-k}. \label{vector-decomposition}
\eeqa
 Now suppose that $\Psi\in E^{(1)}(\Sigma^{(1)}_{\pp})(\hil\otimes\Fock^{(1)})$.
We note that  $\Sigma^{(1)}_{\pp}(\xi)\cap \mcE^{(1)}(\xi)$ can be at most countable due to the separability of $\Fock\otimes\Fock^{(1)}$.  Then,
by \cite[Th{\'e}or{\`e}me~21]{Retal}, $\Sigma^{(1)}_{\pp}\cap \mcE^{(1)}$ is a countable union of graphs of Borel functions 
from Borel subsets of $\real^{\nu}$ to $\real$. Thus, without loss of generality, we can assume that
  there exists a  Borel function $p\colon N \to\real$, defined on a Borel set $N$, s.t.
 $\Psi\in  \mathbf{1}_N(P^{(1)})(\hil\otimes\Fock^{(1)})$ and 
\beqa
H^{(1)}\Psi=p(P^{(1)})\Psi. \label{eigenvector-of-ext}
\eeqa
Suppose, by contradiction, that $\Psi\neq 0$ and satisfies  (\ref{eigenvector-of-ext}). Since $\mathbf{1}_O$ 
is supported  below the two-boson threshold, it is easy to see that 
\beqa
\Psi_{\xi}\in E_{\xi}((-\infty,\Sigma_0^{(1)}(\xi)))\Fock,
\eeqa
where $E_{\xi}$ is the spectral measure of $H(\xi)$. Consequently, $\Psi\in\hil_\iso\otimes\mfh$.
Hence,  there exists a shell $(\mcA, S)$ in $\Sigma_{\iso}$ s.t.
\beqa
\Psi':=(\mathbf{1}_{\mcG_S}(P,H)\otimes 1)\Psi\neq 0,
\eeqa
where $\mcG_S$ is the graph of $S$. 
Since $\mathbf{1}_{\mcG_S}(P,H)\otimes 1$ commutes with $H^{(1)}$, $P^{(1)}$, we obtain that $\Psi'$ also satisfies 
(\ref{eigenvector-of-ext}). Thus we obtain
\beqa
\int_N d\xi\, \int dk\, \bigl(S(\xi-k)+\om(k)-p(\xi)\bigr)^2\|\Psi'_{\xi-k}\|^2=0.
\eeqa
Hence the set of $\xi$ for which 
\begin{equation}
 \int dk\, \bigl(S(\xi-k)+\om(k)-p(\xi)\bigr)^2\|\Psi'_{\xi-k}\|^2\neq 0
\end{equation}
has zero Lebesgue measure. Conversely, the set of $\xi$ for which the real analytic function 
$k\to S(\xi-k)+\omega(k)$ is constant
also has zero Lebesgue measure by Corollary~\ref{Non-constant-corollary}.  
Since $k\to \Psi'_{\xi-k}$ has essential support of
positive Lebesgue measure, we conclude that the above integral can only vanish for a set of $\xi$'s having
zero Lebesgue measure.
This is a contradiction, which concludes the proof. \end{proof}



\begin{thebibliography}{99}
\providecommand{\bysame}{\leavevmode\hbox to3em{\hrulefill}\thinspace}

\bibitem{AT}
T.~Adachi and H.~Tamura,
\emph{Asymptotic completeness for long-range many-particle systems with {S}tark effect. {II}},
Commun. Math. Phys., \textbf{174} (1996), 537--559.

\bibitem{Am} Z.~Ammari,
\emph{Asymptotic completeness for a renormalized nonrelativistic {H}amiltonian in quantum field theory: the {N}elson model},
Math. Phys. Anal. Geom., \textbf{3} (2000), 217--285.


\bibitem{AMZ} N.~Angelescu, R.~A.~Minlos and V.~A.~Zagrebnov,
\emph{Lower spectral branches of a particle coupled to a Bose field},
Rev. Math. Phys., \textbf{17} (2005), 1111--1142.








\bibitem{CD82}   M. Combescure and F. Dunlop, \emph{Three-body asymptotic completeness for $P(\phi)_2$ models}, 
Commun. Math. Phys., \bf 85 \rm (1982), 381--418.

\bibitem{CK66}   R.~H. Cox and L.~C. Kurtz,  \emph{Real periodic functions}, 
Amer. Math. Monthly, \textbf{73}  (1966), 761--762.



\bibitem{DK11} W. De Roeck and A. Kupiainen, \emph{Approach to ground state and time-independent photon bound for massless spin-boson models},
To appear in Ann. Henri Poincar{\'e}.


\bibitem{D} J.~Derezi\'nski,
\emph{Asymptotic completeness of long-range $N$-body quantum systems},
Ann. of Math.~(2), \textbf{138} (1993), 427--476.


\bibitem{DGBook}   J.~Derezi\'nski and C.~G\'erard,
  \emph{Scattering theory of classical and quantum $N$-particle systems},
   Texts and Monographs in Physics, Springer-Verlag, Berlin, 1997.


\bibitem{DGe1} \bysame,
 \emph{Asymptotic completeness in quantum
  field theory. Massive {Pauli-Fierz} {Hamiltonians}}, Rev. Math. Phys.,
  \textbf{11} (1999), 383--450.

\bibitem{DGe2} \bysame,
\emph{Spectral and scattering theory of spatially cut-off {$P(\phi)_2$} {H}amiltonians},
Commun. Math. Phys., \textbf{213} (2000), 39--125.




\bibitem{Dy05} W. Dybalski, \emph{Haag-Ruelle scattering theory in presence of
massless particles}, Lett. Math. Phys., \bf 72 \rm  (2005), 27--38.

\bibitem{DT10} W.\! Dybalski and Y.\! Tanimoto, \emph{Asymptotic completeness in a class of
massless relativistic quantum field theories},  Commun. Math. Phys., \bf 305 \rm (2011), 427--440.  

\bibitem{DG12} W. Dybalski and C. G\'erard,\emph{Towards asymptotic completeness of two-particle scattering in local relativistic QFT}, 
ArXiv:1211.3393v3.
	

\bibitem{En78} V. Enss, \emph{Asymptotic completeness for quantum mechanical potential scattering},
Comm. Math. Phys. \textbf{61} (1978), 285--291. 




\bibitem{FMS11}  J.~Faupin, J.~S.~M{\o}ller and E.~Skibsted, \emph{Regularity of bound states}, 
Rev.  Math. Phys., \textbf{23}  (2011),   453--530.


\bibitem{FS12} J. Faupin and I.~M. Sigal, \emph{On quantum Huygens principle and Rayleigh scattering}, ArXiv: 1202.6151v1.  

\bibitem{FS12b} \bysame,
\emph{Comment on the photon number bound and Rayleigh scattering},
ArXiv:1207.4735v2.


\bibitem{FrH} H.~Fr\"ohlich,
\emph{Electrons in lattice fields},
Adv. in Phys., \textbf{3} (1954), 325--362.

\bibitem{FrJ2} J.~Fr\"ohlich,
\emph{Existence of dressed one-electron states in a class of persistent models},
Fortschr. Phys. \textbf{22} (1974), 159--198.



\bibitem{FGSch2} J.~Fr\"ohlich, M.~Griesemer and B.~Schlein,
  \emph{Asymptotic completeness for {R}ayleigh scattering},
  Ann. Henri Poincar{\'e}, \textbf{3} (2002), 107--170.

\bibitem{FGSch3} \bysame,
  \emph{Asymptotic completeness for {C}ompton scattering},
  Commun. Math. Phys., \textbf{252} (2004), 415--476.

\bibitem{FGSch4} \bysame,
\emph{Rayleigh scattering at atoms with dynamical nuclei},
Commun. Math. Phys., \textbf{271} (2007), 387--430.



\bibitem{Ge98} C.~G\'erard, \emph{Mourre estmate for regular dispersive systems},
Ann. Inst. H.~Poincar\'e, Phys. Th{\'e}o., \textbf{54} (1991),  59--88.

\bibitem{Ge1} \bysame,
\emph{On the scattering theory of massless Nelson models},
Rev. Math. Phys., \textbf{14} (2002), 1165--1280.


\bibitem{GeLaBook} \bysame,
\emph{Multiparticle quantum scattering in constant magnetic fields},
    Mathematical Surveys and Monographs \textbf{90},
   American Mathematical Society, Providence, RI, 2002.



\bibitem{GMR} C.~G\'erard, J.~S.~M{\o}ller and M.~G.~Rasmussen,
\emph{Asymptotic completeness in quantum field theory: Translation invariant Nelson type models restricted to the vacuum and one-particle sectors},
Lett. Math. Phys., \textbf{95} (2011), 109--134.

\bibitem{Gr} G.~M.~Graf,
\emph{Asymptotic completeness for {$N$}-body short-range quantum systems: a new proof},
Commun. Math. Phys., \textbf{132} (1990), 73--101.


\bibitem{GS97} G.~M. Graf and D.~Schenker, \emph{$2$-Magnon scattering in the Heisenberg model},
Ann. Inst. H.~Poincar\'e, Phys. Th{\'e}o., \textbf{67} (1997), 91--107.




\bibitem{HMS} I.~Herbst, J.~S.~M{\o}ller and E.~Skibsted,
\emph{Asymptotic completeness for $N$-body {S}tark {H}amiltonians},
Commun. Math. Phys., \textbf{174} (1996), 509--535.

\bibitem{HK68} R. H\o egh-Krohn, 
\emph{Asymptotic fields in some models of quantum field theory. I},
J. Math. Phys., \textbf{9} (1968), 2075-2079.



 \bibitem{HuSp1} M.~H\"ubner and H.~Spohn,
\emph{Spectral properties of the spin-boson Hamiltonian},
Ann. Inst. Henri Poincar\'e, \textbf{62} (1995), 289--323.

\bibitem{HuSp2} \bysame,
\emph{Radiative decay: nonperturbative approaches},
Rev. Math. Phys., \textbf{7} (1995), 363--387.



\bibitem{Ka} T. Kato, \emph{Perturbation theory for linear operators},
(second edition) Grundlehren der Mathematischen Wissenschaften, 132. Springer-Verlag, Berlin, 1976.

\bibitem{Le08} G.\! Lechner, \emph{Construction of quantum field theories with factorizing S-matrices},
Commun. Math. Phys., \bf 277 \rm  (2008), 821--860.


\bibitem{LLP}
T.~D. Lee, F.~E. Low, and D.~Pines, \emph{The motion of slow electrons in a
  polar crystal}, Phys. Rev., \textbf{90} (1953), 297--302.




\bibitem{MAHP} J.~S. M{\o}ller, \emph{The translation invariant massive {Nelson} model: {I}.
  The bottom of the spectrum}, Ann. Henri Poincar\'e \textbf{6} (2005),
  1091--1135.



\bibitem{MRMP} \bysame,
\emph{The {P}olaron revisited}, Rev. Math. Phys.,
  \textbf{18} (2006), 485--517.

\bibitem{MR12} J.~S. M{\o}ller and M~.G. Rasmussen, \emph{The translation invariant massive Nelson model: II. The continuous spectrum below the two-boson threshold}. To appear in Ann. Henri Poincar\'e.

\bibitem{Mo81} E. Mourre, \emph{Absence of singular continuous spectrum for certain self-adjoint operators}, 
Commun. Math. Phys., \textbf{78} (1981), 391--408. 


\bibitem{Ne} E.~Nelson,
\emph{Interaction of nonrelativistic particles with a quantized scalar field},
J.~Math. Phys., \textbf{5} (1964), 1190--1197.

\bibitem{Ra10} M.~G.~Rasmussen, \emph{A Taylor-like expansion of a commutator with a function of self-adjoint, pairwise commuting operators},
Math. Scand., \textbf{111} (2012), 107--117.


\bibitem{Retal}
C.~A.~Rogers et al., \emph{Analytic Sets} Academic Press, 1980.




\bibitem{SiSo}
I.~M.~Sigal, and A.~Soffer,
\emph{The {$N$}-particle scattering problem: asymptotic completeness for short-range systems},
Ann. of Math.~(2),  \textbf{126} (1987), 35--108.


\bibitem{SZ76} T. Spencer and F. Zirilli, \emph{Scattering states and bound states in $\la P(\phi)_2$},  Commun. Math. Phys., \bf 49 \rm (1976), 1--16.



\bibitem{Sp2} H.~Spohn,
\emph{The polaron at large total momentum}, J.~Phys.~A, \textbf{21} (1988), 1199--1211.

\bibitem{Sp97} \bysame,
\emph{Asymptotic completeness for Rayleigh scattering}, J.~Math. Phys., \textbf{38}  (1997), 2281--2296 .


\bibitem{Ya92} D.~R. Yafaev, \emph{Mathematical scattering theory. General theory}, American Mathematical Society, Providence, 1992. 





\end{thebibliography}
\end{document}